\documentclass[11pt]{article}  
\usepackage[a4paper,margin=1in]{geometry}  
\usepackage{setspace}  
\usepackage{rotating}
\usepackage{booktabs}
\usepackage{amsmath, amssymb, amsthm, graphicx, geometry}
\usepackage{tikz}
\usetikzlibrary{positioning}
\usepackage{dsfont}
\usepackage{fix-cm}
\usepackage{enumitem}
\usepackage{bm} 
\usepackage{natbib}
\usepackage[hidelinks]{hyperref} 
\usepackage{multirow}
\usepackage{nicematrix} 
\usepackage{amsmath, amssymb, amsthm}
\usepackage{graphicx}
\usepackage{geometry}
\usepackage{authblk}
\usepackage{xr-hyper}
\usepackage{hyperref}
\usepackage{algorithm}
\usepackage[noend]{algpseudocode}
\usepackage{caption}
\usepackage{subcaption}
\usepackage{fancyhdr}
\usepackage{booktabs}
\usepackage{dsfont}
\usepackage{tikz}
\usepackage{comment}
\usepackage{cleveref}

\definecolor{bestred}{RGB}{212,0,0}
\definecolor{secondorange}{RGB}{230,120,20}
\newcommand{\bx}{\mathbf{x}}
\newcommand{\wh}{\widehat}

\newcommand{\Var}{\operatorname{Var}}
\newcommand{\Cov}{\operatorname{Cov}}

\newcommand{\E}{\mathbb{E}}
\renewcommand{\P}{\mathbb{P}}

\newcommand{\bX}{\mathbf{X}}

\newtheorem{assumption}{Assumption}
\newtheorem{theorem}{Theorem}

\newtheorem{lemma}[theorem]{Lemma}
\newtheorem{definition}{Definition}

\begin{document}

\title{\huge{Central limit theorems for the outputs of fully convolutional neural networks with time series input }}

\author{
  \textbf{Annika Betken}$^{1}$, \textbf{Giorgio Micali}$^{1}$, and \textbf{Johannes Schmidt-Hieber}$^{1}$ \\
  \texttt{a.betken@utwente.nl, g.micali@utwente.nl, a.j.schmidt-hieber@utwente.nl}
}
\date{}  
\maketitle
\footnotetext[1]{University of Twente, Faculty of Electrical Engineering, Mathematics, and Computer Science (EEMCS), Drienerlolaan 5, 7522 NB Enschede, Netherlands}


\begin{abstract}
Deep learning is widely deployed for time series learning tasks such as classification and forecasting. Despite the empirical successes, only little theory has been developed so far in the time series context. In this work, we prove that if the network inputs are generated from short-range dependent linear processes, the outputs of fully convolutional neural networks (FCNs) with global average pooling (GAP) are asymptotically Gaussian and the limit is attained if the length of the observed time series tends to infinity. The proof leverages existing tools from the theoretical time series literature. Based on our theory, we propose a generalization of the GAP layer by considering a global weighted pooling step with slowly varying, learnable coefficients.
\end{abstract}

\section{Introduction}

One way to understand the strengths and weaknesses of complex ML methods is by uncovering their inductive biases — the inherent preferences these methods exhibit when favoring certain functions over others during the learning process. Inductive bias occurs because of the choice of the underlying function class (e.g. a deep neural network with a given architecture), the (iterative) training method, and the initialization. Although these aspects are intertwined, it is meaningful to examine them individually. 

Regarding the random initialization of the network weights in deep learning, there is an extensive literature on the induced prior information, establishing Gaussian process limits of wide  neural network architectures if the input is fixed, see e.g.\ \cite{zbMATH00927558, Williams96, lee2018deep, g.2018gaussian, garriga-alonso2018deep, NEURIPS2019_5e69fda3,BasteriTrevisan2024GaussianApproxDNN}. 

A dual point of view is to keep instead the network parameters fixed and to examine the distribution of the outputs for random inputs. This distribution describes the network outputs during test time. Indeed, under the common assumption of independently generated test samples and conditionally on the trained network, the network output will follow this distribution. It is particularly insightful to understand how distributional characteristics of the inputs such as standard deviation, skewness, and kurtosis are mapped to their counterparts of the output distribution. 

Compared to the vast amount of articles on randomly initialized networks, the fixed parameter, random input setting is much less developed. A common problem considered in this area is to investigate the distribution of the network outputs for (mixtures of) Gaussians as input,  see \cite{abdelaziz:hal-01162550, 8579046, ZHANG2021170}. More general input distributions were only recently discussed in 
\cite{Krapf_Hagn_Miethaner_Schiller_Luttner_Heinrich_2024, kofnov2025exact}. All this work focuses on an explicit characterization of the output distribution. A drawback are the involved expressions that arise for more complex architectures.

Both the distribution of stationary time series and  
convolutional layers exhibit translation invariance. For this reason, network architectures based on 
convolutional layers have become a standard tool in time series classification
and related tasks; see, for instance, \cite{2016arXiv160306995C, TANG2021113544}. Particularly successful are fully convolutional networks (FCNs) with global average pooling layer (GAP), \cite{7966039, 10.1145/3649448, middlehurst2024}.

Working in the fixed parameter, random input setting, and considering short-range dependent linear processes as inputs, we establish in this paper the Gaussian limit distribution of FCNs with GAP layer. The limit is with respect to the length of the input time series and does not require infinite network width. To prove the result, we leverage techniques from the time series literature. Let $\mathbf{X}_1,\mathbf{X}_2,\ldots \in \mathbb{R}^k$ be the vector-valued input time series. Rather than working directly with a specific CNN architecture, we more generally establish asymptotic normality for the weighted sum process
\begin{equation}
\label{eq:objective}
S_n
:=
\sum_{t=1}^n \sum_{m=1}^M
a_{t,m}^{(n)}
\Big(
f_m(\mathbf X_t) - \mathbb E[f_m(\mathbf X_t)]
\Big),
\end{equation}
with deterministic weights $(a_{t,m}^{(n)})$ and deterministic Lipschitz continuous functions
$f_m:\mathbb R^k\to\mathbb R.$ Our main theoretical result establishes a central limit theorem for $S_n$ for Lipschitz continuous functions $f_m$ and under suitable
conditions on the dependence structure of $\mathbf X_1, \mathbf X_2, \ldots$

More generally, we show that the outputs of FCNs with GAP layers are still multivariate Gaussian if we do allow for slowly varying weights. We therefore propose learnable and slowly varying weights in the GAP layer as a new method, where the amount of variation of the weights is controlled via a first-order difference penalty. Studying the performance of this method across a variety of benchmark datasets shows that the new method leads to improved classifiers in about 30\% of the considered datasets.

\subsection*{Notation }
Vectors are denoted by bold letters. For a vector,  $\|\cdot\|$ denotes the Euclidean norm and for a matrix $A$, 
\(
\|\cdot \|
\) denotes the spectral norm.  Weak convergence is denoted by $\xrightarrow{\mathcal{D}}$. For two sequences $(a_n)_{n\geq 1}$ and $(b_n)_{n\geq 1}$, we use the standard notation
$a_n = o(b_n) $
to express that 
$
\frac{a_n}{b_n} \longrightarrow 0 $
as $n\to\infty$. A function $f : \mathbb{R}^k \to \mathbb{R}$ is \emph{Lipschitz} with Lipschitz constant $\text{Lip}(f)$ if
\[
\mathrm{Lip}(f) := \sup_{\mathbf{x} \neq \mathbf{y}} 
\frac{|f(\mathbf{x}) - f(\mathbf{y})|}{\|\mathbf{x} - \mathbf{y}\|} < \infty.
\]

\section{Asymptotic normality for weighted sums of linear processes}

To model the network input, we consider a
class of stationary processes that is sufficiently rich to model temporal
dependence, yet tractable from an asymptotic point of view. In particular, the processes can be non-Gaussian.

\begin{assumption}
\label{assum1:M1}
Assume that $(\mathbf{X}_t)_{t\geq 1}$ is an $\mathbb{R}^k$-valued linear process of the form $$\mathbf{X}_t = \sum_{j=0}^\infty A_j\mathbf{Z}_{t-j}$$  
where $(\mathbf Z_t)_{t\in\mathbb Z}$ is an i.i.d.\ sequence of centered random variables such that $\mathbb{E}[\mathbf{Z}_0] = 0,$ $\mathbb{E}[\|\mathbf{Z}_0\|^2] < \infty$, and $k\times k$ matrices $A_j$ satisfying \(
\sum_{j=0}^\infty \|A_j\|< \infty.
\)
\end{assumption}

Under these conditions, $(\mathbf{X}_t)_{t\geq 1}$ is well defined, short range dependent, and
strictly stationary, see for instance Chapter 1 in \cite{BrockwellDavis1991}.

\begin{theorem}\label{thm:main}
Let $(\mathbf X_t)_{t\ge 1}$ be a linear process satisfying
Assumption~\ref{assum1:M1}, and let
$f_1,\ldots,f_M:\mathbb{R}^k\to\mathbb{R}$ be Lipschitz continuous functions.
Let $\big(a_{t,m}^{(n)}\big)_{t=1,\dots,n;\,m=1,\dots,M}$ be deterministic weights such that
$a_{t,\cdot}^{(n)}=0$ whenever $t\notin \{1,\ldots,n\}.$ Assume that for any $m,\ell\in\{1,\ldots,M\}$ and any $h\in\mathbb{Z}$,
\begin{equation}
\label{eq:W1-multi}
G_{n,m\ell}(h)
:=
\frac{1}{n}\sum_{t=1}^n a_{t,m}^{(n)} a_{t+h,\ell}^{(n)}
\;\xrightarrow[n\to\infty]{}\;
G_{m\ell}(h)\in\mathbb{R}
\tag{W1}
\end{equation}
and
\begin{equation}
\label{eq:W2-multi}
\max_{1\le t\le n}\max_{1\le m\le M} \big|a_{t,m}^{(n)}\big|
=o(\sqrt{n})
\qquad\text{as }n\to\infty.
\tag{W2}
\end{equation}
If\[
S_{m,n}
:=
\sum_{t=1}^n 
a^{(n)}_{t,m}\Big(f_m(\mathbf X_t)-\mathbb{E}[f_m(\mathbf X_t)]\Big),
\]
then,
\[
\frac{1}{\sqrt{n}}\,\begin{pmatrix}
S_{1,n}\\
\vdots\\
S_{M,n}
\end{pmatrix}
\;\xrightarrow{\mathcal D}\;
\mathcal N(0,\Sigma),
\]
where $\Sigma=(\Sigma_{m\ell})_{m,\ell=1}^M$ is the covariance matrix with entries
\[
\Sigma_{m\ell}
=
G_{m\ell}(0)\,\Cov\!\big(f_m(\mathbf X_1),f_\ell(\mathbf X_1)\big)
+
2\sum_{h=1}^\infty
G_{m\ell}(h)\,\Cov\!\big(f_m(\mathbf X_1),f_\ell(\mathbf X_{1+h})\big)<\infty\;.
\]
\end{theorem}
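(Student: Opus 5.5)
We reduce the multivariate statement to a univariate one via the Cramér--Wold device. Fix $\boldsymbol\lambda\in\mathbb R^M$ and write
\[
\sum_m \lambda_m S_{m,n}=\sum_{t=1}^n \sum_{m}\lambda_m a^{(n)}_{t,m}\big(f_m(\mathbf X_t)-\mathbb E f_m(\mathbf X_t)\big).
\]
It then suffices to show that this sum, divided by $\sqrt n$, converges to $\mathcal N(0,\boldsymbol\lambda^\top\Sigma\boldsymbol\lambda)$.

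The main tool will be an $m$-dependent approximation obtained by truncating the linear process. For $L\ge 1$ set
\[
\mathbf X_t^{(L)}=\sum_{j=0}^{L}A_j\mathbf Z_{t-j},
\]
and let $S_{m,n}^{(L)}$ be defined as $S_{m,n}$ with $\mathbf X_t$ replaced by $\mathbf X_t^{(L)}$. The proof then consists of three steps, after which a standard approximation theorem (Billingsley, Thm.~3.2) finishes the argument.

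\textbf{Step 1: uniform approximation.} We show that
\[
\lim_{L\to\infty}\limsup_n \tfrac1n\operatorname{Var}\big(S_{m,n}-S_{m,n}^{(L)}\big)=0.
\]
Setting $D_t=f_m(\mathbf X_t)-f_m(\mathbf X_t^{(L)})$, the naive bound $\mathbb E|D_t|^2\le \mathrm{Lip}^2\,\mathbb E\|\sum_{j>L}A_j\mathbf Z_{t-j}\|^2$ controls only the diagonal terms. The cross-covariances would require summability over lags, which is unavailable for general Lipschitz $f$.

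To obtain it, we use a projection (martingale-difference) decomposition with $\mathcal F_s=\sigma(\mathbf Z_r,r\le s)$ and $P_s=\mathbb E[\cdot|\mathcal F_s]-\mathbb E[\cdot|\mathcal F_{s-1}]$. We write $D_t-\mathbb E D_t=\sum_{s\le t}P_sD_t$. Replacing $\mathbf Z_s$ by an independent copy, which is a coupling argument, together with the Lipschitz property gives
\[
\|P_sD_t\|_2\le \mathrm{Lip}(f_m)\,\|A_{t-s}\|\,\|\mathbf Z_0\|_2\cdot 2 .
\]
Because we work with $D_t$ rather than $f(\mathbf X_t)$, the truncation also gives
\[
\|P_sD_t\|_2\le C\sum_{j>L}\|A_j\|
\]
uniformly in $s$. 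Combining the two bounds yields
\[
\|P_sD_t\|_2\le C\min\Big(\|A_{t-s}\|,\ \sum_{j>L}\|A_j\|\Big)=:\delta_L(t-s).
\]
Orthogonality of the $P_s$ then gives
\[
\Big\|\sum_t a_t (D_t-\mathbb E D_t)\Big\|_2\le \Big(\sum_s\big(\sum_t |a_t|\delta_L(t-s)\big)^2\Big)^{1/2}.
\]
By Young's inequality this is at most $\big(\sum_t a_t^2\big)^{1/2}\sum_{j\ge0}\delta_L(j)$. Here $\frac1n\sum_t a_t^2\to G_{mm}(0)$ by (W1), and $\sum_j\delta_L(j)\to0$ by dominated convergence because $\sum\|A_j\|<\infty$.

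\textbf{Step 2: CLT for the truncated sum.} For fixed $L$, the summands $Y_t=\sum_m\lambda_m a_{t,m}(f_m(\mathbf X^{(L)}_t)-\mathbb E f_m(\mathbf X^{(L)}_t))$ form a triangular array of $L$-dependent, non-identically weighted variables. We apply a Lindeberg CLT for $m$-dependent arrays, either via Bernstein big-block/small-block or via the known result of Utev/Peligrad. The Lindeberg condition follows from (W2): $\max_t|a_t|/\sqrt n\to0$ together with the uniform integrability of $f(\mathbf X_0^{(L)})^2$, which holds by stationarity since $|f(x)|\le C(1+\|x\|)$.

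The variance of $n^{-1/2}\sum_t Y_t$ equals
\[
\sum_{|h|\le L}\sum_{m,\ell}\lambda_m\lambda_\ell G_{n,m\ell}(h)\,\Cov\big(f_m(\mathbf X_1^{(L)}),f_\ell(\mathbf X^{(L)}_{1+h})\big),
\]
which converges by (W1) to $\boldsymbol\lambda^\top\Sigma^{(L)}\boldsymbol\lambda$. The symmetry between $h$ and $-h$ uses $G_{m\ell}(-h)=G_{\ell m}(h)$.

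\textbf{Step 3: convergence of the limiting variance.} We need $\Sigma^{(L)}\to\Sigma$ as $L\to\infty$, including absolute convergence of the series defining $\Sigma$. The projection bound gives
\[
|\Cov(f_m(\mathbf X_1),f_\ell(\mathbf X_{1+h}))|\le C\sum_{j\ge0}\|A_j\|\|A_{j+h}\|,
\]
which is summable in $h$. The sequence $G_{m\ell}(h)$ is bounded by Cauchy--Schwarz via $G(0)$. Dominated convergence then gives both the finiteness of $\Sigma$ and $\Sigma^{(L)}\to\Sigma$.

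With Steps 1--3 in hand, Billingsley's Theorem~3.2 completes the proof. I expect the main obstacle to be Step 1: establishing the uniform-in-$n$ $L^2$ control of the weighted truncation error without any mixing assumptions. The key is the combined bound $\delta_L$ on the projections together with the Young-inequality argument over the weighted sum.
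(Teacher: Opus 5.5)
Your argument is correct, and it takes a genuinely different route from the paper. The paper gets normality from a martingale approximation rather than an $m$-dependent one. It expands $f_m(\mathbf X_t)-\E f_m(\mathbf X_t)=\sum_{j\ge0}U_{t,j,m}$ into projections onto the innovations and keeps the lags $j<u$. It regroups these into a martingale difference array $M_{t,n}(u)=n^{-1/2}\sum_{j<u}W^{(n)}_{t+j,j}$ plus a boundary term, and applies Pollard's martingale CLT. The discarded lags are controlled by orthogonality and $\|U_{t,j,m}\|_2\le C\|A_j\|$, which is the same projection bound you use.

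The price of the paper's route is the conditional-variance condition (A2), the longest part of its proof. It needs a $K_n$-splitting driven by (W2) and a second truncation of the linear process to decorrelate the $R_t$'s. It also needs a re-summation of martingale pieces to identify $\lim_u\sigma_u^2$ with the autocovariance series.

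Your route puts all the dependence-specific work into the single bound $\|\sum_t a_t(D_t-\E D_t)\|_2\le\|a\|_{\ell^2}\sum_j\delta_L(j)$. The $\min$ in $\delta_L$ combined with dominated convergence is exactly what is needed there, since $L\sum_{j>L}\|A_j\|$ need not vanish. The variance of the truncated sum then appears directly as weighted autocovariances, so only your dominated-convergence Step 3 is needed to reach $\Sigma$.

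What you pay is reliance on an external CLT for weighted $L$-dependent arrays. Cite a version that needs only Lindeberg and $\sum_t\E Y_t^2=O(n)$; the latter is (W1) at $h=0$. This matters because, with non-uniform weights, Bernstein blocks at fixed positions may put all the weight in the small blocks. A hand-made blocking argument must therefore choose the block offset so that the small blocks carry at most a fraction $L/p$ of $\sum_t a_t^2$. Also treat the degenerate case $\boldsymbol\lambda^\top\Sigma^{(L)}\boldsymbol\lambda=0$ separately by Chebyshev; the paper simply asserts $\sigma_u^2>0$ there.

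Finally, your $h\leftrightarrow-h$ bookkeeping via $G_{m\ell}(-h)=G_{\ell m}(h)$ shows the limit covariance is $\sum_{h\in\mathbb Z}G_{m\ell}(h)\Cov(f_m(\mathbf X_1),f_\ell(\mathbf X_{1+h}))$. The paper's $\Sigma$ is in general not symmetric and agrees with this only as a quadratic form, which is all Cram\'er--Wold needs.
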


Without non-linearity $f$, weighted sums of linear processes have been studied in \cite{abadir2014asymptotic}. Using that $\mathbf X_t=\sum_{j=0}^ \infty A_j \mathbf Z_{t-j}=\sum_{q=-\infty}^t A_{t-q}\mathbf Z_q,$ one can then rewrite 
\begin{align*}
    \sum_{t=1}^n a_{t,m}^{(n)} \mathbf X_t
    = \sum_{q=-\infty}^n\Big(\sum_{t=1}^n a_{t,m}^{(n)}A_{t-q}\mathbf{1}(t\geq q)\Big)\mathbf{Z}_q,
\end{align*}
simplifying the analysis considerably.

Theorem 2.1 of \cite{Furmanczyk} corresponds to Theorem \ref{thm:main} with $M=1$ and weights $a_t^{(n)}= b_n \mathbf{1}(t\leq n)$ for $(b_n)_{n\in \mathbb{N}}$ a real sequence satisfying $b_n \to 1$. To see this, observe that in this scenario the conditions of Theorem \ref{thm:main} are satisfied since for any $h=0,1,2, \ldots $, we have   
    $\tfrac{1}{n}\sum_{t=1}^{n-h}b_n\,b_n=  b_n^2 \tfrac{n-h}{n}\to 1\;,$
    and $\max\limits_{1\le t\le n} |a_t^{(n)}| = o(\sqrt{n})$. Thus, as $n\to \infty$, 
    \[
\frac{b_n}{\sqrt{n}}\sum_{t=1}^n \big(f(\mathbf{X}_t)-\E[f(\mathbf{X}_t)]\big)\xrightarrow{\mathcal{D}}\mathcal{N}\Big(0, \Var\big(f(\mathbf{X}_1)\big)+2 \sum_{w=1}^\infty \,\Cov\big(f(\mathbf{X}_1),f(\mathbf{X}_{1+w})\big)\Big).
\]

\section{Fully convolutional networks (FCNs)}
\label{sec:fcns}

Convolutional layers underlie various state-of-the-art network architectures. As pointed out in Section 11.2 of \cite{Bengio2013}, they are particularly well suited for time series data, image data, audio spectrograms, and video data. Indeed, stationarity implies translation invariance of the finite-dimensional
distributions, which makes convolutional architectures a natural choice when the
goal is to learn characteristics of the distribution.

Here we consider a $\mathbb{R}^d$-valued sequence $\mathbf x=(\mathbf x_1,\ldots,\mathbf x_n)$ as input, that is, $\mathbf x_j\in\mathbb R^d$ for all $j=1,\ldots,n$. As is common in the deep learning literature, we apply zero-padding, that is, we regard $\mathbf x$ as an infinite sequence by defining $\mathbf x_j:=0$ for $j>n$. Let $W_0,\ldots,W_{k-1}$ be $k$ learnable convolutional filters of size $m\times d$, let $\mathbf b\in\mathbb R^m$ be a learnable bias vector, and let $\sigma:\mathbb R\to\mathbb R$ be a given activation function such as the ReLU activation function $\sigma(x)=\max\{x,0\}$. The corresponding convolutional layer transforms the input sequence $\mathbf x=(\mathbf x_1,\ldots,\mathbf x_n)$ into the $\mathbb R^m$-valued time series
\begin{align}
    \mathbf Z_t:=\sigma\!\Big(\mathbf b+\sum_{j=0}^{k-1}W_j\mathbf x_{t+j}\Big),\qquad t=1,\ldots,n,
\end{align}
where the activation function is applied component-wise. The convolution smooths the time series: Neighboring output values tend to be more similar than the input values.
A fully convolutional neural network (FCN) stacks convolutional layers with potentially different width on top of each other.

Residual connections were introduced to address fundamental optimization
difficulties arising in very deep architectures, \cite{He_2016_CVPR}. Residual connections allow
layers to learn perturbations of the identity map enabling stable training of substantially deeper
networks. Residual connections have also been successfully adapted for time - series
analysis, \cite{7966039}. In this context, residual connections can be interpreted as allowing the network
to preserve low-level temporal information while progressively refining it through
nonlinear transformations.

\begin{figure}[h] \centering \includegraphics[width=1\linewidth]{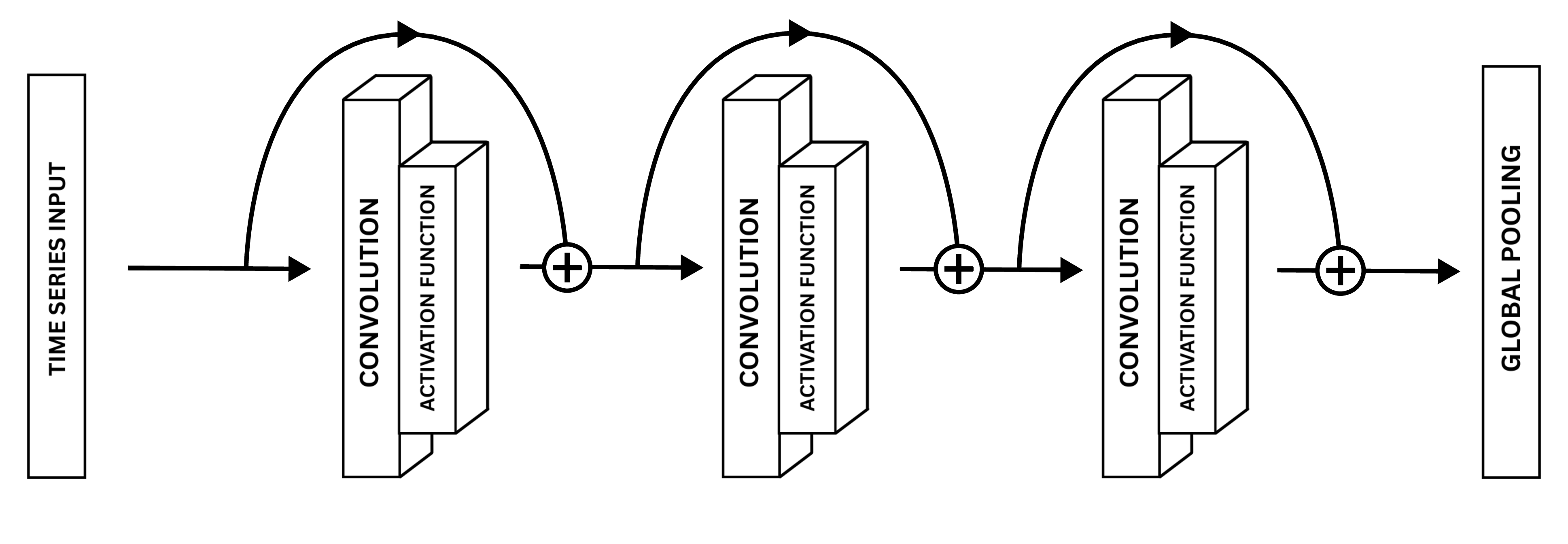} \caption{FCN architecture with Residual connection.} \label{fig:ResNet} \end{figure}

A representation of such architecture is displayed in Figure \ref{fig:ResNet}, cf. also Figure 1 in \cite{7966039}. 

\begin{definition}[FCNs with residual connections, $L$ layers, filter widths $k_1,\ldots, k_L,$ and layer widths $m_1,\ldots, m_L$] 
  Let $\mathbf{x}=(\mathbf{x}_1,\ldots,\mathbf{x}_n)\in \mathbb{R}^{d\times n}$ be the input and let $(\mathbf H^{(0)}_1, \mathbf H^{(0)}_2, \ldots):= \newline (\mathbf{x}_1,\ldots,\mathbf{x}_n, 0, 0, \ldots)$ be the input extended by zero padding. For any layer $\ell=1,\ldots,L$, and any $j=0,\ldots,k_\ell-1$, let $\mathbf b^{(\ell)}\in\mathbb{R}^{m_\ell}$ be a bias vector
and let $I^{(\ell)}, W^{(\ell)}_j\in\mathbb{R}^{m_\ell\times m_{\ell-1}}$ be matrices ($m_0:=d$). Define recursively the convolutional map
\begin{equation}\label{eq:def_conv_block}
\mathbf H^{(\ell)}_t
:= I^{(\ell)}\mathbf{H}^{(\ell-1)}_t +
\sigma\!\Big(
\mathbf b^{(\ell)} + \sum_{j=0}^{k_\ell-1} W^{(\ell)}_j\,\mathbf H^{(\ell-1)}_{t+j}
\Big)
\in\mathbb{R}^{m_\ell},
\qquad t=1,2,\ldots, \ \  \ell=1,\ldots,L,
\end{equation}
where $\sigma:\mathbb{R}\to\mathbb{R}$ is applied to vectors component-wise. The output of the FCN is the multivariate time series $\mathbf H^{(L)}_1,\ldots,\mathbf H^{(L)}_n\in \mathbb{R}^{m_L}.$
\end{definition}
 To reduce the final network output to one $m_L$-dimensional vector, FCNs commonly deploy global average pooling (GAP) as output layer. In a FCN with GAP layer, the network output is
\begin{equation}
\mathrm{GAP}(\mathbf{x})
:=
\frac{1}{n}\sum_{t=1}^{n}\mathbf H^{(L)}_t
\in \mathbb{R}^{m_L}.
 \label{eq.GAP}
\end{equation}

The matrices $I^{(\ell)}$ model the residual connections and are not learned. Setting them to zero corresponds to the case that no residual connections are included. In case that $m_\ell=m_{\ell-1},$ a residual connection corresponds to taking $I^{(\ell)}$ as the identity matrix. Including residual connections in the case $m_\ell\neq m_{\ell-1}$ requires to choose the residual connection matrix as a generalization of the identity matrix to non-square matrices.

The entries of the biases $\mathbf{b}^{(1)},\ldots, \mathbf{b}^{(L)}$ and the weight matrices $W_j^{(\ell)}$ are the learnable parameters and are jointly trained
via gradient descent. For time-series classification, the pooled feature vector produced by the GAP layer
\eqref{eq.GAP} has dimension $m_L$ and is fed into a final affine 
layer with $K$ outputs (matching the number of classes) via
\begin{equation}
\label{eq.dense_layer}
    \mathbf y
=
A\,\mathrm{GAP}(\mathbf{x})+\mathbf b\in\mathbb{R}^K,
\end{equation}
where $A\in\mathbb{R}^{K\times m_L}$ and $\mathbf b\in\mathbb{R}^K$ are learnable
parameters. The class probabilities are then obtained by applying the softmax 
\begin{align}
\label{eq.soft_max}
    (y_1,\ldots,y_{K})
    \mapsto \Big(\frac{e^{y_1}}{e^{y_1}+\ldots+e^{y_{K}}},
    \ldots, \frac{e^{y_{K}}}{e^{y_1}+\ldots+e^{y_{K}}}\Big)
\end{align}
transforming the output into a $K$-dimensional probability vector.

It is also common to include batch normalization in every convolutional layer. For batch normalization, the means and standard deviations of the activations in a layer are estimated based on a batch of the training sample. If $\wh \mu$ and $\wh \sigma$ are the respective estimates for one neuron, batch normalization transforms the activation using the map $z\mapsto (z-\wh \mu)/\sqrt{\wh \sigma^2+\epsilon}$ for a small fixed $\epsilon$ that guarantees stability for small standard deviations.

A systematic review of deep learning approaches to time series classification can be found in \cite{Ismail2019}. GAP has been introduced in Section 3.2 of \cite{2013arXiv1312.4400L} and used for time series classification in \cite{7966039}. GAP received acclaim in the highly cited paper \cite{Zhou_2016_CVPR} and the recent comparison in \cite{meyer2025deep} shows that GAP remains competitive on benchmark datasets. 

Common architectures consist of few FCN layers with kernel lengths below $10$ and filter counts in the hundreds, see e.g.\ Section 3.2.2 of \citet{Ismail2019}.

\begin{lemma}\label{lem:fcn_res_representation}
Consider an FCN defined in \eqref{eq:def_conv_block}, with Lipschitz continuous activation function \(\sigma\), \(L\) layers, filter widths \(k_1,\ldots,k_L\), and layer widths \(m_1,\ldots,m_L\).
Set
\begin{equation}
\label{eq:k_ell}
K_\ell:=\sum_{q=1}^{\ell} k_q-(\ell-1),\qquad \ell=1,\ldots,L.
\end{equation}
Then, for every \(\ell=1,\ldots,L\), there exists a deterministic Lipschitz continuous function
\[
\mathbf g_\ell:\mathbb{R}^{K_\ell}\to\mathbb{R}^{m_\ell},
\]
that is independent of \(t\), such that for all \(t\ge 1\),
\begin{equation}\label{eq:induction_claim_fcn}
\mathbf H_t^{(\ell)}
=
\mathbf g_\ell\bigl(x_t,\ldots,x_{t+K_\ell-1}\bigr).
\end{equation}
\end{lemma}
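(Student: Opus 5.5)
We argue by induction on \(\ell\), treating the input entries \(\mathbf x_t\in\mathbb R^d\) as the arguments, so that \(\mathbf g_\ell\) is really defined on \((\mathbb R^d)^{K_\ell}\). The statement's \(\mathbb R^{K_\ell}\) is then read as \(\mathbb R^{dK_\ell}\), which agrees with it when \(d=1\).

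\textbf{Base case \(\ell=1\).} Here \(K_1=k_1\). By \eqref{eq:def_conv_block},
\[
\mathbf H^{(1)}_t=I^{(1)}\mathbf x_t+\sigma\Big(\mathbf b^{(1)}+\sum_{j=0}^{k_1-1}W^{(1)}_j\mathbf x_{t+j}\Big).
\]
We therefore set \(\mathbf g_1(\mathbf y_0,\ldots,\mathbf y_{k_1-1}):=I^{(1)}\mathbf y_0+\sigma(\mathbf b^{(1)}+\sum_j W^{(1)}_j\mathbf y_j)\). This map does not depend on \(t\). It is Lipschitz because a linear map is Lipschitz with constant equal to its operator norm, a translation preserves Lipschitz constants, \(\sigma\) applied componentwise to a vector is Lipschitz with constant \(\mathrm{Lip}(\sigma)\), and sums and compositions of Lipschitz maps are Lipschitz.

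\textbf{Inductive step.} Assume \(\mathbf H^{(\ell-1)}_s=\mathbf g_{\ell-1}(\mathbf x_s,\ldots,\mathbf x_{s+K_{\ell-1}-1})\) for all \(s\ge1\). For \(j=0,\ldots,k_\ell-1\), the term \(\mathbf H^{(\ell-1)}_{t+j}\) depends on \(\mathbf x_{t+j},\ldots,\mathbf x_{t+j+K_{\ell-1}-1}\). These windows all lie inside \(\mathbf x_t,\ldots,\mathbf x_{t+K_{\ell-1}+k_\ell-2}\), a window of length
\[
K_{\ell-1}+k_\ell-1=K_\ell,
\]
which matches \eqref{eq:k_ell}. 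Write \(\pi_j\) for the coordinate projection that picks the sub-window starting at offset \(j\). We define
\[
\mathbf g_\ell(\mathbf y)=I^{(\ell)}\mathbf g_{\ell-1}(\pi_0\mathbf y)+\sigma\Big(\mathbf b^{(\ell)}+\sum_{j=0}^{k_\ell-1}W^{(\ell)}_j\,\mathbf g_{\ell-1}(\pi_j\mathbf y)\Big).
\]
Projections are 1-Lipschitz, so the same closure properties as in the base case show that \(\mathbf g_\ell\) is Lipschitz and independent of \(t\). Substituting the induction hypothesis into \eqref{eq:def_conv_block} gives \eqref{eq:induction_claim_fcn}.

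\textbf{Main obstacle: zero padding.} The identity must hold for all \(t\ge1\), including indices \(t+j>n\) where \(\mathbf H^{(0)}_{t+j}=0\). In the inductive step we used \(\mathbf H^{(\ell-1)}_s=\mathbf g_{\ell-1}(\ldots)\) at \(s=t+j\) beyond \(n\). This is valid because the induction hypothesis is stated for all \(s\ge1\), with \(\mathbf x_s:=0\) for \(s>n\) fed into \(\mathbf g_{\ell-1}\).

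The point to check is that \(\mathbf H^{(\ell)}_s\) for \(s>n\) is defined by the same recursion \eqref{eq:def_conv_block}, and is not reset to zero. The definition runs the recursion for \(t=1,2,\ldots\) and pads only the input layer, so this holds. The recursion therefore applies uniformly in \(t\), the same \(\mathbf g_\ell\) serves every \(t\), and the induction closes.
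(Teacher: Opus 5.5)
Your proof is correct and follows the paper's argument essentially verbatim: induction on $\ell$, the same base-case map $\mathbf g_1$, the same recursive definition of $\mathbf g_\ell$ through the shifted sub-windows of length $K_{\ell-1}$, and Lipschitz continuity from closure under sums, compositions, and componentwise application of $\sigma$. Your remarks on general input dimension $d$ (the paper simply takes $m_0=1$) and on the recursion running for all $t\ge 1$ with zero padding applied only at the input layer are accurate refinements, not a different route.
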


The proof can be found in Appendix \ref{appendix:missing_proofs}.
As a result of Lemma \ref{lem:fcn_res_representation}, the activations in the last layer $\mathbf H^{(L)}_t$ only depend on the finite input window
$\mathbf x_{t:K_L}:=(x_t,\ldots,x_{t+K_L-1})$, and
\[
\mathrm{GAP}(\bx)= \frac{1}{n}\sum_{t=1}^{n} \mathbf{H}_t^{(L)}
=
\frac{1}{n}\sum_{t=1}^{n}\mathbf g_L\big(\mathbf x_{t:K_L}\big).
\]
 Based on this representation, one can now apply Theorem  \ref{thm:main}.
 
\begin{theorem}\label{thm:fcn_gap}
Consider an FCN architecture as in the previous lemma and write $\mathbf g_L=(g_1,\ldots,g_{m_L})^\top:\mathbb{R}^{K_L}\to\mathbb{R}^{m_L}.$ Assume that the input is $\bX:=(\bX_1,\ldots,\bX_n)$ with $(\mathbf X_t)_{t\ge 1}$ a linear process satisfying
Assumption~\ref{assum1:M1}. Then, as $n \to \infty,$
\[
\sqrt n\Big(\mathrm{GAP}(\bX)-\E\big[\mathrm{GAP}(\bX)\big]\Big)
\ \xrightarrow{\ \mathcal D\ }\ 
\mathcal N \!\big(\mathbf 0,\Sigma^{\mathrm{GAP}}\big),
\]
where the $(j,\ell)$-th entry of the $m_L \times m_L$ covariance matrix $\Sigma^{\mathrm{GAP}}$ is given by
\begin{equation}
\label{eq:long_run_cov}
    \Sigma^{\mathrm{GAP}}_{j,\ell}
=
\Cov\!\big(g_j(\mathbf X_1),g_\ell(\mathbf X_1)\big)
+
2\sum_{h=1}^\infty
\Cov\!\big(g_j(\mathbf X_1),g_\ell(\mathbf X_{1+h})\big).
\end{equation}

\end{theorem}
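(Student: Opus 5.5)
The plan is to reduce Theorem~\ref{thm:fcn_gap} to Theorem~\ref{thm:main}. The reduction has three ingredients: lift the input to a windowed process, show that this lifted process is again a linear process satisfying Assumption~\ref{assum1:M1}, and control the boundary effect of zero padding.

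\emph{Lifted process.} For $K:=K_L$, define the $\mathbb{R}^{dK}$-valued process $\mathbf Y_t:=(\mathbf X_t^\top,\ldots,\mathbf X_{t+K-1}^\top)^\top$, $t\ge 1$. It uses the genuine process values for all indices, with no padding. Write $\tilde{\mathbf Z}_s:=\mathbf Z_{s+K-1}$, which is again i.i.d., centered and square integrable. Then
\[
\mathbf Y_t=\sum_{j=0}^\infty B_j\tilde{\mathbf Z}_{t-j}.
\]
Here $B_j$ is the $dK\times d$ block matrix whose $i$-th block ($i=0,\ldots,K-1$) equals $A_{j-(K-1-i)}$, with the convention $A_r:=0$ for $r<0$. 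I then bound
\[
\sum_j\|B_j\|\le\sum_j\sum_{i}\|A_{j-(K-1-i)}\|\le K\sum_r\|A_r\|<\infty .
\]
Padding $B_j$ with zero columns (or padding $\tilde{\mathbf Z}$ with zero coordinates) makes the matrices square, as required by Assumption~\ref{assum1:M1}. This step is bookkeeping only. So $(\mathbf Y_t)$ satisfies Assumption~\ref{assum1:M1}. By Lemma~\ref{lem:fcn_res_representation}, each $g_m$ is Lipschitz on $\mathbb{R}^{dK}$. The expressions $g_j(\mathbf X_1)$ in \eqref{eq:long_run_cov} are to be read as $g_j(\mathbf Y_1)$.

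\emph{Application of Theorem~\ref{thm:main}.} Take $M=m_L$, $f_m=g_m$ and $a_{t,m}^{(n)}=\mathbf 1(1\le t\le n)$. For $h\ge0$ we get $G_{n,m\ell}(h)=(n-h)/n\to1$, and $G_{n,m\ell}(-h)$ behaves the same way, so (W1) holds with $G\equiv1$. Condition (W2) is trivial because the weights are bounded by $1$. Theorem~\ref{thm:main} then gives
\[
\tfrac{1}{\sqrt n}\sum_{t=1}^n\big(\mathbf g_L(\mathbf Y_t)-\E\,\mathbf g_L(\mathbf Y_t)\big)\xrightarrow{\mathcal D}\mathcal N(0,\Sigma^{\mathrm{GAP}}),
\]
and the limiting covariance is exactly \eqref{eq:long_run_cov}.

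\emph{Boundary terms.} This is the main, though mild, obstacle. Because of zero padding, $\mathbf H^{(L)}_t=\mathbf g_L(\mathbf Y_t)$ holds only for $t\le n-K+1$. For the remaining $K-1$ indices, the window contains padded zeros, so $\mathbf H^{(L)}_t=\mathbf g_L(\mathbf Y_t^{\mathrm{pad}})$, where $\mathbf Y^{\mathrm{pad}}_t$ has the coordinates beyond $n$ set to zero. Lipschitz continuity gives
\[
\|\mathbf g_L(\mathbf y)\|\le\|\mathbf g_L(0)\|+\mathrm{Lip}(\mathbf g_L)\|\mathbf y\|.
\]
By stationarity, $\E\|\mathbf Y_t\|^2$ and $\E\|\mathbf Y_t^{\mathrm{pad}}\|^2$ are bounded uniformly in $t$ and $n$. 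Hence the difference
\[
R_n:=\sqrt n\Big(\mathrm{GAP}(\mathbf X)-\tfrac1n\sum_{t=1}^n\mathbf g_L(\mathbf Y_t)\Big)
\]
is a sum of $K-1$ terms, each $n^{-1/2}$ times a variable with uniformly bounded second moment. So $\E\|R_n\|^2=O(K^2/n)\to0$. The same bound shows $\E R_n\to0$, so the centering by $\E[\mathrm{GAP}(\mathbf X)]$ instead of $\tfrac1n\sum_t\E\,\mathbf g_L(\mathbf Y_t)$ changes nothing in the limit. Slutsky's lemma then transfers the limit from the unpadded sum to $\sqrt n(\mathrm{GAP}(\mathbf X)-\E\,\mathrm{GAP}(\mathbf X))$, which completes the proof.
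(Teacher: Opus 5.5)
Your proposal is correct and follows the same route as the paper: apply Theorem~\ref{thm:main} with $M=m_L$, $f_m=g_m$ and weights $a^{(n)}_{t,m}=\mathbf 1(1\le t\le n)$, for which (W1) holds with $G_{m\ell}\equiv 1$ and (W2) is trivial. Your two extra steps do not appear in the paper, which simply writes $\mathbf H^{(L)}_t=\mathbf g_L(\mathbf X_t)$ for all $t\ge 1$ and applies Theorem~\ref{thm:main} directly: checking that the window process $\mathbf Y_t=(\mathbf X_t^\top,\ldots,\mathbf X_{t+K-1}^\top)^\top$ is itself a linear process satisfying Assumption~\ref{assum1:M1}, and showing that the $K-1$ zero-padded boundary terms contribute only $O_{L^2}(n^{-1/2})$ --- so your version makes rigorous two points the paper leaves implicit.
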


The proof can be found in Appendix \ref{appendix:missing_proofs}.

The statement proves that conditionally on a trained FCN, the output distribution of the GAP layer during inference time is asymptotically normal for long time series inputs. The theorem provides moreover a formula to compute the covariance matrix. For classification one passes this through the affine transformation \eqref{eq.dense_layer} and a softmax layer, making the asymptotic distribution to be the logistic normal,  \cite{zbMATH03675128}.

In simple cases, one can derive closed-form expressions for the function $\mathbf g.$ As example, consider a FCN
consisting of two successive one-dimensional convolutional layers with filter
lengths $m$ and $k$, weight  vectors $\mathbf w=(w_1,\ldots,w_m)^\top \in\mathbb{R}^{m}$ and
$\widetilde{\mathbf w}=(\widetilde w_1,\ldots, \widetilde w_{k})^\top\in\mathbb{R}^{k}$, biases $v, \widetilde v,$ and activation function $\sigma$. In this case, $a(j)=j, b(j)=j+m+k-2,$ $N=n,$ and
\begin{equation}
\label{eq:f_two_layer_cnn}
g\big(u_1,\ldots,u_{m+k-1}\big)
=
\sigma\!\Bigg(
\sum_{\ell=1}^{k}
\widetilde{w}_\ell\,
\sigma\!\Big(\sum_{r=1}^{m} w_r u_{\ell+r-1}\Big)
\Bigg).
\end{equation}

\section{The GAP layer with varying weights}\label{sec:weighted_gap}

We propose to replace the GAP layer $\mathrm{GAP}(\mathbf x)
:=\tfrac{1}{n}\sum_{t=1}^n \mathbf H^{(L)}_t$ by a weighted pooling layer
\begin{equation}\label{eq:wgap_def}
\mathrm{WGAP}(\mathbf x)
:=\sum_{t=1}^n a_{t,n}\,\mathbf H^{(L)}_t
\in\mathbb{R}^{m_L},
\end{equation}
where $\mathbf a_n:=(a_{1,n},\ldots,a_{n,n})^\top\in\mathbb{R}^n$are learnable parameters. The choice $a_{t,n}\equiv 1/n$ recovers standard GAP.

Theorem~\ref{thm:main} implies that the asymptotic Gaussianity of the pooled output is preserved
for a broad class of deterministic weight sequences. In particular, Gaussian limits still hold provided that all weights are $o(\sqrt{n})$ and that $\tfrac{1}{n}\sum_{t=1}^n a_{t,n}a_{t+h,n}$ converges for any $h=0,1,\ldots$ as $n \to \infty.$ A sufficient condition guaranteeing this convergence is that the successive weights vary slowly. Thus, we propose to incorporate the coefficients $\mathbf{a}_n$ as learnable parameters in the network and to add the first-order difference penalty
\begin{align}
    \lambda \sum_{j=1}^{n-1} \big(a_{(j+1),n}-a_{j,n}\big)^2
    \label{eq.penalty}
\end{align}
as regularization term to the training loss, where $\lambda>0$ is a regularization parameter. Zero penalty is given to solutions $a_{1n}=\ldots = a_{nn}.$ Large $\lambda$ favors learning network configurations with similar subsequent weights. This means that the sequence of learned weights $(a_{1,n},\ldots, a_{n,n})$ has small quadratic variation. 

As common in ML, the parameter $\lambda$ can be chosen via a validation dataset or by cross-validation.

Dynamic temporal pooling \cite{Lee_Lee_Yu_2021} is a related method. It, however, splits the sequence $1, \ldots, n$ into blocks and then averages on each block.

\section{Numerical experiments}
\label{sec:simulations}
\subsection{Numerical simulations of the FCN output distribution}

We illustrate the asymptotic normality of the FCN networks with GAP layer. The network consists of $B$ sequential residual convolutional blocks. We consider ReLU networks and sigmoid networks.
Each block is composed of two one-dimensional convolutions with filter lengths
$3$ and $2$, respectively, 
and an identity skip connection.
All convolutional weights are initialized using He initialization and kept
fixed as no training is performed. The randomness in the output distribution stems solely
from the input time series $(X_t)_{t\geq1}$, which is generated either according to a one-dimensional AR($1$) process $X_t = 0.6 X_{t-1} + \varepsilon_t$ with i.i.d.\ standard Gaussian innovations $\varepsilon_t \sim \mathcal N(0,1),$ or according to a MA$(1)$ process $X_t = 0.6\epsilon_{t-1} + \varepsilon_t$ with i.i.d.\ $\varepsilon_t \sim \mathcal N(0,1).$

Both processes satisfy Assumption \ref{assum1:M1}, see Lemma 1 in \cite{betken2025ordinal}. 

For both settings,  Figure~\ref{fig:simulation_histogram} shows the close agreement of the network outputs and the Gaussian distribution, corroborating the derived asymptotic normality result.

\subsection{Numerical estimation of the covariance structure of GAP outputs}

The covariance matrix $\Sigma^{\mathrm{GAP}}$ in Theorem~\ref{thm:fcn_gap} captures both the  variability of the GAP output channels and their cross-dependencies. To estimate $\Sigma^{\mathrm{GAP}}$, we use the kernel-based long-run covariance estimator proposed by \cite{dejong2000consistency}.  
For $h\ge 0$ and $K_L$ as defined in \eqref{eq:k_ell}, define the sample mean and the sample lag-$h$ covariance matrix by
\begin{align*}
    &\bar{\mathbf g}_L:=\frac{1}{n-K_L+1}\sum_{s=1}^{n-K_L+1}\mathbf g_L(X_s,\ldots,X_{s+K_L-1})\\
    &\widehat\Gamma_n(h):=
\frac{1}{n-K_L+1}\sum_{t=1}^{n-K_L+1-h}
\big(
\mathbf g_L(X_t,\ldots,X_{t+K_L-1})
-\bar{\mathbf g}_L
\big)
\big(
\mathbf g_L(X_{t+h},\ldots,X_{t+h+K_L-1})
-\bar{\mathbf g}_L\big)^\top,
\end{align*}
and for $h>0$ let $\widehat\Gamma_n(-h):=\widehat\Gamma_n(h)^\top$. Given a kernel $k:\mathbb R\to\mathbb R$ and a bandwidth sequence $(b_n)_n$ such that $b_n\to\infty$, define
\[
\widehat\Sigma_n^{\mathrm{GAP}}
:=
\widehat\Gamma_n(0)
+
\sum_{h=1}^{n-K_L}
k\!\left(\frac{h}{b_n}\right)
\bigl(
\widehat\Gamma_n(h)+\widehat\Gamma_n(h)^\top
\bigr).
\]
Under standard regularity conditions on the kernel, the bandwidth, and the process $(\mathbf X_{t:K_L})_{t\geq 0}$, \cite{dejong2000consistency} proved that $\widehat\Sigma_n^{\mathrm{GAP}}$ is a consistent estimator for the covariance matrix $\Sigma^{\mathrm{GAP}}$. 

For all numerical experiments, we use the Bartlett kernel
\(
k(x)=(1-|x|)\mathbf 1_{\{|x|\le 1\}},
\)
which belongs to the kernel class considered in their Assumption~1. 

Consider as input time series \((X_t)_{t}\) a Gaussian AR\((1)\) process
\(
X_t=\theta X_{t-1}+\varepsilon_t,\) \(\theta\in[0,1),
\) with independent innovations \(\varepsilon_t\sim\mathcal N(0,\sigma_\varepsilon^2)\). For $m_1$-dimensional vector \(W_0^{(1)}\), a one-layer convolutional network with $k_1=1$ is given by
\begin{equation}
\label{eq:toymodel}
\mathrm{GAP}(\mathbf x)
=
\frac1n\sum_{t=1}^n \mathrm{ReLU}\!\left(W_0^{(1)}X_t\right).
\end{equation}
Lemma~\ref{lem:cnn_trivial_case} in the Appendix derives expressions for the entries of the long-run correlation matrix
\[
\mathcal R_{j,\ell}^{\mathrm{GAP}}(\theta)
=
\frac{\Sigma_{j,\ell}^{\mathrm{GAP}}(\theta)}
{\sqrt{\Sigma_{j,j}^{\mathrm{GAP}}(\theta)\Sigma_{\ell,\ell}^{\mathrm{GAP}}(\theta)}}, \quad j,\ell=1,\ldots,m_1,
\]
covering arbitrary $\theta\in [0,1)$ and filter width $k_1\geq 1.$ For the boundary values $\theta=0$ and $\theta \uparrow 1,$ we specifically obtain
\[
\mathcal R_{j,\ell}^{\mathrm{GAP}}(0)
=
\begin{cases}
1, & \text{if } W^{(1)}_{0,j}W^{(1)}_{0,\ell}\geq 0,\\[1.2ex]
-\dfrac{1}{\pi-1}\approx -0.47, & \text{if } W^{(1)}_{0,j}W^{(1)}_{0,\ell}<0,
\end{cases}
\]
and
\[
\lim_{\theta\uparrow 1}\mathcal R_{j,\ell}^{\mathrm{GAP}}(\theta)
=
\begin{cases}
1, & \text{if } W^{(1)}_{0,j}W^{(1)}_{0,\ell}\geq0,\\[1.2ex]
\displaystyle \frac{\log 2-2}{\pi+\log 2-2} \approx - 0.71, & \text{if } W^{(1)}_{0,j}W^{(1)}_{0,\ell}<0.
\end{cases}
\]
This reveals that in these cases a large fractions of the network outputs are perfectly correlated, see also Figure~\ref{fig:gap_corr_theta_comparison} for the numerical results.

\begin{figure}[t]
    \centering
    \begin{minipage}[t]{0.35\linewidth}
        \centering
  
        \vspace{0.3em}
        \includegraphics[width=\linewidth]{ 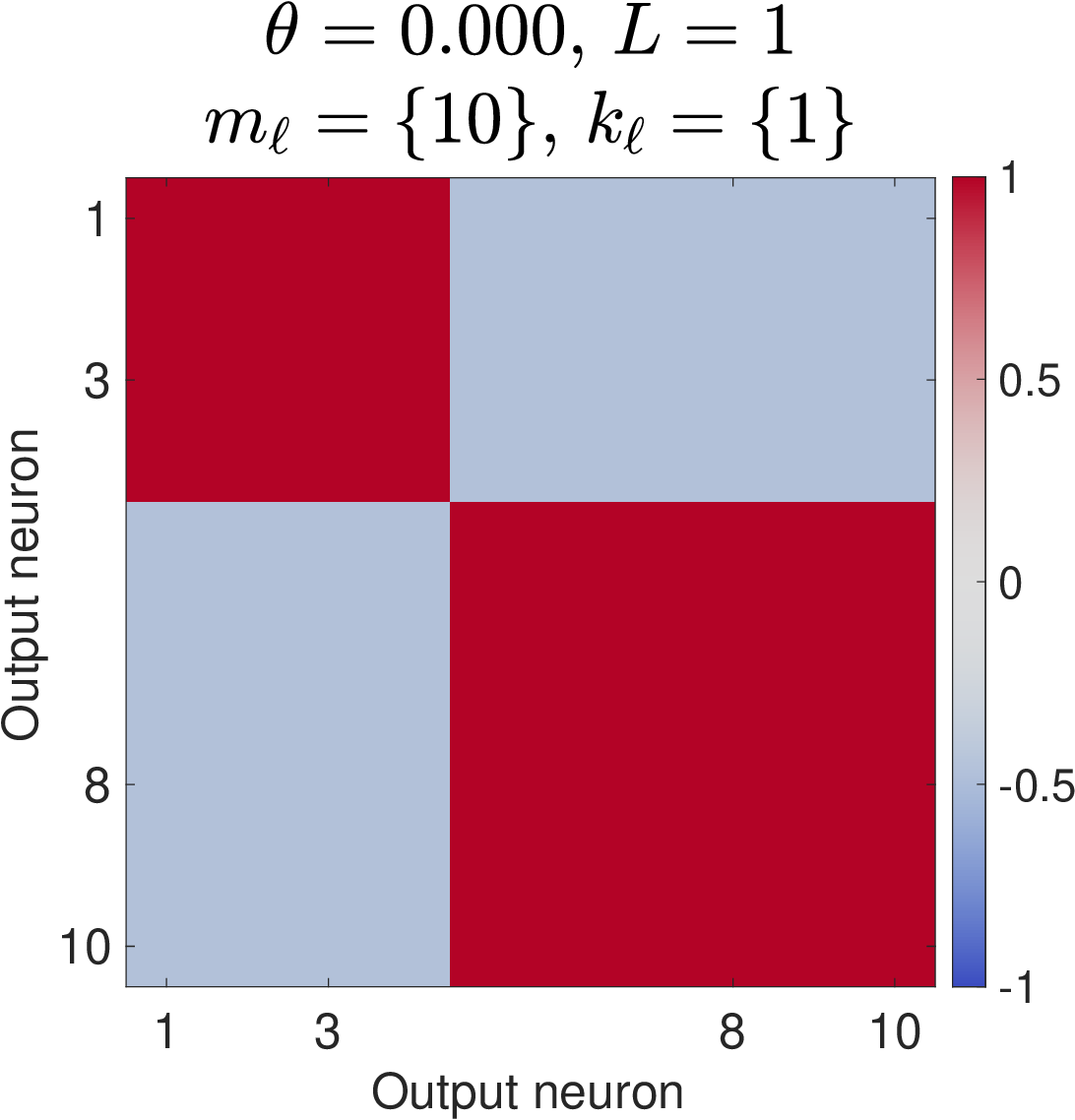}
    \end{minipage}
    \begin{minipage}[t]{0.34\linewidth}
        \centering
        
        \vspace{0.3em}
        \includegraphics[width=\linewidth]{ 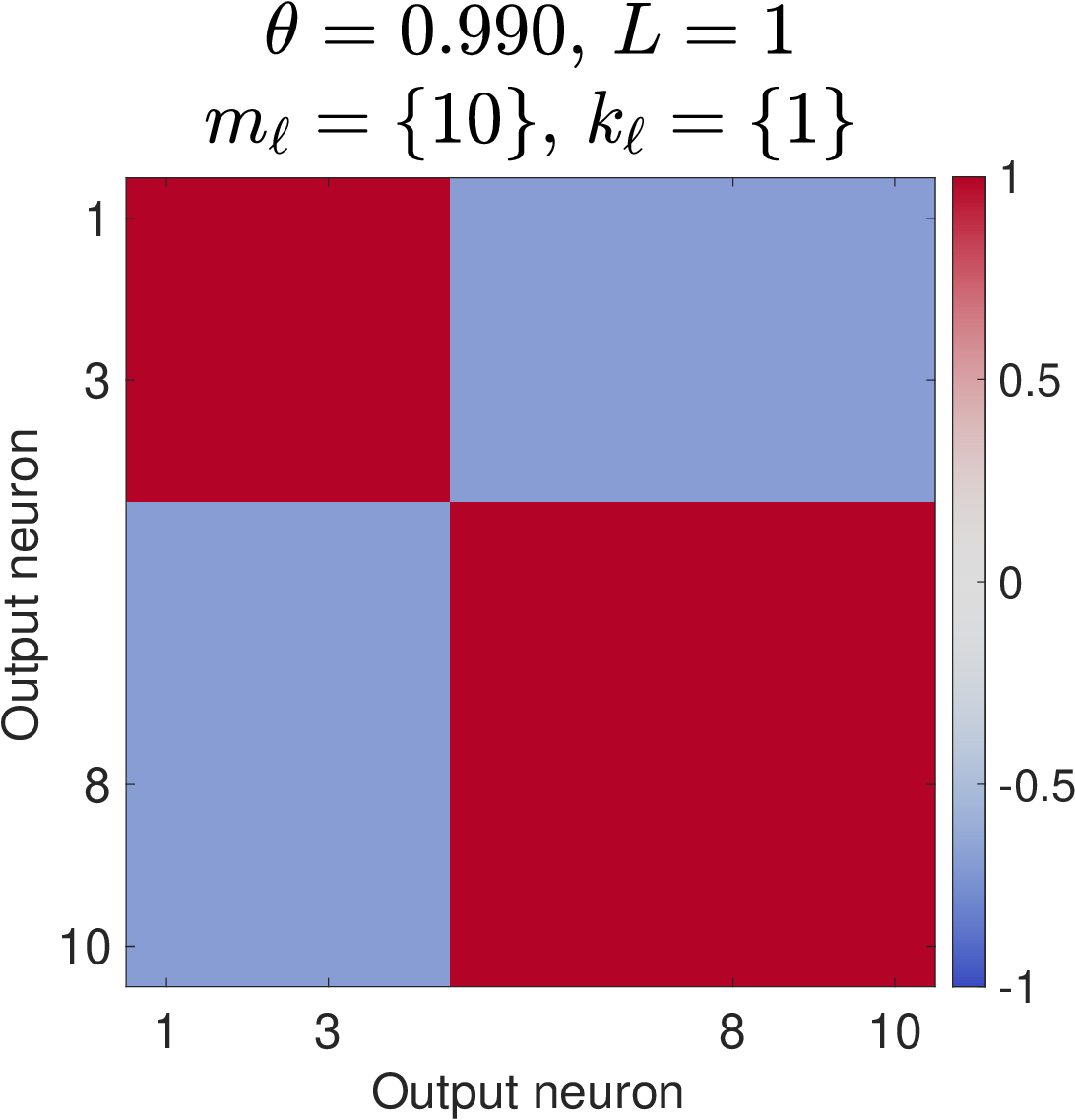}
    \end{minipage}
    \caption{Empirical autocorrelation matrices \(\widehat{\mathcal R}^{\mathrm{GAP}}(\theta)\) for the one-layer GAP network with \(m_1=10\) under fixed Gaussian initialization and input generated from an AR\((1)\) process with parameters \(\theta=0\)  (left panel) and  \(\theta=0.99\) (right panel). The numbering of the output neurons has been chosen to maximize the correlation of successive neurons, thereby revealing the strong clustering of the outputs. }
    \label{fig:gap_corr_theta_comparison}
\end{figure}
For Gaussian AR(1) input with $\sigma_\varepsilon=5$ and dependence parameter \(\theta=0.65,\) Figure~\ref{fig:settingB_all} studies how the correlation structure changes as the architecture becomes deeper ($L\in \{1,2,3,4\}$). All cases are for ReLU activation, without residual connections, and for He initialization that is kept fixed throughout the experiment. With growing depth and width, the block structure of the correlation matrix successively disappears.

\begin{figure}[!t]
    \centering

    \begin{subfigure}[t]{0.4\textwidth}
        \centering
        \includegraphics[height=5.2cm]{ 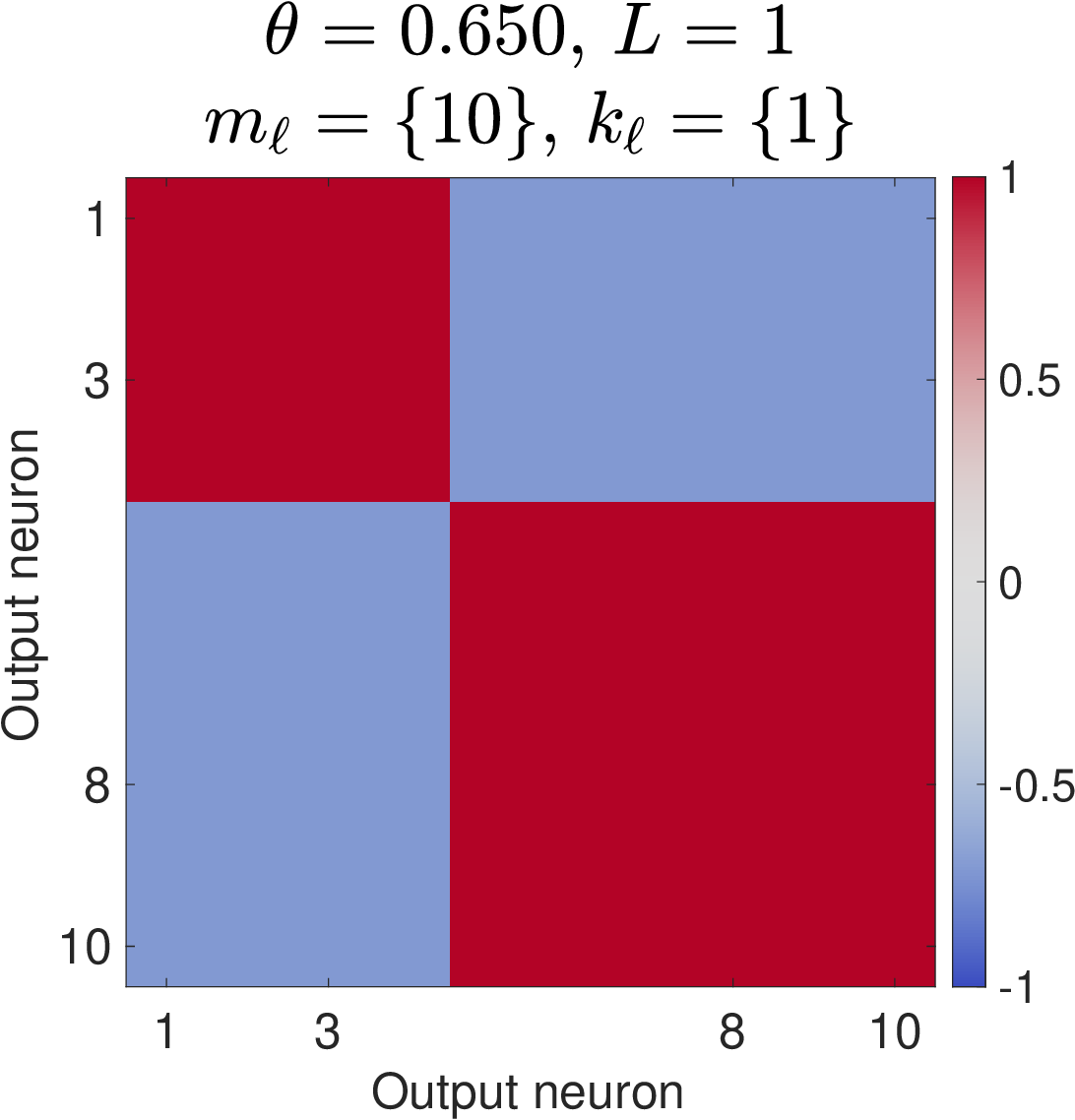}
    \end{subfigure}
    \begin{subfigure}[t]{0.4\textwidth}
        \centering
        \includegraphics[height=5.2cm]{ 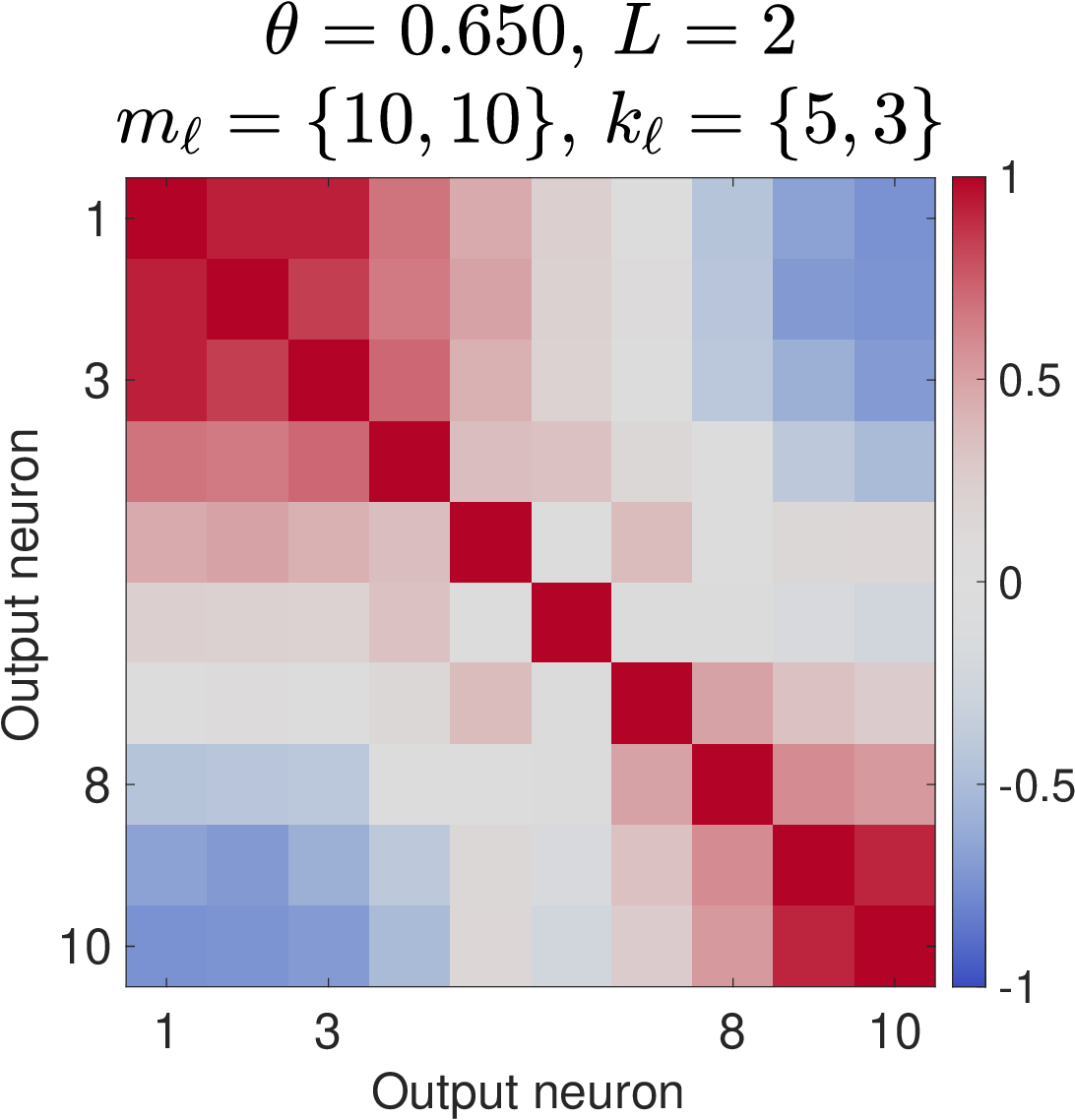}
    \end{subfigure}

    \vspace{0.5em}

    \begin{subfigure}[t]{0.4\textwidth}
        \centering
        \includegraphics[height=5.2cm]{ 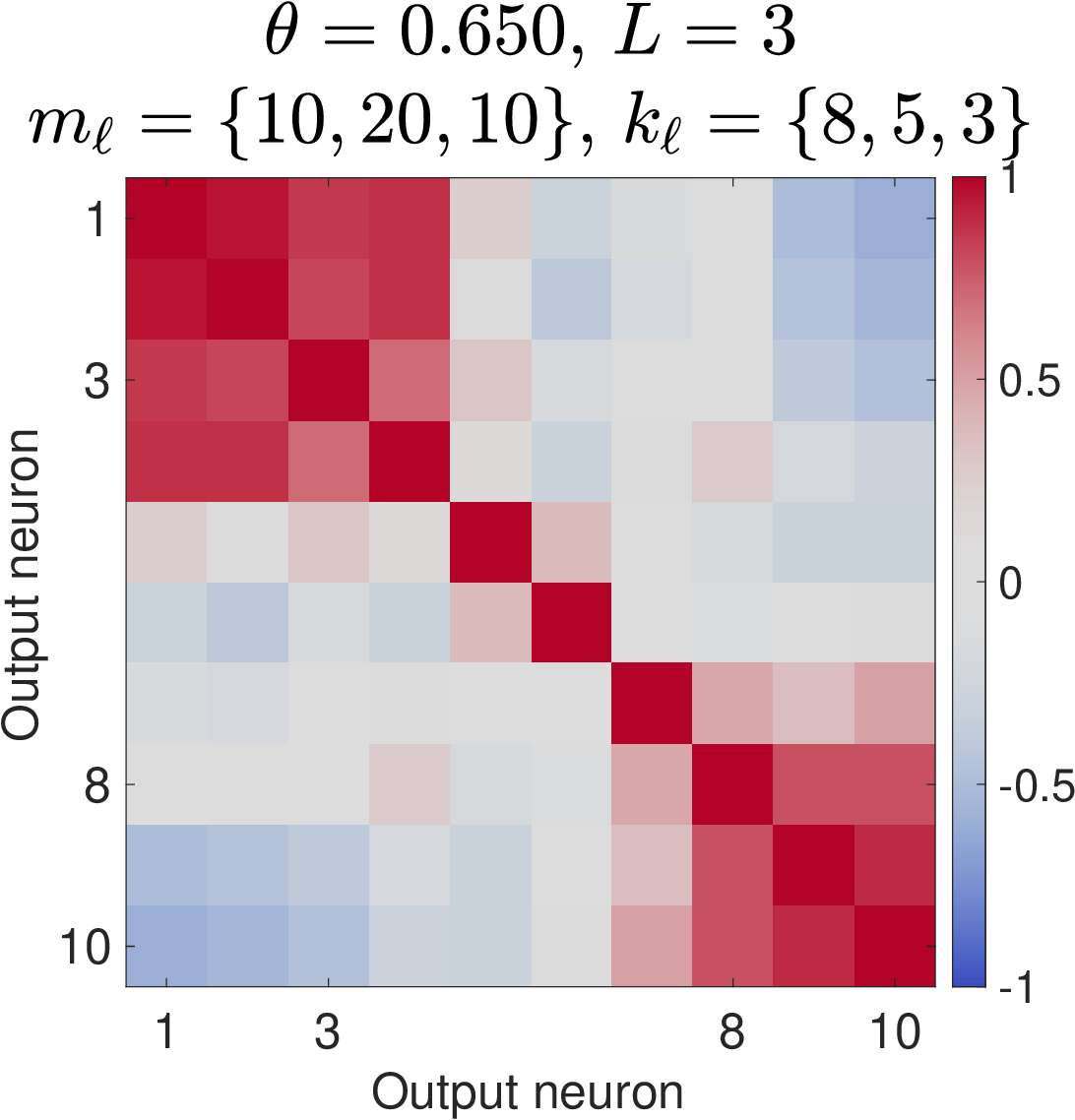}
    \end{subfigure}
    \begin{subfigure}[t]{0.4\textwidth}
        \centering
        \includegraphics[height=5.2cm]{ 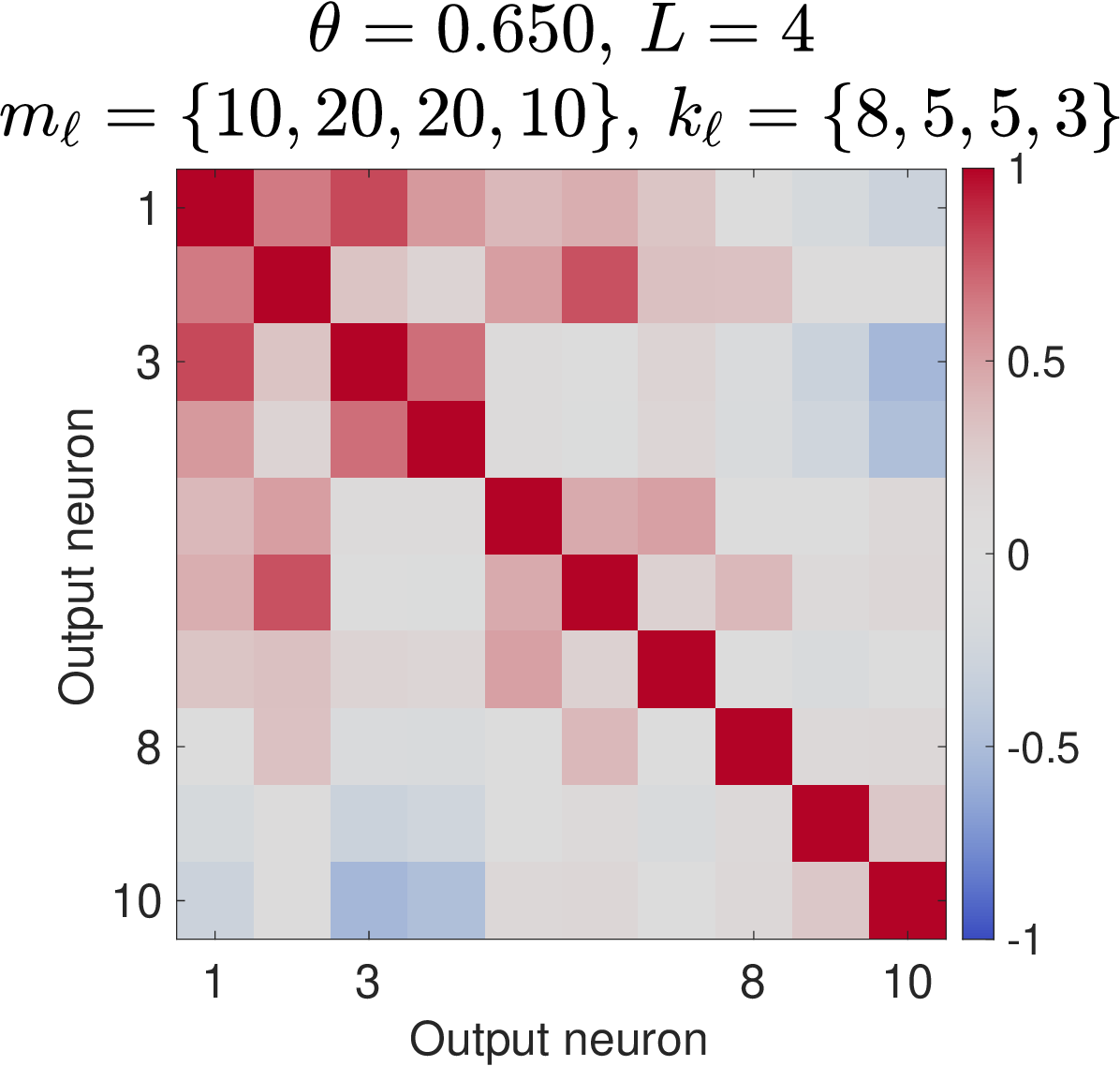}
    \end{subfigure}

    \caption{Empirical correlation matrices with dependence parameter  $\theta=0.65$ in the Gaussian AR(1) input time series. The numbering of the output neurons has been chosen to maximize the correlation of successive neurons.}
    \label{fig:settingB_all}
\end{figure}

\begin{figure}[t]
\centering
\begin{minipage}{0.48\linewidth}
    \centering
    \includegraphics[width=\linewidth]{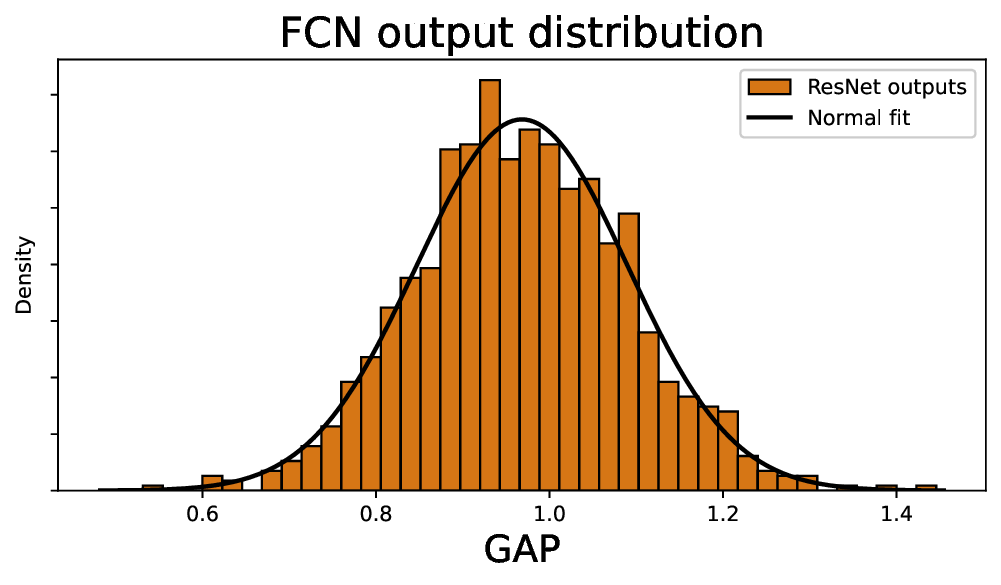}
    \includegraphics[width=0.6\linewidth]{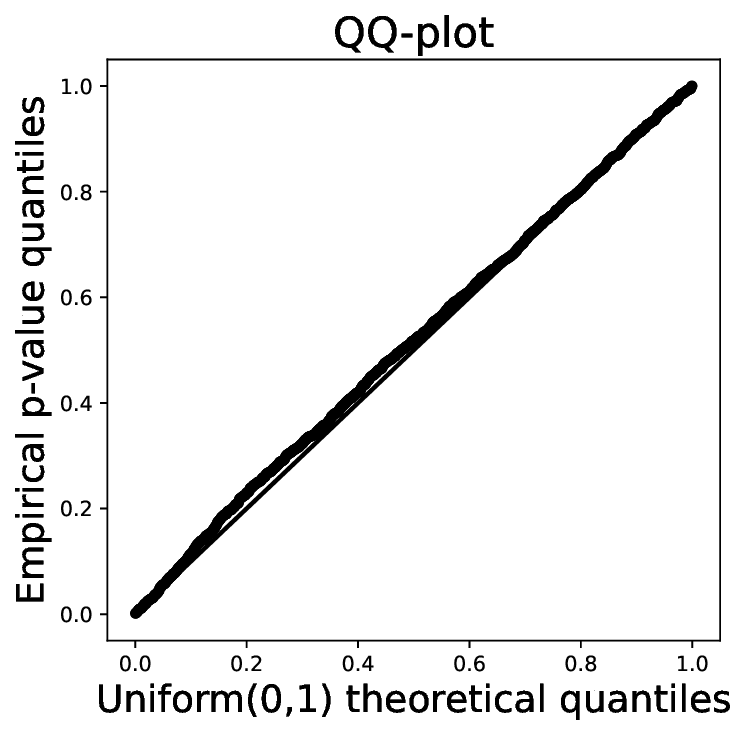}
    \subcaption{$n=1000$, $B=4$, MA(1) input}
\end{minipage}\hfill
\begin{minipage}{0.48\linewidth}
    \centering
    \includegraphics[width=\linewidth]{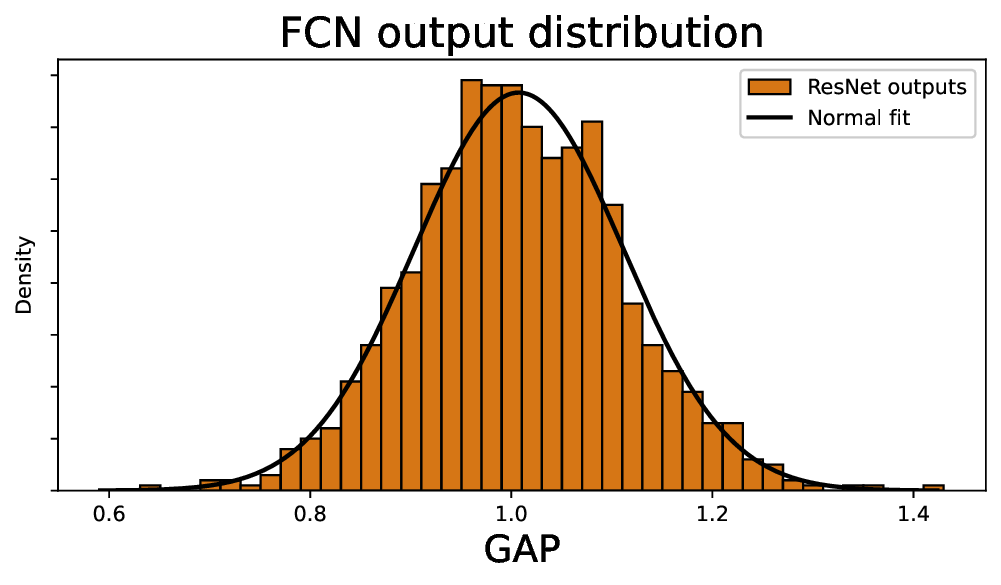}
    \includegraphics[width=0.6\linewidth]{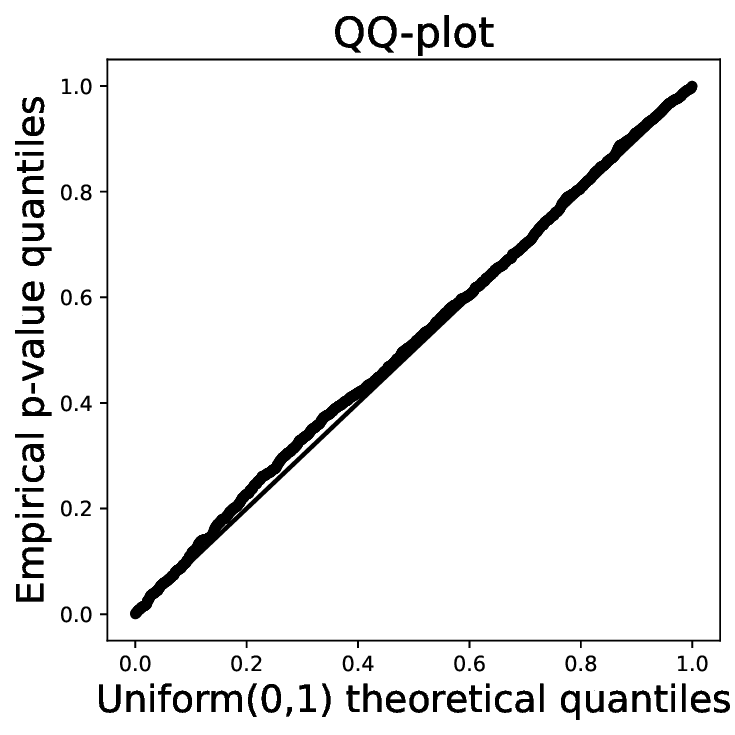}
    \subcaption{$n=1000$, $B=4$, AR(1)}
\end{minipage}
\caption{QQ plots and histograms (with fitted Gaussian densities) of the standardized statistic $S_n$ for two representative configurations.}
\label{fig:simulation_histogram}
\end{figure}

\subsection{Empirical comparison of GAP and weighted-GAP }
\label{sec:empirical_wgap_lambda1}

We compare standard global average pooling (GAP) with the proposed  weighted-GAP (WGAP). 

We consider one-dimensional input time series of length $n$ and FCNs without residual connections ($I^{(\ell)}=0$ for all $\ell$),  ReLU activation function, filter lengths $(k_1,k_2,k_3)=(8,5,3)$, and channel widths $(m_1,m_2,m_3)=(128,256,128).$
Batch normalization is applied.

We then either use a GAP layer or a weighted GAP layer and finally pass the signal through a dense layer \eqref{eq.dense_layer} and a softmax layer \eqref{eq.soft_max} to obtain a probability vector.

Consider a labeled sample $(\mathbf x,y)$ with class label $y\in\{1,\ldots,K\}.$ Given parameter vector $\bm{\theta}$, assume that the model maps the inputs $\mathbf{x}$ to the probability vector $(p_1(\mathbf x, \bm{\theta}), \ldots, p_K(\mathbf x, \bm{\theta})).$ Then, the cross-entropy loss of this sample is $\ell( \mathbf{x},y)
:=
-\log\!(p_{y}(\mathbf x, \bm{\theta})),$ penalized with the first-order difference penalty \eqref{eq.penalty} in the case of WGAP. In the experiments we employ the Adam optimizer with learning rate $10^{-3}$. The regularization parameter $\lambda$ is chosen based on 5-fold cross-validation. 

We test three methods, (i) FCNs with GAP layer, (ii) FCNs with WGAP layer and no regularization, (iii) FCNs with WGAP layer and first-order difference penalty. 

Each dataset comes with a predefined split into training data and  test data (as provided by the
UCR/UEA archive, \cite{UCR}); all reported test accuracies are computed on the corresponding test split.
For each training run, corresponding to one dataset, the network is initialized using PyTorch’s default parameter initialization.
The results can be found in Table \ref{tab:completed_dataset_attempts}. We regard two methods as \emph{comparable} on a given dataset whenever their average test accuracies differ by at most \(2\%\). Under this criterion, penalized WGAP (\(\mathrm{reg\_WGAP}\)) is comparable to or better than GAP on \(29\) out of \(39\) datasets. By contrast, GAP outperforms penalized WGAP by more than two percentage points on the remaining \(10\) datasets. Moreover, penalized WGAP ranks either first or second on \(37\) of the \(39\) datasets.
The code used for the empirical comparison of the three methods can be found at the public available repository:

\url{https://github.com/george24GM/Regularized-GAP-for-time-series-classification.git}

\begin{table}[p]
\centering
\scriptsize
\setlength{\tabcolsep}{6pt}
\caption{Completed simulation by dataset. For each dataset, the best mean accuracy among GAP, WGAP, and reg\_WGAP is highlighted in red, while the second-best is highlighted in orange. The last column reports the median selected regularization parameter \(\lambda\) across the five runs. Each method is trained for \(200\) epochs using the same batch size, while the regularized variant reg\_WGAP selects \(\lambda\) by \(5\)-fold cross-validation, with \(80\) training epochs in each fold.}
\label{tab:completed_dataset_attempts}
\resizebox{\textwidth}{!}{%
\begin{tabular}{lcccc}
\toprule
Dataset & GAP & WGAP & reg\_WGAP & Median \(\lambda\) \\
\midrule
Adiac & 0.53 $\pm$ 0.12 & \textcolor{secondorange}{0.77 $\pm$ 0.03} & \textcolor{bestred}{0.80 $\pm$ 0.01} & 1.024 \\
ArrowHead & \textcolor{bestred}{0.79 $\pm$ 0.03} & 0.76 $\pm$ 0.02 & \textcolor{secondorange}{0.78 $\pm$ 0.03} & 0.128 \\
BeetleFly & \textcolor{bestred}{0.78 $\pm$ 0.06} & 0.67 $\pm$ 0.08 & \textcolor{secondorange}{0.72 $\pm$ 0.13} & 0.032 \\
BirdChicken & \textcolor{bestred}{0.92 $\pm$ 0.03} & 0.72 $\pm$ 0.04 & \textcolor{secondorange}{0.80 $\pm$ 0.07} & 1.024 \\
Car & 0.63 $\pm$ 0.23 & \textcolor{secondorange}{0.76 $\pm$ 0.05} & \textcolor{bestred}{0.81 $\pm$ 0.08} & 0.512 \\
Meat & 0.59 $\pm$ 0.17 & \textcolor{secondorange}{0.86 $\pm$ 0.04} & \textcolor{bestred}{0.92 $\pm$ 0.01} & 2.048 \\
Ham & 0.67 $\pm$ 0.09 & \textcolor{secondorange}{0.70 $\pm$ 0.05} & \textcolor{bestred}{0.77 $\pm$ 0.02} & 0.256 \\
FordB & \textcolor{secondorange}{0.87 $\pm$ 0.04} & 0.81 $\pm$ 0.01 & \textcolor{bestred}{0.90 $\pm$ 0.00} & 32.768 \\
FordA & \textcolor{bestred}{0.91 $\pm$ 0.01} & 0.85 $\pm$ 0.01 & \textcolor{secondorange}{0.91 $\pm$ 0.01} & 32.768 \\
FISH & 0.75 $\pm$ 0.10 & \textcolor{secondorange}{0.90 $\pm$ 0.02} & \textcolor{bestred}{0.91 $\pm$ 0.02} & 1.024 \\
FaceAll & \textcolor{secondorange}{0.91 $\pm$ 0.02} & \textcolor{bestred}{0.93 $\pm$ 0.03} & 0.90 $\pm$ 0.05 & 16.384 \\
Earthquakes & 0.72 $\pm$ 0.20 & \textcolor{secondorange}{0.76 $\pm$ 0.02} & \textcolor{bestred}{0.82 $\pm$ 0.00} & 16.384 \\
CBF & \textcolor{bestred}{0.99 $\pm$ 0.00} & 0.98 $\pm$ 0.01 & \textcolor{secondorange}{0.99 $\pm$ 0.01} & 0.001 \\
Beef & 0.47 $\pm$ 0.10 & \textcolor{secondorange}{0.76 $\pm$ 0.05} & \textcolor{bestred}{0.83 $\pm$ 0.06} & 0.256 \\
ChlorineConcentration & 0.61 $\pm$ 0.05 & \textcolor{secondorange}{0.83 $\pm$ 0.01} & \textcolor{bestred}{0.83 $\pm$ 0.01} & 0.002 \\
CinC\_ECG\_torso & 0.65 $\pm$ 0.09 & \textcolor{secondorange}{0.85 $\pm$ 0.05} & \textcolor{bestred}{0.86 $\pm$ 0.05} & 0.032 \\
Coffee & \textcolor{bestred}{1.00 $\pm$ 0.00} & \textcolor{bestred}{1.00 $\pm$ 0.00} & \textcolor{bestred}{1.00 $\pm$ 0.00} & 0.001 \\
Computers & \textcolor{bestred}{0.82 $\pm$ 0.03} & 0.69 $\pm$ 0.05 & \textcolor{secondorange}{0.77 $\pm$ 0.04} & 8.192 \\
Cricket\_Y & \textcolor{secondorange}{0.69 $\pm$ 0.07} & 0.68 $\pm$ 0.02 & \textcolor{bestred}{0.75 $\pm$ 0.01} & 4.096 \\
Cricket\_Z & \textcolor{bestred}{0.74 $\pm$ 0.03} & 0.63 $\pm$ 0.05 & \textcolor{secondorange}{0.69 $\pm$ 0.05} & 16.384 \\
DiatomSizeReduction & 0.93 $\pm$ 0.00 & \textcolor{secondorange}{0.96 $\pm$ 0.02} & \textcolor{bestred}{0.96 $\pm$ 0.02} & 0.001 \\
DistalPhalanxOutlineAgeGroup & 0.75 $\pm$ 0.16 & \textcolor{secondorange}{0.81 $\pm$ 0.03} & \textcolor{bestred}{0.86 $\pm$ 0.00} & 0.128 \\
DistalPhalanxOutlineCorrect & 0.71 $\pm$ 0.17 & \textcolor{secondorange}{0.78 $\pm$ 0.02} & \textcolor{bestred}{0.82 $\pm$ 0.00} & 16.384 \\
DistalPhalanxTW & \textcolor{secondorange}{0.79 $\pm$ 0.00} & 0.77 $\pm$ 0.01 & \textcolor{bestred}{0.81 $\pm$ 0.01} & 4.096 \\
ECG5000 & \textcolor{secondorange}{0.94 $\pm$ 0.00} & 0.93 $\pm$ 0.00 & \textcolor{bestred}{0.94 $\pm$ 0.00} & 0.008 \\
ECGFiveDays & \textcolor{bestred}{0.99 $\pm$ 0.01} & 0.80 $\pm$ 0.05 & \textcolor{secondorange}{0.87 $\pm$ 0.09} & 4.096 \\
FaceFour & \textcolor{bestred}{0.93 $\pm$ 0.01} & 0.76 $\pm$ 0.05 & \textcolor{secondorange}{0.78 $\pm$ 0.05} & 0.002 \\
FacesUCR & 0.90 $\pm$ 0.05 & \textcolor{secondorange}{0.91 $\pm$ 0.01} & \textcolor{bestred}{0.93 $\pm$ 0.01} & 32.768 \\
Strawberry & 0.93 $\pm$ 0.07 & \textcolor{secondorange}{0.97 $\pm$ 0.01} & \textcolor{bestred}{0.98 $\pm$ 0.00} & 0.256 \\
RefrigerationDevices & \textcolor{secondorange}{0.48 $\pm$ 0.03} & 0.47 $\pm$ 0.01 & \textcolor{bestred}{0.56 $\pm$ 0.04} & 32.768 \\
ScreenType & \textcolor{secondorange}{0.59 $\pm$ 0.06} & 0.48 $\pm$ 0.03 & \textcolor{bestred}{0.60 $\pm$ 0.04} & 16.384 \\
SwedishLeaf & \textcolor{secondorange}{0.96 $\pm$ 0.01} & 0.96 $\pm$ 0.01 & \textcolor{bestred}{0.96 $\pm$ 0.01} & 0.512 \\
Symbols & \textcolor{bestred}{0.95 $\pm$ 0.00} & 0.89 $\pm$ 0.06 & \textcolor{secondorange}{0.90 $\pm$ 0.07} & 0.128 \\
synthetic\_control & 0.98 $\pm$ 0.00 & \textcolor{secondorange}{0.99 $\pm$ 0.00} & \textcolor{bestred}{0.99 $\pm$ 0.00} & 0.001 \\
Wine & 0.50 $\pm$ 0.00 & \textcolor{secondorange}{0.73 $\pm$ 0.14} & \textcolor{bestred}{0.88 $\pm$ 0.09} & 0.008 \\
WordsSynonyms & 0.46 $\pm$ 0.02 & \textcolor{secondorange}{0.56 $\pm$ 0.03} & \textcolor{bestred}{0.58 $\pm$ 0.03} & 0.002 \\
Worms & \textcolor{bestred}{0.56 $\pm$ 0.04} & 0.48 $\pm$ 0.03 & \textcolor{secondorange}{0.55 $\pm$ 0.01} & 4.096 \\
WormsTwoClass & \textcolor{bestred}{0.75 $\pm$ 0.02} & 0.57 $\pm$ 0.02 & \textcolor{secondorange}{0.61 $\pm$ 0.04} & 0.256 \\
yoga & 0.72 $\pm$ 0.11 & \textcolor{secondorange}{0.73 $\pm$ 0.03} & \textcolor{bestred}{0.75 $\pm$ 0.02} & 4.096 \\
\bottomrule
\end{tabular}
}
\end{table}

\section{Related literature}
The study of the asymptotic distribution of weighted sums of stationary processes has been extensively developed by \cite{abadir2014asymptotic}, who establish central limit theorems (CLTs) for weighted sums $\frac{1}{\sqrt{n}}S_n = \sum_{t=1}^n z_{nt} X_t$, where $(z_{nt})$ forms a triangular array, and  $(X_t)_{t\geq 1}$ is a linear process with martingale difference or weakly dependent innovations. Their Theorem 2.2 gives minimal and easily verifiable conditions for asymptotic normality, applicable to short-memory linear processes. \cite{dalla2014studentizing} extend these results, developing robust methods that remain valid without requiring knowledge of the dependence structure.

Our work builds on and generalizes these results by allowing for subordination; that is, we consider a series of random variables of the form $S_n=\sum_{t=1}^n a_t^{(n)} G(X_t)$,  as opposed to a fixed transformation $G$, i.e $S_n=\sum_{t=1}^n G(X_t)$.
Moreover, as a special case, our results corresponds to 
Theorem 2.2 of \cite{abadir2014asymptotic} when $f(x) = x$, the weights satisfy standard summability, and the process has short memory. We also refer to the survey by \cite{merlevede2006recent} for a broader perspective on invariance principles and CLTs for subordinated stationary processes.

CNNs have also been successfully employed to infer structural properties of stochastic time series models.
In particular, \cite{TANG2021113544}
 propose a CNN architecture for ARMA model order identification, trained entirely on simulated time series with known ground truth. Their networks take raw time series as input and learn to predict the autoregressive and moving-average orders directly, significantly outperforming classical likelihood-based criteria such as AIC and BIC both in accuracy and computational efficiency. From a methodological perspective, their CNN architecture processes the time series through stacked one-dimensional convolutions and nonlinearities, followed by a global aggregation over temporal locations before the final decision layer.

The derived result is conditional on the training data. From an ML perspective this makes sense, as we are given a trained model and now deploy it during test time. Obtaining an unconditional statement is much harder. Interestingly, the popular ROCKET method \cite{Dempster2020} works with random filter weights in the convolutional layers that are not trained. It is conceivable that independently drawn random parameters can be included into the analysis.

A possible future direction is to study the asymptotic activation distribution of transformer architectures if the inputs are time series, \cite{Ahmed2023}.

\nocite{*}

\bibliography{bib}

\newpage
\appendix
\onecolumn
\begin{center}
    \Large{\textbf{Appendix}}
\end{center}

\subsection*{Additional Notation}

The symbol $\mathcal{C}$ denotes a generic constant that may vary between occurrences. Let $$\mathcal{F}_t = \sigma(\mathbf{Z}_t, \mathbf{Z}_{t-1}, \ldots)$$ denote the $\sigma$-field generated by the past values of the innovations $\mathbf{Z}_t$.  
Then, for a function $f:\mathbb{R}^k\longrightarrow \mathbb{R}$,  and $\mathbf{X}_t=\sum_{j=0}^\infty A_j \mathbf{Z}_{t-j}$ the following martingale difference decomposition holds
\begin{align}
\nonumber
f(\mathbf{X}_t) - \mathbb{E}[f(\mathbf{X}_t)]
&= \sum_{j=0}^\infty U_{t,j}, 
\quad \text{where} \ \ 
U_{t,j}:= 
\mathbb{E}\big[f(\mathbf{X}_t) \mid \mathcal{F}_{t-j}\big]
- \mathbb{E}\big[f(\mathbf{X}_t) \mid \mathcal{F}_{t-j-1}\big]; 
\label{eq:um}
\end{align}
see  \cite{ho1997limit}.
 
For negative integers $h$, we set $G_{m\ell,n}(h):=G_{m\ell,n}(-h).$  
For $x,y\in \mathbb{R}$,  $$ x\vee y:=\max\{x,y\}.$$

\section{Proof of Theorem~\ref{thm:main} }
\label{sec:proof_multivariate}

In what follows, we show that under the assumptions of
Theorem~\ref{thm:main}, the statistic
\[
S_n
\;=\;
\sum_{t=1}^n \sum_{m=1}^M
a_{t,m}^{(n)}\big(f_m(\mathbf X_t)-\E[f_m(\mathbf X_t)]\big)
\]
satisfies the central limit theorem
\begin{equation}
\frac{1}{\sqrt{n}}\,S_n
\;\xrightarrow{\mathcal D}\;
\mathcal N(0,\sigma^2),
\label{eq:M_CLT}\tag{M--CLT}
\end{equation}
where the asymptotic variance $\sigma^2$ is given by
\begin{align*}
\sigma^2
&=
\sum_{m,\ell=1}^M
\Bigg(
G_{m\ell}(0)\,
\Cov\!\big(f_m(\mathbf X_1),f_\ell(\mathbf X_1)\big)
+
2\sum_{h=1}^\infty
G_{m\ell}(h)\,
\Cov\!\big(f_m(\mathbf X_1),f_\ell(\mathbf X_{1+h})\big)
\Bigg)
<\infty.
\end{align*}
Once~\eqref{eq:M_CLT} is established, the assertion of
Theorem~\ref{thm:main} follows immediately by the Cramér--Wold device.
Indeed, for any fixed vector
$\bm{\lambda}=(\lambda_1,\ldots,\lambda_M)^\top\in\mathbb{R}^M$ and $\mathbf S_n:=(S_{1, n},\ldots, S_{M, n})^\top$ with $S_{m,n}:=\sum_{t=1}^n
\lambda_m a^{(n)}_{t,m}
\big(f_m(\mathbf X_t)-\mathbb{E}[f_m(\mathbf X_t)]\big),$
\[
\bm{\lambda}^\top \mathbf S_n
=
\sum_{m=1}^M \lambda_m S_{m,n}
=
\sum_{t=1}^n
\sum_{m=1}^M
\lambda_m a^{(n)}_{t,m}
\big(f_m(\mathbf X_t)-\mathbb{E}[f_m(\mathbf X_t)]\big).
\]
Define modified weights $\tilde a^{(n)}_{t,m}:=\lambda_m a^{(n)}_{t,m}$.
Since conditions \eqref{eq:W1-multi} and \eqref{eq:W2-multi} hold for $\tilde a^{(n)}_{t,m}$, with limiting coefficients
$\lambda_m\lambda_\ell\,G_{m\ell}(h)$, \eqref{eq:M_CLT} yields
\[
\frac{1}{\sqrt{n}}\,\bm{\lambda}^\top\mathbf S_n
\;\xrightarrow{\mathcal D}\;
\mathcal N\!\big(0, \tilde{\sigma}^2\big), \ \text{with} \ \tilde{\sigma}^2=\bm{\lambda}^\top\Sigma\,\bm{\lambda}.
\]
Since this convergence holds for every $\bm{\lambda}\in\mathbb{R}^M$,
the claim follows from the Cramér--Wold device. 

\textbf{Proof of \eqref{eq:M_CLT}}
 Let $f_m:\mathbb{R}^k\to\mathbb{R}$ be Lipschitz continuous and let the weights $(a_{t,m}^{(n)})$ satisfy conditions \eqref{eq:W1-multi}–\eqref{eq:W2-multi} of Theorem~\ref{thm:main}.  
For each $m\in\{1,\dots,M\}$, define $(U_{t,j,m})_{j\ge 0}$, $$ U_{t,j,m}
:=\E\big[f_m(\mathbf{X}_t)\mid\mathcal{F}_{t-j}\big]-\E\big[f_m(\mathbf{X}_t)\mid\mathcal{F}_{t-j-1}\big],$$ such that
\[
f_m(\mathbf X_t)-\E[ f_m(\mathbf X_t)]
=
\sum_{j=0}^{\infty} U_{t,j, m},
\qquad t\ge 1\;.
\]
With
\[
W_{t,j}^{(n)}
:=\sum_{m=1}^M a_{t,m}^{(n)} U_{t,j,m}\;
\]
it follows that
\begin{align*}
\frac{1}{\sqrt{n}}S_n
&=\frac{1}{\sqrt n}
\sum_{t=1}^n\sum_{m=1}^M a_{t,m}^{(n)}
\big(f_m(\mathbf X_t)-\E f_m(\mathbf X_t)\big) =\frac{1}{\sqrt n}
\sum_{t=1}^n\sum_{j=0}^\infty W_{t,j}^{(n)}.
\end{align*}

We establish \eqref{eq:M_CLT}
through an application of the following theorem: 
\begin{theorem}[Theorem 4.2  in \cite{billingsley:1968}, p. 25]
    \label{theorem_4.2}
     If there exist processes $S_n$ and $V_{u,n}$ and $\sigma_u>0,$ such that for all $u=1,2,\ldots$,
    \begin{itemize}
        \item [(i)] $V_{u,n} \overset{\mathcal{D}}{\longrightarrow}  \mathcal{N}(0,\sigma_u ^2),$
        \item [(ii)] $\sigma^2 =\lim_{u\to \infty} \sigma_u^2 >0$ exists and is finite,
        \item [(iii)] 
        $ \liminf \limits_{u\to \infty} \, \limsup\limits_{n\to \infty} \, \P ( |V_{u,n}-\tfrac{1}{\sqrt{n}}S_n|\geq \varepsilon)=0,$ for any $\varepsilon>0\;,$
    \end{itemize}
    Then, $n^{-1/2}S_n \overset{\mathcal{D}}{\longrightarrow}\mathcal{N}(0,\sigma^2)$ as $n\to \infty.$
\end{theorem}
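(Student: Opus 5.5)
We prove the statement through the characterization of weak convergence by bounded Lipschitz test functions (portmanteau theorem). Write $X_n:=n^{-1/2}S_n$, $Y_u\sim\mathcal N(0,\sigma_u^2)$ and $Y\sim\mathcal N(0,\sigma^2)$. It suffices to show that $\mathbb E[h(X_n)]\to\mathbb E[h(Y)]$ for every bounded Lipschitz $h:\mathbb R\to\mathbb R$, with sup-norm $\|h\|_\infty$ and Lipschitz constant $\mathrm{Lip}(h)$.

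Fix such an $h$ and some $\varepsilon>0$. For every $u$ and $n$ we use the three-term split
\[
\big|\mathbb E h(X_n)-\mathbb E h(Y)\big|
\le \big|\mathbb E h(X_n)-\mathbb E h(V_{u,n})\big|
+\big|\mathbb E h(V_{u,n})-\mathbb E h(Y_u)\big|
+\big|\mathbb E h(Y_u)-\mathbb E h(Y)\big|.
\]
For the first term we split according to whether $|V_{u,n}-X_n|<\varepsilon$ or not. This gives the bound $\mathrm{Lip}(h)\,\varepsilon+2\|h\|_\infty\,\P(|V_{u,n}-X_n|\ge\varepsilon)$. The second term tends to $0$ as $n\to\infty$ for each fixed $u$, by (i). Taking $\limsup_{n\to\infty}$ therefore yields, for every $u$,
\[
\limsup_{n\to\infty}\big|\mathbb E h(X_n)-\mathbb E h(Y)\big|
\le \mathrm{Lip}(h)\,\varepsilon+2\|h\|_\infty\limsup_{n\to\infty}\P(|V_{u,n}-X_n|\ge\varepsilon)+\big|\mathbb E h(Y_u)-\mathbb E h(Y)\big|.
\]

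Next we let $u\to\infty$, and the third term must vanish. By (ii), $\sigma_u^2\to\sigma^2\in(0,\infty)$. Hence the characteristic functions satisfy $e^{-\sigma_u^2s^2/2}\to e^{-\sigma^2s^2/2}$ pointwise, and Lévy's continuity theorem gives $Y_u\xrightarrow{\mathcal D}Y$. Alternatively one can couple $Y_u=\sigma_uN$ and $Y=\sigma N$ with $N\sim\mathcal N(0,1)$ and bound the difference by $\mathrm{Lip}(h)|\sigma_u-\sigma|\,\mathbb E|N|$. In either case the third term tends to $0$ as $u\to\infty$ along the full sequence.

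Because (iii) only gives a $\liminf$ in $u$, we choose, for the fixed $\varepsilon$, a subsequence $u_k\to\infty$ along which $\limsup_n\P(|V_{u_k,n}-X_n|\ge\varepsilon)\to0$. Passing to the limit along $u_k$, the left-hand side, which does not depend on $u$, is bounded by $\mathrm{Lip}(h)\,\varepsilon$. Letting $\varepsilon\downarrow0$ gives $\mathbb E h(X_n)\to\mathbb E h(Y)$, and portmanteau concludes the proof.

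The only delicate point is the order of limits ($n$ first, then $u$ along an $\varepsilon$-dependent subsequence, then $\varepsilon$), together with the fact that (iii) involves a $\liminf$ rather than a limit. The subsequence choice handles this, and it is legitimate because $\varepsilon$ is fixed before $u_k$ is chosen.
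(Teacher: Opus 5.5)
Your proof is correct. The paper gives no proof of its own and only cites Billingsley, whose argument has exactly your structure: compare with $V_{u,n}$, let $n\to\infty$ using (i), then $u\to\infty$, then $\varepsilon\downarrow 0$. The only difference is that Billingsley phrases it with closed sets $F$ and their closed $\varepsilon$-neighbourhoods rather than bounded Lipschitz test functions. Your subsequence step is a genuine small gain: Billingsley assumes $\lim_{u}\limsup_n$, so your handling of the $\liminf_u$ in (iii) is what actually justifies the slightly weaker hypothesis as the paper states it.
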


For this, we define
\begin{equation}
\label{eq:Vu-def-multi}
V_{u,n}
:=
\frac{1}{\sqrt n}
\sum_{t=1}^n\sum_{j=0}^{u-1} W_{t,j}^{(n)}
\end{equation}
and verify conditions $(i), (ii), (iii)$ for $V_{u,n}$ and $S_n$.

\medskip

\paragraph{Proof of $(i)$}
For fixed $u\in\mathbb{N}$ we define, analogously to \cite{Furmanczyk},
\[
M_{t,n}(u)
:=
\frac{1}{\sqrt n}
\sum_{j=0}^{u-1} W_{t+j,j}^{(n)}
=
\frac{1}{\sqrt n}
\sum_{j=0}^{u-1}\sum_{m=1}^M 
a_{t+j,m}^{(n)}\,U_{t+j,j,m},
\qquad t=1,\dots,n.
\]
Reindexing as in \cite{Furmanczyk} shows that
\begin{equation}
\label{eq:Vu_decomposition-multi}
V_{u,n}
=
H_{u,n}
+
\sum_{t=1}^n M_{t,n}(u),
\end{equation}
where
\begin{equation}
\label{eq:H_un-multi}
H_{u,n}
:=
\frac{1}{\sqrt n}\sum_{j=0}^{u-1}
\bigg(
\sum_{t=1}^j W_{t,j}^{(n)}
-
\sum_{t=n+1}^{n+j} W_{t,j}^{(n)}
\bigg).
\end{equation}

For each fixed $u$, the sequence $(M_{t,n}(u))_{t\ge 1}$ is a martingale difference sequence with respect to the filtration $(\mathcal F_t)_{t\ge 1}$, satisfying 
\[
M_{t,n}(u)\in L^2,\qquad
M_{t,n}(u)\text{ is }\mathcal F_t\text{-measurable},\qquad
\E[M_{t,n}(u)\mid \mathcal F_{t-1}]=0.
\]
Indeed, $M_{t,n}(u)$ is a finite sum of $L^2$ functions, and each $U_{t+j,j,m}$ is $\mathcal F_{t}$-measurable and satisfies
\(
\E[U_{t+j,j,m}\mid\mathcal F_{t-1}]=0,
\)
hence linearity of the  conditional expectation implies 
that
$M_{t,n}(u)$ is $\mathcal F_t$-measurable
and $\E[M_{t,n}(u)\mid \mathcal F_{t-1}]=0$.  
For establishing (i), we apply the following central limit theorem for martingale difference arrays to $( M_{t,n}(u))_{t\geq 1}$:
\begin{theorem}[Theorem 1 in \cite{Pollard1984}, p. 171]
\label{thm:Pollard}
    Let $(\xi_{n,t})_{t \geq 1}$ be a martingale difference array. If, as $n\to \infty$,
    \begin{itemize}
        \item [(A)]  $\sum_{t=1}^n\mathbb{E}[ \xi_{n,t}^2\,|\, \mathcal{F}_{n,t-1}] \xrightarrow{\mathbb{P}} \sigma^2$, with $\sigma^2$ a positive constant, 
        \item [(B)] for every $\varepsilon >0$ it holds that $\sum_{t=1}^n\mathbb{E}[ \xi_{n,t}^2 \mathbf{1}(|\xi_{n,t}|>\varepsilon)\,|\, \mathcal{F}_{n,t-1}] \xrightarrow{\mathbb{P}} 0$,
    \end{itemize}
    then, it follows that
    $$ \sum_{t=1}^n {\xi_{n,t}}\xrightarrow{\mathcal{D}} \mathcal{N}(0,\sigma^2)\;.$$
\end{theorem}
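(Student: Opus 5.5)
The plan is to prove Theorem~\ref{thm:Pollard} by a characteristic-function argument. I would work with the compensated exponential martingale, following Brown (1971) and Hall--Heyde, Chapter~3, after a stopping-time truncation that bounds the conditional variances.

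Write $v_t:=\mathbb{E}[\xi_{n,t}^2\mid\mathcal F_{n,t-1}]$ and $V_{n,r}:=\sum_{t\le r}v_t$. The first step is \emph{truncation}. Let $\tau_n$ be the largest $r\le n$ with $V_{n,r}\le\sigma^2+1$. Since $V_{n,r}$ is $\mathcal F_{n,r-1}$-measurable (predictable), the indicator $\mathbf{1}(t\le\tau_n)=\mathbf{1}(V_{n,t}\le\sigma^2+1)$ is $\mathcal F_{n,t-1}$-measurable. Hence $\tilde\xi_{n,t}:=\xi_{n,t}\mathbf{1}(t\le\tau_n)$ is again a martingale difference array. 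Condition (A) gives $\P(\tau_n<n)\to0$, so $\sum_t\tilde\xi_{n,t}$ and $\sum_t\xi_{n,t}$ differ only on an event of vanishing probability. Moreover $\tilde\xi$ still satisfies (A) and (B), and in addition $\tilde V_n\le\sigma^2+1$ surely. So we may assume this bound throughout.

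The second step establishes two consequences of (B). First, $\max_t v_t\le\varepsilon^2+\sum_t\mathbb{E}[\xi_{n,t}^2\mathbf{1}(|\xi_{n,t}|>\varepsilon)\mid\mathcal F_{n,t-1}]$, so $\max_t v_t\to0$ in probability. Second, I use the Taylor bound $|e^{ix}-1-ix+x^2/2|\le\min(|x|^3/6,x^2)$. For fixed $s\in\mathbb{R}$ it gives
\[
\mathbb{E}[e^{is\xi_{n,t}}\mid\mathcal F_{n,t-1}]=1-\tfrac{s^2}{2}v_t+R_t ,\qquad
\sum_{t=1}^n|R_t|\le |s|^3\varepsilon\,\tilde V_n+s^2\sum_{t=1}^n\mathbb{E}[\xi_{n,t}^2\mathbf{1}(|\xi_{n,t}|>\varepsilon)\mid\mathcal F_{n,t-1}] ,
\]
where the bound is obtained by splitting on $|\xi_{n,t}|\le\varepsilon$. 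By (B) and the truncation, $\sum_t|R_t|\to0$ in probability, and it is bounded by a constant (at most $(2s^2+\ldots)(\sigma^2+1)$).

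The third step is the core argument. Set $S_r:=\sum_{t\le r}\xi_{n,t}$ and $Z_r:=\exp\bigl(isS_r+\tfrac{s^2}{2}V_{n,r}\bigr)$, so $|Z_r|\le e^{s^2(\sigma^2+1)/2}$. Since $V_{n,r}$ is predictable, we have
\[
\mathbb{E}[Z_r-Z_{r-1}\mid\mathcal F_{n,r-1}]=Z_{r-1}\bigl(e^{s^2v_r/2}(1-\tfrac{s^2}{2}v_r+R_r)-1\bigr).
\]
Using $|e^{x}(1-x)-1|\le Cx^2$ for $0\le x\le C'$, the modulus of the right-hand side is at most $C(v_r^2+|R_r|)$. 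Summing over $r$ gives
\[
|\mathbb{E}Z_n-1|\le C\,\mathbb{E}\Bigl[\max_t v_t\cdot \tilde V_n+\sum_t|R_t|\Bigr].
\]
The integrand tends to $0$ in probability and is uniformly bounded, so by dominated convergence $\mathbb{E}Z_n\to1$. Finally, I would write $\mathbb{E}e^{isS_n}=e^{-s^2\sigma^2/2}\mathbb{E}Z_n+\mathbb{E}\bigl[e^{isS_n}(1-e^{s^2(V_n-\sigma^2)/2})\bigr]$. The second term vanishes by (A) and bounded convergence. Lévy's continuity theorem then yields $S_n\xrightarrow{\mathcal D}\mathcal N(0,\sigma^2)$.

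I expect the main obstacle to be handling the random, only in-probability-controlled conditional variance. The step where this bites is showing $\mathbb{E}Z_n\to1$, which needs $|Z_r|$ to be bounded uniformly. That is exactly why the stopping-time truncation is set up first, and why I would check carefully that the truncation preserves the martingale difference property and both conditions (A) and (B).
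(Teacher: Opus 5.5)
Your proof is correct and complete. The key points all hold:
\begin{itemize}
\item Because $V_{n,\cdot}$ is nondecreasing, $\{t\le\tau_n\}=\{V_{n,t}\le\sigma^2+1\}\in\mathcal F_{n,t-1}$. So the truncated array is still a martingale difference array, satisfies (A) and (B), and has $\tilde V_n\le\sigma^2+1$ surely.
\item The Taylor bound gives $\sum_t|R_t|\to0$ in probability. It also gives a deterministic bound: the $x^2$ branch alone yields $\sum_t|R_t|\le s^2\tilde V_n\le s^2(\sigma^2+1)$, which is the clean form of the constant you left as $(2s^2+\ldots)(\sigma^2+1)$.
\item Predictability of $V_{n,r}$ makes the telescoping identity for $Z_r=\exp(isS_r+\tfrac{s^2}{2}V_{n,r})$ valid. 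Bounded convergence then gives $\E Z_n\to1$ and hence $\E e^{isS_n}\to e^{-s^2\sigma^2/2}$.
\end{itemize}

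The paper contains no proof to compare against. It quotes this statement from Pollard (1984) and uses it as a black box in the proof of condition (i). There, (A) is checked via the $L^2$ statements (A1)--(A2), and (B) via an unconditional Lindeberg bound plus Markov's inequality. Your argument is the classical compensated-exponential proof (Brown; Hall--Heyde, Ch.~3), so it supplies a self-contained justification of the tool the paper imports.

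Your proof never uses $\sigma^2>0$. With threshold $\sigma^2+1$ and limit $e^{-s^2\sigma^2/2}$, it goes through verbatim for $\sigma^2=0$, giving a degenerate limit. This matters for the paper's application: the positivity $\sigma_u^2>0$ in the display defining $\sigma_u^2$ is asserted rather than proved (it fails, for instance, if every $f_m$ is constant). Your version shows the martingale CLT step does not need that assertion.
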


\noindent
{\em Proof of (A).}
For proving (A) it suffices to show that 
\begin{align}
\label{eq:expectation_converges} \tag{A1}
  \mathbb{E}\!\left[\sum_{t=1}^n \mathbb{E}[ M_{t,n}^2(u) \mid \mathcal{F}_{t-1}]\right]
&\xrightarrow{n\to \infty}\sigma_u^2>0, \\
\label{eq:variance_vanishes}  \tag{A2}
\Var \!\left(  \sum_{t=1}^n \mathbb{E}[ M_{t,n}^2(u) \mid \mathcal{F}_{t-1}] \right)
&\xrightarrow{n\to \infty}0.
\end{align}

\noindent
\textit{Proof of} (A1). Due to stationarity of $U_{t,j,m}$ it holds that
\begin{align*}
  \mathbb{E}\!\left[\sum_{t=1}^n \mathbb{E}[ M_{t,n}^2(u) \mid \mathcal{F}_{t-1}]\right] =\sum_{t=1}^n \E[M_{t,n}^2(u)]
&=
\frac{1}{n}\sum_{t=1}^n
\sum_{j,k=0}^{u-1}\sum_{m,\ell=1}^M
a_{t+j,m}^{(n)}a_{t+k,\ell}^{(n)}
\E\big[U_{t+j,j,m}U_{t+k,k,\ell}\big]\\
&=
\sum_{j,k=0}^{u-1}\sum_{m,\ell=1}^M
\E\big[U_{1+j,j, m}U_{1+k,k,\ell}\big]\,
\frac{1}{n}\sum_{t=1}^n a_{t+j,m}^{(n)}a_{t+k,\ell}^{(n)}.
\end{align*}
By \eqref{eq:W1-multi} and Lemma \ref{lem:BnM1-multi}, for each fixed $h=k-j$ and pair $(m,\ell)$,
\[
\frac{1}{n}\sum_{t=1}^n a_{t+j,m}^{(n)}a_{t+k,\ell}^{(n)}
=
G_{n,m\ell}(h)+o(1),
\qquad
G_{n,m\ell}(h)
:=
\frac1n\sum_{s=1}^n a_{s,m}^{(n)} a_{s+h,\ell}^{(n)},
\]
and $G_{n,m\ell}(h)\to G_{m\ell}(h)$ as $n\to\infty$. 
It follows that
\begin{equation}
\label{eq:sigmas-multi}
\sum_{t=1}^n \E[M_{t,n}^2(u)]
\xrightarrow[n\to\infty]{}
\sigma_u^2
:=
\sum_{j,k=0}^{u-1}\sum_{m,\ell=1}^M
\E\big[U_{1+j,j,m}U_{1+k,k,\ell}\big]\,
G_{m\ell}(k-j)>0.
\end{equation}

\noindent
\textit{Proof of} (A2).
Recall that
\[
M_{t,n}(u)
=
\frac{1}{\sqrt{n}}\sum_{j=0}^{u-1}\sum_{m=1}^M a_{t+j,m}^{(n)} U_{t+j,j,m}\;.
\]
Define, for $j,k\in\{0,\dots,u-1\}$ and $m,\ell\in\{1,\dots,M\}$,
\[
R_t^{j,k;m,\ell}
:=
\mathbb E\big[U_{t+j,j,m}U_{t+k,k,\ell} \mid \mathcal F_{t-1}\big].
\]
Then, it holds that
\[
\sum_{t=1}^n \mathbb E\big[M_{t,n}^2(u) \mid \mathcal F_{t-1}\big]
=
\frac{1}{n}
\sum_{t=1}^n \sum_{j,k=0}^{u-1} \sum_{m,\ell=1}^M
a_{t+j,m}^{(n)} a_{t+k,\ell}^{(n)} R_t^{j,k;m,\ell}.
\]
Hence, it follows that
\begin{align}
&\Var\Big(\sum_{t=1}^n \mathbb E\big[M_{t,n}^2(u) \mid \mathcal F_{t-1}\big]\Big)
=
\Var\bigg(
\frac{1}{n}
\sum_{t=1}^n \sum_{j,k=0}^{u-1}\sum_{m,\ell=1}^M
a_{t+j,m}^{(n)} a_{t+k,\ell}^{(n)} R_t^{j,k;m,\ell}
\bigg)\nonumber\\
&=
\frac{1}{n^2}
\sum_{s=1}^n \sum_{t=1}^n
\Cov\bigg(
\sum_{j,k=0}^{u-1}\sum_{m,\ell=1}^M
a_{t+j,m}^{(n)} a_{t+k,\ell}^{(n)} R_t^{j,k;m,\ell},\, 
\sum_{r,\eta=0}^{u-1}\sum_{p,q=1}^M
a_{s+r,p}^{(n)} a_{s+\eta,q}^{(n)} R_s^{r,\eta;p,q}
\bigg)\nonumber\\
&=
\frac{1}{n^2}
\sum_{s,t=1}^n
\sum_{j,k,r,\eta=0}^{u-1}
\sum_{m,\ell,p,q=1}^M
a_{t+j,m}^{(n)} a_{t+k,\ell}^{(n)}
a_{s+r,p}^{(n)} a_{s+\eta,q}^{(n)}
\Cov\big(R_t^{j,k;m,\ell},R_s^{r,\eta;p,q}\big)\\
&\leq C_{u,M} \max_{\substack{j,k,r,\eta\\ m,\ell,p,q}} \frac{1}{n^2}
\sum_{s,t=1}^n
a_{t+j,m}^{(n)}a_{t+k,\ell}^{(n)}
a_{s+r,p}^{(n)}a_{s+\eta,q}^{(n)}\,
\Cov\big(R_t^{j,k;m,\ell},R_s^{r,\eta;p,q}\big)\;,
\label{eq:var-full-multi}
\end{align}
where $C_{u,M}$ is a constant that only depends on $u,M$. 
Fix indices $j,k,r,\eta\in\{0,\dots,u-1\}$ and $m,\ell,p,q\in\{1,\dots,M\}$  and define
\[
d_{n,t}
:=
\frac{a_{t+j,m}^{(n)} a_{t+k,\ell}^{(n)}}{n},
\qquad
\tilde d_{n,s}
:=
\frac{a_{s+r,p}^{(n)} a_{s+\eta,q}^{(n)}}{n},
\]
and abbreviate
\[
R_t:=R_t^{j,k;m,\ell},
\qquad
\tilde R_s:=R_s^{r,\eta;p,q}.
\]

For proving \eqref{eq:variance_vanishes} it then suffices to show that
\[
\sum_{s,t=1}^n 
d_{n,t}\,\tilde d_{n,s}\,\Cov(R_t,\tilde{R}_s)\longrightarrow 0, \ \text{as $n\rightarrow \infty$.}
\]
By stationarity
\(
\gamma_0:=\sqrt{\Var(R_1)\Var(\tilde{R}_1)}
=\sqrt{\Var(R_t)\Var(\tilde{R}_s)}.
\)
Set
\[
K_n:= 1\vee \frac{\sqrt{n}}{\max\limits_{\substack{u=1,\ldots,n\\v=1,\ldots,M}} |a_{u,v}^{(n)}|}.
\]
Applying the Cauchy–Schwarz inequality yields
\begin{align*}
\sum_{s,t=1}^n
d_{n,t}\,\tilde{d}_{n,s}\,\Cov(R_t,\tilde{R}_s)
=&
\sum_{\substack{s,t=1\\ |s-t|\le K_n}}^n
d_{n,t}\,\tilde{d}_{n,s}\,\Cov(R_t,\tilde{R}_s)
+
\sum_{\substack{s,t=1\\ |s-t|> K_n}}^n
d_{n,t}\,\tilde{d}_{n,s}\,\Cov(R_t,\tilde{R}_s)
\\[0.5em]
\leq &
\gamma_0 
\sum_{\substack{s,t=1\\ |s-t|\le K_n}}^n
|d_{n,t}\,\tilde{d}_{n,s}|
+
\max_{\substack{s,t=1,\ldots,n\\ |s-t|>K_n}}
|\Cov(R_t,\tilde{R}_{s})|
\sum_{\substack{s,t=1\\ |s-t|> K_n}}^n
|d_{n,t}\,\tilde{d}_{n,s}| \\
=:&I_n+ J_n.
\end{align*}
We now show that both $I_n$ and $J_n$ converge to zero.  
Since $a_{t,\cdot}^{(n)}=0$ for $t\geq n+1$,  
\[
\max_{1\le t\le n}|d_{n,t}|
\le 
\max_{\substack{1\le u\le n\\1\leq r\leq M} }\frac{|a_{u,r}^{(n)}|^2}{n}.
\]
Moreover, by the Cauchy–Schwarz inequality, and since $a_{t,\cdot}^{(n)}=0$ for $t\geq n+1$,
\begin{align}
 \sum_{s=1}^n|\tilde d_{n,s}|
 &=\frac{1}{n}\sum_{s=1}^n |a_{s+r,p}^{(n)}||a_{s+\eta,q}^{(n)}| \notag \\
 &
 \leq 
 \frac{1}{n}
 \sqrt{\sum_{s=1}^n \left(a_{s+r,p}^{(n)}\right)^2
       \sum_{t=1}^n \left(a_{t+\eta,q}^{(n)}\right)^2} \notag \\
 &\leq 
 \sqrt{\frac{1}{n}\sum_{s=1}^n \left(a_{s,p}^{(n)}\right)^2 \frac{1}{n}\sum_{t=1}^n \left(a_{t,q}^{(n)}\right)^2  }. \label{eq:sum_d}
\end{align}
The condition $|s-t|\le K_n$ implies  
\(t - K_n \le s \le t + K_n\).
Thus for each fixed \(t\), the number of such \(s\) is at most \(2K_n+1\). Hence, it holds that
\[
\sum_{\substack{s,t=1\\ |s-t|\le K_n}}^{n} |\tilde d_{n,s}|
=
\sum_{t=1}^{n}
\left(\sum_{\substack{s=1\\ |s-t|\le K_n}}^{n} 1\right)
|\tilde d_{n,t}|
\le
(2K_n+1)\sum_{t=1}^n |\tilde d_{n,t}|.
\]
Since  
\(2K_n+1 \le 3K_n = 3\vee 3\sqrt{n}/\max_{u,r}|a_{u,r}^{(n)}|\),

\begin{align*}
I_n
&\le \gamma_0\max_{1\leq t \leq n} |d_{n,t}| \, \sum_{\substack{s,t=1 \\ |s-t|\le K_n}}  |\tilde d_{n,s}| \\
&\le \gamma_0
\frac{\max\limits_{1\leq u \leq n}\max\limits_{1\leq m \leq M} |a_{u,m}^{(n)}|^2}{n}\,
(2K_n+1)\sum_{t=1}^n |\tilde d_{n,t}| \\[0.8em]
&\le \gamma_0
\left(\frac{3\max\limits_{1\leq u \leq n}\max\limits_{1\leq m \leq M} |a_{u,m}^{(n)}|}{\sqrt{n}}\vee \frac{3\max\limits_{1\leq u \leq n}\max\limits_{1\leq m \leq M} |a_{u,m}^{(n)}|^2}{n}\right)
\cdot
 \sqrt{\frac{1}{n}\sum_{s=1}^n \left(a_{s,p}^{(n)}\right)^2 \frac{1}{n}\sum_{t=1}^n \left(a_{t,q}^{(n)}\right)^2  }\;.
\end{align*}
By \eqref{eq:W1-multi} and \eqref{eq:W2-multi}, this converges to \(0\), so \(I_n\to 0\).
For \(J_n\), again by \eqref{eq:sum_d} and \eqref{eq:W1-multi},
\begin{align*}
    \sum_{\substack{s,t=1\\ |s-t|> K_n}}^n
\left|d_{n,t}\,\tilde{d}_{n,s}\right|
\le 
\sum_{t=1}^n \left|d_{n,t}\right|\;\sum_{s=1}^n\left|\tilde{d}_{n,s}\right|
\end{align*}
is bounded, hence it suffices to show that setting $\gamma_k:=\Cov(R_1, \tilde{R}_{k+1})$ one has
\[
    \max_{\substack{s,t=1\\ |s-t|> K_n}}
      \Cov(R_t,\tilde{R}_{s})
    = \max_{k>K_n} \gamma_k \xrightarrow{n\to \infty}0.
\] 
For this, we consider the following truncated versions of $\mathbf{X}_t$, $U_{t,j,m}$ and $R_{t}^{j,k;m,\ell}$:
\begin{align*}
&\mathbf{X}_t^{(L)}=\sum_{h=0}^L  A_h \mathbf{Z}_{t-h}, 
\\
&U_{t,j,m}^{(L)}
   :=\E\!\left[f_m(\mathbf{X}_t^{(L)}) \mid \mathcal{F}_{t-j}\right]
    - \E\!\big[f_m(\mathbf{X}_t^{(L)}) \mid \mathcal{F}_{t-j-1}\big],  
\\
&R_t^{j,k;m,\ell(L)}
   :=\E\!\left[U_{t+j,j,m}^{(L)}\,U_{t+k,k,\ell}^{(L)}
      \mid \mathcal F_{t-1}\right].
\end{align*}
Due to the Cauchy–Schwarz inequality, the i.i.d.\ property of \(\mathbf{Z}_t\), 
and sub-multiplicativity of the matrix norm,
for any integers \(h,h'\),
\[
\E \big[(A_h\mathbf{Z}_{t-h})^{\top}A_{h'} \mathbf{Z}_{t-h'}\big] 
\le 
\|A_h\|\,\|A_{h'}\|\,
\E[\|\mathbf{Z}_0\|^2].
\]
Thus, using
$\mathbf{X}_t-\mathbf{X}_t^{(L)}
=\sum_{h=L+1}^{\infty}A_h \mathbf{Z}_{t-h}$, it follows that
\[
\E\!\left[\|\mathbf{X}_t-\mathbf{X}_t^{(L)}\|^2\right]
\le 
\E[\|\mathbf{Z}_{0}\|^2]
\left(\sum_{h=L+1}^{\infty}\|A_h\| \right)^2\;.
\]
The right-hand side of the above inequality tends to 0 as $L\rightarrow \infty$ due to the assumption $\sum_{h=0}^{\infty}\|A_h\|<\infty$.
For any $\sigma$-field $\mathcal{G}$, and Lipschitz-continuous  $f$  with Lipschitz-constant $\text{Lip}(f)$, it then follows that 
\begin{align}    \left|\E(f(\mathbf{X}_t)|\mathcal{G})- \E(f(\mathbf{X}_t^{(L)})|\mathcal{G})\right|
\leq \E\left[\left|f(\mathbf{X}_t)-f(\mathbf{X}_t^{(L)})\right|\left|\right.\mathcal{G}\right]\leq \text{Lip}(f)\E\left[\|\mathbf{X}_t-\mathbf{X}_t^{(L)}\|\, \Big|\mathcal{G}\right]\;.
\label{eq:truncated_lip} \tag{T-Lip}
\end{align}
It follows that
\begin{align*}
  \left|U_{t,j,m}-U_{t,j,m}^{(L)}\right|  
  \leq \text{Lip}(f_m)\left(\E\left[\|\mathbf{X}_t-\mathbf{X}_t^{(L)}\|\, \Big|\mathcal{F}_{t-j}\right]+\E\left[\|\mathbf{X}_t-\mathbf{X}_t^{(L)}\|\,\Big|\mathcal{F}_{t-j-1}\right]\right)\;.
\end{align*}
Due to Jensen's inequality and using that $(a+b)^2\leq 2(a^2+b^2)$,
\begin{align}
 \E\!\left[ \left|U_{t,j,m}-U_{t,j,m}^{(L)}\right|^2\right]
&\le  
2\,\mathrm{Lip}^2(f_m)\,\E\!\left[
     \big(\E[\|\mathbf{X}_t-\mathbf{X}_t^{(L)}\|\mid\mathcal{F}_{t-j}]\big)^2
     + \big(\E[\|\mathbf{X}_t-\mathbf{X}_t^{(L)}\|\mid\mathcal{F}_{t-j-1}]\big)^2
   \right] \nonumber\\
&\le  
4\,\mathrm{Lip}^2(f_m)\,\E\!\left[\|\mathbf{X}_t-\mathbf{X}_t^{(L)}\|^2\right]
\le 
4\,\mathrm{Lip}^2(f_m)\,\E[\|\mathbf{Z}_0\|^2]
\Big(\sum_{h>L}\|A_h\|\Big)^2.
\label{truncation bound}
\end{align}
Moreover, 
\begin{align*}
R_t^{j,k;m,\ell}-R_t^{j,k;m,\ell,(L)}
=&
\E\!\left[(U_{t+j,j,m}-U_{t+j,j,m}^{(L)})\,U_{t+k,k,\ell}\mid\mathcal F_{t-1}\right]
\;+\;
\E\!\left[U_{t+j,j,m}^{(L)}\,(U_{t+k,k,\ell}-U_{t+k,k,\ell}^{(L)})\mid\mathcal F_{t-1}\right] \\
=:&\mathcal{A}+\mathcal{B}\;,
\end{align*}
and hence
\[
\E\!\left[(R_t^{j,k;m,\ell}-R_t^{j,k,m,\ell,(L)})^2\right]
\le 2\,\E[\mathcal{A}^2] + 2\,\E[\mathcal{B}^2].
\]
By Lemma~\ref{lemma:lemma3.4Furmanczyk} (see inequality~\eqref{eq:almost_sure_bound_last}), 
\(
|U_{t+k,k,\ell}|\le C_\ell\,\|A_k\|\big(\|\mathbf{Z}_{t-1}\|+C\big)
\)
almost surely.  
Since $\mathbf{Z}_{t-1}$ is  $\mathcal F_{t-1}$-measurable, we obtain the almost sure bound
\[
\E[U_{t+k,k,\ell}^2 \mid \mathcal F_{t-1}]
\le C_\ell\,\|A_k\|^2 \big(\|\mathbf{Z}_{t-1}\|+C\big)^2 .
\]
Applying the Cauchy--Schwarz inequality to $\mathcal{A}$ gives
\begin{align*}
|\mathcal{A}|
\le&
\Big(\E[(U_{t+j,j,m}-U_{t+j,j,m}^{(L)})^2\mid\mathcal F_{t-1}]\Big)^{1/2}
\Big(\E[U_{t+k,k,\ell}^2\mid\mathcal F_{t-1}]\Big)^{1/2}\\
\le & 
C_\ell\,\|A_k\|\Big(\E[(U_{t+j,j,m}-U_{t+j,j,m}^{(L)})^2\mid\mathcal F_{t-1}]\Big)^{1/2} \big(\|\mathbf{Z}_{t-1}\|+C\big).
\end{align*}
Squaring, taking expectations, and using the tower property yields
\[
\E[\mathcal{A}^2]
\le 
C_\ell\,\|A_k\|^2\,
\E\!\left[(U_{t+j,j,m}-U_{t+j,j,m}^{(L)})^2\right]\;,
\]
where the constants got absorbed into $C_\ell$.
By \eqref{truncation bound} it then follows that
\[
\E[\mathcal{A}^2]
\le 
C_{m,\ell}\|A_k\|^2\Big(\sum_{h>L}\|A_h\|\Big)^2.
\]
The same reasoning as in Lemma \ref{lemma:lemma3.4Furmanczyk} applies to \(U^{(L)}_{t+j,j,m}\) as well, because the truncated process \( (\mathbf{X}_t^{(L)})_{t\geq 1}\) meets all the hypotheses of Lemma \ref{lemma:lemma3.4Furmanczyk}. This yields
\[
\E[\mathcal{B}^2]
\le 
C_{m,\ell}\,\|A_j\|^2\Big(\sum_{h>L}\|A_h\|\Big)^2.
\]
Note that $C_{m,\ell}$ depends on $\text{Lip}(f_m)$ and $\text{Lip}(f_\ell)$, hence $\max_{1\leq m, \ell \leq M} C_{m,\ell}<\infty$. Moreover, due to finiteness of $\sum_j\|A_j\|<\infty$, there exists $C$ such that, for all $j=0,1,\ldots$,  $\|A_j\|<C$. 
Combining these bounds, we conclude that
\begin{align*}
\E\!\left[(R_t^{j,k;m,\ell}-R_t^{j,k;m,\ell,(L)})^2\right]
\le&
C_{m,\ell}\Big(\sum_{h>L}\|A_h\|\Big)^2\bigl(\|A_j\|^2+\|A_k\|^2\bigr)\\
\le& C_{m,\ell}\Big(\sum_{h>L}\|A_h\|\Big)^2\;,
\end{align*}
which is the key ingredient for controlling  
$\Cov(R_t,\tilde R_s)= \Cov(R_t^{j,k;m,\ell}, R_s^{r,\eta;p,q})$ when $|t-s|$ is large.  Note that
\begin{align}
\label{eq:cov_partial}
 \Cov(R_t^{j,k;m,\ell}, R_s^{r,\eta;p,q})
  =& \Cov\left(R_t^{j,k;m,\ell} - R_t^{j,k;m,\ell,(L)},\, R_s^{r,\eta;p,q}\right) \nonumber\\ 
   &+ \Cov\left(R_t^{j,k;m,\ell,(L)},\, R_s^{r,\eta;p,q} - R_s^{r,\eta;p,q,(L)}\right) \nonumber\\ &+ \Cov\left(R_t^{j,k;m,\ell,(L)},\, R_s^{r,\eta,p,q,(L)}\right) \;.
\end{align} 
Since 
\(
\mathbf{X}^{(L)}_{t+j}
=\sum_{h=0}^{L} A_h \mathbf{Z}_{t+j-h},
\)
the variable \(f_m(\mathbf{X}^{(L)}_{t+j})\) depends only on 
\(\mathbf{Z}_{t+j},\dots,\mathbf{Z}_{t+j-L}\).
All variables $\mathbf{Z}_i$ with index \(i\le t\) are measurable with respect to  \(\mathcal{F}_t\), while those with index \(i>t\) are independent of \(\mathcal{F}_t\).  
Hence, for this quantity,  conditioning on \(\mathcal{F}_t\) is the same as conditioning on the block \((\mathbf Z_t,\dots,\mathbf Z_{t+j-L})\), that means
\[
\mathbb{E}[f_m(\mathbf{X}^{(L)}_{t+j})\mid\mathcal{F}_t]
=
\mathbb{E}\!\left[f_m(\mathbf{X}^{(L)}_{t+j})\mid 
\mathbf Z_t,\dots,\mathbf Z_{t+j-L}\right].
\]
Due to the independence of  $\mathbf{Z}_t$, $t\geq 1$, it follows that
for $L>j$
\begin{align*}
U^{(L)}_{t+j, j,m}=&\mathbb{E}\big[f_m(\mathbf{X}^{(L)}_{t+j}) \mid \mathcal{F}_{t}\big]
- \mathbb{E}\big[f_m(\mathbf{X}^{(L)}_{t+j}) \mid \mathcal{F}_{t-1}\big]\\
=&
\mathbb{E}\big[f_m(\mathbf{X}^{(L)}_{t+j}) \mid \mathbf{Z}_t, \ldots, \mathbf{Z}_{t+j-L}\big]
- \mathbb{E}\big[f_m(\mathbf{X}^{(L)}_{t+j}) \mid \mathbf{Z}_{t-1}, \ldots, \mathbf{Z}_{t+j-L}\big],
\end{align*}
that is, $U^{(L)}_{t+j,j,m}$ is a function of $\mathbf{Z}_{t}, \ldots, \mathbf{Z}_{t+j-L}$ and, analogously, $U^{(L)}_{t+k,k,\ell}$ is a function of $\mathbf{Z}_{t}, \ldots, \mathbf{Z}_{t+k-L}$. 
It follows that $R_t^{j,k,(L)}$ depends only on the innovations 
$\mathbf Z_{t},\ldots,\mathbf Z_{t+\min(j,k)-L}$, while 
$R_s^{r,\ell,(L)}$ depends only on 
$\mathbf Z_{s},\ldots,\mathbf Z_{s+\min(r,\ell)-L}$.  
These two innovation blocks are disjoint (hence the quantities are independent) whenever
\(
t+\min(j,k)-L > s
\quad\text{or}\quad
s+\min(r,\ell)-L > t,
\)
in which case 
$\Cov(R_t^{j,k;m,\ell,(L)},R_s^{r,\eta;p,q,(L)})=0$.  
Therefore, a sufficient condition ensuring non-overlap is
\(
|t-s| > 2L + \max\{j,k,r,\ell\}.
\)

To bound the remaining two summands in \eqref{eq:cov_partial}, we use that the Cauchy–Schwarz inequality and $\Var(X)\leq E[X^2]$ imply 
$|\Cov(U,V)|\le(\mathbb{E}|U-\mathbb{E}[U]|^2)^{1/2}(\mathbb{E}|V-\mathbb{E}[V]|^2)^{1/2}
\le(\mathbb{E}[|U|^2])^{1/2}(\mathbb{E}[|V|^2])^{1/2}$. To apply this bound to $\Cov(R_t^{t,j;m,\ell},R_s^{r,\eta;p,q})$, we first verify that 
$\mathbb{E}[(R_t^{j,k;m,\ell})^2]$ and $\mathbb{E}\left[(R_s^{r,\eta;,p,q})^2\right]$ are uniformly bounded. 
Recall that 
$R_s^{r,\eta;p,q}=\mathbb{E}[U_{s+r,r,p}\,U_{s+\eta,\eta,q}\mid\mathcal{F}_{s-1}]$. 
From \eqref{eq:almost_sure_bound_last} in the proof of Lemma~\ref{lemma:lemma3.4Furmanczyk}, for $ \mathcal{Z}_s:=\|\mathbf{Z}_{s-1}\| + \sqrt{\E[\|\mathbf{Z}_s\|^2]}$, the terms $U_{s+k,k, l}$ satisfy 
the almost sure bound $|U_{s+k,k, l}|\le C_l\|A_k\|\mathcal{Z}_s$, where 
$\mathcal{Z}_s$ depends only on $Z_s$ and satisfies $\mathbb{E}[\mathcal{Z}_s^2]<\infty$. 
It follows that 
\begin{align*}
|R_s^{r,\eta;p,q}|^2
\leq C_{p}C_q \|A_r\|\|A_{\eta}\|
\E\left(\mathcal{Z}_s^2\left|\right. \mathcal{F}_{s-1}\right)
\end{align*}
and therefore $\E\left[|R_s^{r,\eta;p,q}|^2\right]\leq C_{p, q, r, \eta}$ for some constant $ C_{p, q, r, \eta}$.

The same argument applies to the truncated terms $R_t^{j,k;m,\ell,(L)}$.

With these bounds in place, applying the  Cauchy–Schwarz inequality to 
\eqref{eq:cov_partial} for $|s-t|>2L+\max\{j,k,r,\ell\}$ yields
\[
|\Cov(R_t,\tilde R_s)|
\le C
\left(\sum_{h>L}\|A_h\|\right)\;.
\]
for some constant $C$.
  It follows that $\Cov(R_t, \tilde R_s)$ tends to $0$ as $|s-t|\rightarrow \infty$, since $\sum_{h>L}\|A_h\|\rightarrow 0$ as $L\rightarrow \infty$.

\textit{Proof of }(B) 
Our goal is to prove that
\[
\sum_{t=1}^n \mathbb{E}\left[ M_{t,n}^2(u) \, \mathbf{1}\left(|M_{t,n}(u)| >   \varepsilon\right) \,\middle|\, \mathcal{F}_{t-1} \right] \xrightarrow{\mathbb{P}} 0.
\]
For this, we fix \( \delta > 0 \). Then, by Markov's inequality  and the tower property for conditional expectations, it follows that
\begin{align*}
&\mathbb{P} \left( \sum_{t=1}^n \mathbb{E} \left[ M_{t,n}^2(u) \, \mathbf{1}\left(|M_{t,n}(u)| >   \varepsilon\right) \,\middle|\, \mathcal{F}_{t-1} \right]  > \delta \right) \\
&\leq \frac{1}{\delta}  \sum_{t=1}^n \mathbb{E} \left[ \mathbb{E} \left[ M_{t,n}^2(u) \, \mathbf{1}\left(|M_{t,n}(u)| >   \varepsilon\right) \,\middle|\, \mathcal{F}_{t-1} \right] \right]  \\
&= \frac{1}{\delta}  \sum_{t=1}^n \mathbb{E} \left[ M_{t,n}^2(u) \, \mathbf{1}\left(|M_{t,n}(u)| >   \varepsilon\right) \right] \;. 
\end{align*}
It remains to show that the right hand side vanishes as $n\to \infty.$ For this, we  write
\[
\mathcal{Z}_{t} := \|\mathbf{Z}_{t-1}\| + \sqrt{\E[\|\mathbf{Z}_0\|^2]}.
\]

From the Lipschitz continuity of each $f_m$ and  \eqref{eq:almost_sure_bound_last}
in the proof of Lemma~\ref{lemma:lemma3.4Furmanczyk}, we have for some constant $C>0$ (depending on $M$, the Lipschitz constant of $f_m$ and $\E[\|\mathbf Z_0\|^2]$),
\[
|U_{t+j,j,m}|
\le
C\,\|A_j\|\,
\mathcal{Z}_t, 
\]
and thus
\begin{align*}
|M_{t,n}(u)|
&\le
\frac1{\sqrt n}
\sum_{j=0}^{u-1}\sum_{m=1}^M |a_{t+j,m}^{(n)}|\,|U_{t+j,j,m}|\\
&\le
\frac{C}{\sqrt n}
\sum_{j=0}^{u-1}\sum_{m=1}^M |a_{t+j,m}^{(n)}|\|A_j\|\mathcal Z_t
\le
\frac{C}{\sqrt n}\left(\sum_{j=0}^{u-1}\max_{m\in \{1, \ldots, M\}}|a_{t+j,m}^{(n)}|\right)\mathcal Z_t.
\end{align*}
 By assumption \eqref{eq:W2-multi}
\[
\max_{1\le t\le n}\max_{1\le m\le M} \big|a_{t,m}^{(n)}\big|
=o(\sqrt{n}),
\]
so that, with $\mathcal{C} := \sup_{j\ge 0}\,\|A_j\|\,\max_{1\leq m\leq M}\mathrm{Lip}(f_m)$, 
\[
\gamma_n
:=
\frac{\sqrt n\,\varepsilon}{Cu\max_{1\le t\le n}\max_{1\le m\le M} \big|a_{t,m}^{(n)}\big|}
\xrightarrow[n\to\infty]{}\infty. 
\]

From the bound above we obtain
\[
M_{t,n}^2(u)\mathbf 1(|M_{t,n}(u)|>\varepsilon)
\le
\frac{C}{n}\Big( \sum_{m=1}^M\sum_{j=0}^{u-1} |a_{t+j,m}^{(n)}|\Big)^2
\mathcal Z_t^2
\,\mathbf 1(\mathcal Z_t>\gamma_n).
\]
It follows that
\begin{align*}
\sum_{t=1}^n
\E\big[ M_{t,n}^2(u) \mathbf 1(|M_{t,n}(u)|>\varepsilon)\big]
&\le
\frac{C^2}{n}
\sum_{t=1}^n
\Big( \sum_{m=1}^M\sum_{j=0}^{u-1}|a_{t+j,m}^{(n)}|\Big)^2
\sup_{\ell \in \mathbb{Z}}\E\big[\mathcal Z_\ell^2\mathbf 1(\mathcal Z_\ell>\gamma_n)\big].
\end{align*}
An application of the Cauchy–Schwarz inequality yields
\[
\Big(\sum_{m=1}^M\sum_{j=0}^{u-1}|a_{t+j,m}^{(n)}|\Big)^2
\le
u M \sum_{m=1}^M
\sum_{j=0}^{u-1} (a_{t+j,m}^{(n)})^2\;,
\]
so that by assumption \eqref{eq:W1-multi},
\[
\frac{1}{n}\sum_{t=1}^n\Big(\sum_{m=1}^M\sum_{j=0}^{u-1}|a_{t+j,m}^{(n)}|\Big)^2
\le
C_{u,M} \sum_{m=1}^M \left( \frac{1}{n}\sum_{t=1}^{n}(a_{t,m}^{(n)})^2\right)
=\mathcal{O}(1)\;.
\] 
Moreover $(\mathcal Z_\ell)_{\ell\in\mathbb{Z}}$ has uniformly bounded second moments and is uniformly integrable (by the moment assumptions on $\mathbf Z_0$), so that
\[
\sup_{\ell\in \mathbb{Z}}\E\big[\mathcal Z_\ell^2\mathbf 1(\mathcal Z_\ell>\gamma_n)\big]
\xrightarrow[n\to\infty]{}0.
\]
\paragraph{Proof of $(ii)$}
By \eqref{eq:sigmas-multi},
\[
\sigma_u^2
=
\sum_{j,k=0}^{u-1}\sum_{m,\ell=1}^M
\E[U_{1+j,j,m}U_{1+k,k,\ell}]\,
G_{m\ell}(k-j).
\]
Letting $u\to\infty$, we obtain
\[
\sigma^2
:=
\lim_{u\to\infty}\sigma_u^2
=
\sum_{j,k=0}^{\infty}\sum_{m,\ell=1}^M
\E[U_{1+j,j,m}U_{1+k,k,\ell}]\,
G_{m\ell}(k-j)\;.
\]
The limit $\sigma^2$ exists and it is well defined as the series $\sum_{j,k=0}^{\infty}\sum_{m,\ell=1}^M
\E[U_{1+j,j,m}U_{1+k,k,\ell}]\,
G_{m\ell}(k-j)$ converges. To prove this, recall that if a series of absolute values converges, then the original series also converges. Since the partial sums of the absolute values form a non-decreasing sequence, the existence of any finite upper bound ensures their convergence, and thus the convergence of the original series. By the Cauchy-Schwarz inequality,
\[
\bigl|G_{n,m\ell}(h)\bigr|
=\Bigl|\frac1n\sum_{t=1}^n a^{(n)}_{t,m}\,a^{(n)}_{t+h,\ell}\Bigr|
\le \Bigl(\frac1n\sum_{t=1}^n (a^{(n)}_{t,m})^2\Bigr)^{1/2}
     \Bigl(\frac1n\sum_{t=1}^n (a^{(n)}_{t+h,\ell})^2\Bigr)^{1/2}.
\]
By the fact that $a^{(n)}_{s,\ell}=0$ for $s\notin\{1,\dots,n\}$,
\[
\frac1n\sum_{t=1}^n (a^{(n)}_{t,m})^2 = G_{n,mm}(0),
\qquad
\frac1n\sum_{t=1}^n (a^{(n)}_{t+h,\ell})^2 = G_{n,\ell\ell}(0),
\]
hence $\ |G_{n,m\ell}(h)|\le \sqrt{G_{n,mm}(0)}\,\sqrt{G_{n,\ell\ell}(0)}$.
Letting $n\to\infty$ and using \eqref{eq:W1-multi} for $(m,\ell,h)$, $(m,m,0)$ and $(\ell,\ell,0)$ yields
\[
|G_{m\ell}(h)|\le \sqrt{G_{mm}(0)}\,\sqrt{G_{\ell\ell}(0)}.
\]
Furthermore by Lemma \ref{lemma:lemma3.4Furmanczyk}
\[
|\E[U_{1+j,j,m}U_{1+k,k,\ell}]|
\le
C\,\|A_j\|\|A_k\|\;.
\] 

Hence, given that $\sum_j\|A_j\|<\infty$ we obtain by Assumption~\ref{assum1:M1}, 
\begin{align*}
\sum_{j,k=0}^\infty\sum_{m,\ell=1}^M
\big|\E[U_{1+j,j,m}U_{1+k,k,\ell}]\,G_{m\ell}(k-j)\big|
&\le
C \sqrt{G_{\ell, \ell}(0})\sqrt{G_{m,m}(0})
\sum_{j,k=0}^\infty \|A_j\|\|A_k\| \\
&=
C \Big(\sum_{j=0}^\infty\|A_j\|\Big)^2 <\infty,
\end{align*}
Thus $\sigma^2$ is well-defined and finite. Moreover, $\sigma^2$ can be rewritten as
$$\sigma^2=\sum_{j,k=0}^{\infty}\sum_{m,\ell=1}^M
\E[U_{1+j,j,m}U_{1+k,k,\ell}]\,
G_{m\ell}(k-j)\
= \sum_{w\in \mathbb{Z}} \sum_{m,\ell=1}^MG_{m\ell}(w)\sum_{s,p=0}^\infty \mathbb{E}[ U_{1,s,m}  U_{1+w,p,\ell}] \;.
$$
The reindexing above only reformulates the double sum and most terms in the inner $\sum_{s,p}$ vanish by martingale orthogonality, so effectively only the pairs $(s,p)$ that correspond to the same innovation contribute (specifically this is the content of Lemma 3.5 in  \cite{Furmanczyk}, first equation after (5.2) of its proof). 
Since
$$
U_{t,j,m}
:=\E\big[f_m(\mathbf{X}_t)\mid\mathcal{F}_{t-j}\big]-\E\big[f_m(\mathbf{X}_t)\mid\mathcal{F}_{t-j-1}\big],
$$
and by the martingale decomposition
$
f_m(\mathbf{X}_t)-\E [f_m(\mathbf{X}_t)]=\sum_{j=0}^\infty U_{t,j,m}$ in $L^2(\Omega, \mathcal{F}, \mathbb{P}),$ we have
$$
\sum_{s,p=0}^\infty\E[U_{1,s,m}\,U_{1+w,p,\ell}]
=\E\big[(f_m(\mathbf{X}_1)-\E [f_m(\mathbf{X}_1)])(f_\ell(\mathbf{X}_{1+w})-\E [f_\ell(\mathbf{X}_{1+w})])\big].
$$
Putting all pieces together,
\[
\sigma^2
=
\sum_{w\in\mathbb{Z}}
\sum_{m,\ell=1}^M
G_{m\ell}(w)\,\Cov\big(f_m(\mathbf X_1),f_\ell(\mathbf X_{1+w})\big),
\]
which coincides with the variance formula stated in Theorem~\ref{thm:main}.

\medskip
\paragraph{Proof of $(iii)$}
We finally show that the truncation error vanishes in probability. By Chebyshev's inequality,
\begin{align*}
\P\Big(|V_{u,n}-n^{-1/2}S_n|\ge\varepsilon\Big)
&\le
\frac{1}{\varepsilon^2 n}
\E\Big[
\Big(\sum_{t=1}^n\sum_{j=u}^\infty W_{t,j}^{(n)}\Big)^2
\Big]\\
&=
\frac{1}{\varepsilon^2 n}
\sum_{t,s=1}^n\sum_{i,j=u}^\infty
\E\big[ W_{t,j}^{(n)} W_{s,i}^{(n)}\big].
\end{align*}
Expanding $W_{t,j}^{(n)}=\sum_m a_{t,m}^{(n)}U_{t,j}^{(m)}$ and using stationarity and orthogonality of the $U_{t,j}^{(m)}$ (Lemma~6.4 of \cite{ho1997limit} applied componentwise),
\[
\E[W_{t,j}^{(n)}W_{s,i}^{(n)}]
=
\sum_{m,\ell=1}^M a_{t,m}^{(n)}a_{s,\ell}^{(n)}
\E\big[U_{t,j}^{(m)}U_{s,i}^{(\ell)}\big]
=
\sum_{m,\ell=1}^M a_{t,m}^{(n)}a_{t-j+i,\ell}^{(n)}
\E\big[U_{t,j}^{(m)}U_{t-j+i,i}^{(\ell)}\big],
\]
so that
\begin{align*}
\P\Big(|V_{u,n}-n^{-1/2}S_n|\ge\varepsilon\Big)
&\le
\frac{1}{\varepsilon^2 n}
\sum_{t=1}^n\sum_{i,j=u}^\infty
\sum_{m,\ell=1}^M
\big|a_{t,m}^{(n)}a_{t-j+i,\ell}^{(n)}\big|\,
\E\big|U_{t,j}^{(m)}U_{t-j+i,i}^{(\ell)}\big|.
\end{align*}
By the Cauchy–Schwarz inequality and Lemma~3.4 of \cite{Furmanczyk},
\[
\E\big|U_{t,j}^{(m)}U_{t-j+i,i}^{(\ell)}\big|
\le
\sqrt{\E\big[ (U_{t,j}^{(m)})^2\big]\,\E\big[(U_{t-j+i,i}^{(\ell)})^2\big]}
\le
C\,\|A_j\|\|A_i\|,
\]
for some constant $C$ independent of $t,i,j,m,\ell$. Thus
\begin{align*}
\P\Big(|V_{u,n}-n^{-1/2}S_n|\ge\varepsilon\Big)
&\le
\frac{C}{\varepsilon^2 n}
\sum_{i,j=u}^\infty \|A_i\|\|A_j\|
\sum_{t=1}^n
\sum_{m,\ell=1}^M
\big|a_{t,m}^{(n)}a_{t+i-j,\ell}^{(n)}\big|.
\end{align*}
Using the Cauchy–Schwarz inequality and \eqref{eq:W1-multi},
\begin{align*}
\frac{1}{n}\sum_{t=1}^n
\sum_{m,\ell}|a_{t,m}^{(n)}a_{t+h,\ell}^{(n)}|
\le&
C_M\Big(\frac{1}{n}\sum_{t=1}^n\sum_m (a_{t,m}^{(n)})^2\Big)^{1/2}
\Big(\frac{1}{n}\sum_{t=1}^n\sum_\ell(a_{t+h,\ell}^{(n)})^2\Big)^{1/2}\\
\le&
C_M \sqrt{G_{m,m}(0)}\sqrt{G_{\ell,\ell}(0)},
\end{align*}
so that
\[
\limsup_{n\to\infty}
\P\Big(|V_{u,n}-n^{-1/2}S_n|\ge\varepsilon\Big)
\le
\frac{C}{\varepsilon^2}\Big(\sum_{j\ge u}\|A_j\|\Big)^2,
\]
for some constant $C$. Since $\sum_{j\ge 0}\|A_j\|<\infty$, the right–hand side tends to 0 as $u\to\infty$.

\section{Technical Lemmas}
\label{appendix:Lemmas}

    The next result shows that the conclusion of Lemma 3.4 of  \cite{Furmanczyk} still holds under the assumption of the previous Theorem. 
\begin{lemma}[Adapted Lemma~3.4 of \cite{Furmanczyk}]
\label{lemma:lemma3.4Furmanczyk}
Suppose $(\mathbf{X}_t)_{t\geq 1}$ satisfies Assumption \ref{assum1:M1} and the function $f$ is Lipschitz with constant $L$.   
Then, for all $m$ and all $t \geq 1$, the variables $U_{t,j }$ satisfy:
\begin{enumerate}
    \item[(i)] $\mathbb{E}\!\left[\, U_{t,0 }^2\,\right] \leq \mathbb{E}\!\left[     f(\mathbf{X}_t)^2\right]$.
    \item[(ii)] $\mathbb{E}\, [\, U_{t,j }^2\,] \leq \mathcal{C} \,\|A_j\|  ^2$,  for all $j \geq 1$, where the constant $\mathcal{C}=\mathcal{C}(L)$ is independent of $t,j $.
\end{enumerate}
\end{lemma}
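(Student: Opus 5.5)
We prove the two bounds separately. Both rest on the contraction property of conditional expectation and on the independence of the innovations $(\mathbf Z_t)$. Throughout, $U_{t,j}=\E[f(\mathbf X_t)\mid\mathcal F_{t-j}]-\E[f(\mathbf X_t)\mid\mathcal F_{t-j-1}]$.

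\emph{Part (i).} $U_{t,0}=f(\mathbf X_t)-\E[f(\mathbf X_t)\mid\mathcal F_{t-1}]$, since $\mathbf X_t$ is $\mathcal F_t$-measurable. This is the orthogonal projection of $f(\mathbf X_t)$ onto the orthogonal complement of $L^2(\mathcal F_{t-1})$, so by Pythagoras $\E[U_{t,0}^2]=\E[f(\mathbf X_t)^2]-\E[(\E[f(\mathbf X_t)\mid\mathcal F_{t-1}])^2]\le \E[f(\mathbf X_t)^2]$. Finiteness of the right-hand side follows from $|f(\mathbf x)|\le |f(0)|+L\|\mathbf x\|$ and $\E\|\mathbf X_t\|^2\le \E\|\mathbf Z_0\|^2(\sum_j\|A_j\|)^2<\infty$.

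\emph{Part (ii).} We use a coupling. Fix $j\ge1$ and let $\mathbf Z'_{t-j}$ be an independent copy of $\mathbf Z_{t-j}$, independent of everything else. Define $\mathbf X_t':=\mathbf X_t-A_j\mathbf Z_{t-j}+A_j\mathbf Z'_{t-j}$, i.e.\ the innovation at time $t-j$ is replaced. Since $(\mathbf Z_{t-j},\mathbf Z_{t-j-1},\dots)$ and $(\mathbf Z'_{t-j},\mathbf Z_{t-j-1},\dots)$ have the same law, and the innovations after time $t-j$ are independent of both, we get
\[
\E[f(\mathbf X_t)\mid\mathcal F_{t-j-1}]=\E[f(\mathbf X_t')\mid\mathcal F_{t-j}\vee\sigma(\mathbf Z'_{t-j})\mid \mathcal F_{t-j}]=\E[f(\mathbf X_t')\mid\mathcal F_{t-j}] .
\]
In words: integrating out $\mathbf Z_{t-j}$ is the same as averaging over a fresh copy. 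Hence
\[
U_{t,j}=\E\big[f(\mathbf X_t)-f(\mathbf X_t')\mid\mathcal F_{t-j}\big],
\]
and by the Lipschitz property
\[
|U_{t,j}|\le L\,\E\big[\|A_j(\mathbf Z_{t-j}-\mathbf Z'_{t-j})\|\mid\mathcal F_{t-j}\big]\le L\|A_j\|\big(\|\mathbf Z_{t-j}\|+\E\|\mathbf Z_0\|\big).
\]
This almost sure bound is what the main proof calls \eqref{eq:almost_sure_bound_last}, with $\E\|\mathbf Z_0\|\le\sqrt{\E\|\mathbf Z_0\|^2}$. Squaring and using $(a+b)^2\le2a^2+2b^2$ gives $\E[U_{t,j}^2]\le 4L^2\E\|\mathbf Z_0\|^2\,\|A_j\|^2$. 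So $\mathcal C=4L^2\E\|\mathbf Z_0\|^2$, which is independent of $t$ and $j$ by stationarity. The same argument also works for $j=0$ and gives $|U_{t,0}|\le L\|A_0\|(\|\mathbf Z_t\|+C)$.

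The main obstacle is the conditional-expectation identity in the coupling step. To make it rigorous, we write $\mathbf X_t=\Phi(\mathbf Z_t,\dots,\mathbf Z_{t-j+1};\mathbf Z_{t-j};\mathbf Z_{t-j-1},\dots)$ as a measurable function of independent blocks. We then apply the freezing lemma: for independent $\mathbf Y$ and $\mathcal G$-measurable $\mathbf W$, $\E[h(\mathbf Y,\mathbf W)\mid\mathcal G]=\E[h(\mathbf Y,\mathbf w)]|_{\mathbf w=\mathbf W}$. Applying it to both $\mathcal F_{t-j}$ and $\mathcal F_{t-j-1}$ shows that both sides equal the same integral over the future innovations and $\mathbf Z'_{t-j}$. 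Convergence of the series defining $\mathbf X_t$ holds in $L^2$ and almost surely, so $\Phi$ is well defined up to null sets.

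A note on indexing for the main proof: there the bound is used in the shifted form $|U_{t+k,k,\ell}|\le C\|A_k\|(\|\mathbf Z_{t}\|+C)$, since the replaced innovation is $\mathbf Z_{(t+k)-k}=\mathbf Z_t$.
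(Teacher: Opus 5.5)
Your proof is correct and follows essentially the paper's route: the paper's auxiliary functions $f_s(\mathbf x)=\E[f(\mathbf X_{t,s-1}+\mathbf x)]$, evaluated by integrating against the law of a single innovation term, are exactly your independent-copy coupling, and part (i) is the same projection bound (your Pythagoras argument is the precise form of the paper's appeal to Jensen). Your indexing, which gives $\|A_j\|$ and $\mathbf Z_{t-j}$, is the one that matches the statement; the paper's written decomposition $U_{t,j}=f_j(\mathbf R_{t,j})-f_{j+1}(\mathbf R_{t,j+1})$ is shifted by one and ends with $\|A_{j+1}\|$ and $\mathbf Z_{t-j-1}$ instead.
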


\begin{proof}
Define, for all $s\geq 1$
\[
f_s(\mathbf{x}) := \E\big[f(\mathbf{X}_{t,s-1}+\mathbf{x})\big] = \int_{\mathbb{R}^k} f(\mathbf{z}+\mathbf{x})\, dF_{s-1}(\mathbf{z}),
\]
where $\mathbf{X}_{t,s}=\sum_{i=0}^s A_i\mathbf{Z}_{t-i}$ is the truncated process and $F_{s-1}$ is the law of $\mathbf{X}_{t,s-1}$ (which is independent of $t$ because of stationarity of $(\mathbf{Z}_j)_j$). By the martingale decomposition, $
U_{t,j } =     f_j(\mathbf{R}_{t,j}) -     f_{j+1}(\mathbf{R}_{t,j+1}),$ 
where $\mathbf{R}_{t,j} = \sum_{i=j+1}^\infty A_i\mathbf{Z}_{t-i}$ and $
\mathbf{R}_{t,j} - \mathbf{R}_{t,j+1} = A_{j+1}\mathbf{Z}_{t-j-1}\;.$ Since $f$ is Lipschitz,
\[
|f_s(\mathbf x)-f_s(\mathbf y)|
=\Big|\,\E\big[f(\mathbf X_{t,s-1}+\mathbf x)-f(\mathbf X_{t,s-1}+\mathbf y)\big]\,\Big|
\le L\,\|\mathbf x-\mathbf y\|_{\mathcal H},
\]
and so $\sup_{s\ge 1}\mathrm{Lip}(f_s)\le L<\infty$.

Next, let $F_{1,j}$ denote the Borel probability measure on $\mathbb{R}^k$ corresponding to the law of $A_{j+1}\mathbf{Z}_1$.  
By the independence of $A_{j+1}\mathbf{Z}_{t-j-1}$ and $\mathbf{R}_{t,j+1}$, 
\[
    f_{j+1}(\mathbf{R}_{t,j+1}) 
= \int_{\mathbb{R}^k}     f_j(\mathbf{R}_{t,j+1} - t) \, \,dF_{1,j}(t).
\]
Therefore,
\begin{align}
\label{eq:almost_sure_bound}
|U_{t,j }|
&\leq \int_{\mathbb{R}^k} \big|     f_j(\mathbf{R}_{t,j}) -     f_j(\mathbf{R}_{t,j+1} - t) \big| \, dF_{1,j}(t) \\
&\leq \mathrm{Lip}(    f_j) \int_{\mathbb{R}^k} \| \mathbf{R}_{t,j} - (\mathbf{R}_{t,j+1} - t) \|_{\mathbb{R}^k} \, dF_{1,j}(t) \\
&= \mathcal{C} \int_{\mathbb{R}^k} \| A_{j+1}\mathbf{Z}_{t-j-1} - t \|   \,d F_{1,j}(t) \\
&\leq \mathcal{C} \,\|A_{j+1}\|\left( \|\mathbf{Z}_{t-j-1}\|   + \sqrt{\mathbb{E}\|\mathbf{Z}_0\|  ^2} \right).
\label{eq:almost_sure_bound_last}
\end{align}
Squaring, applying $(a+b)^2 \leq 2a^2 + 2b^2$, and taking expectations yields
\[
\mathbb{E}[\,U_{t,j }^2\,] \leq \mathcal{C}' \, \|A_{j+1}\|  ^2,
\]
where $\mathcal{C}'$ depends only on $\mathcal{C}$ and $\mathbb{E}\|\mathbf{Z}_0\|  ^2$.

For $j=0$,  
\[
U_{t,0 } =     f(\mathbf{X}_t) - \mathbb{E}[\,    f(\mathbf{X}_t) \mid \mathcal{F}_{t-1}\,].
\]
By Jensen's inequality applied to the conditional expectation,
\[
\mathbb{E}[U_{t,0 }^2] 
\leq \mathbb{E}[\,    f(\mathbf{X}_t)^2\,].
\]

This proves (i) and (ii).
\end{proof}

\begin{lemma}\label{lem:BnM1-multi}
For any $j,k\ge 0$ and any $m,\ell\in\{1,\dots,M\}$,
\begin{align*}
\bigg|
\sum_{t=1}^n a^{(n)}_{t+j,m}\,a^{(n)}_{t+k,\ell}
-
\sum_{s=1}^n a^{(n)}_{s,m}\,a^{(n)}_{s+k-j,\ell}
\bigg|
\;\le\;
2j\,
\max_{t,m}|a^{(n)}_{t,m}|^2 .
\end{align*}
\end{lemma}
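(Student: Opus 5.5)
The plan is to compare the two sums by a change of summation index and to note that, after the shift, they differ only in boundary terms. We fix $j,k\ge 0$ and $m,\ell$, and substitute $s=t+j$ in the first sum. This gives
\[
\sum_{t=1}^n a^{(n)}_{t+j,m}\,a^{(n)}_{t+k,\ell}
=\sum_{s=1+j}^{n+j} a^{(n)}_{s,m}\,a^{(n)}_{s+k-j,\ell}.
\]
The summand is now identical to the one in the second sum $\sum_{s=1}^n a^{(n)}_{s,m}a^{(n)}_{s+k-j,\ell}$. The only change is the index range: it is $\{1+j,\dots,n+j\}$ instead of $\{1,\dots,n\}$.

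Taking the difference, the common range $\{1+j,\dots,n\}$ cancels. Two boundary pieces remain:
\[
\sum_{s=n+1}^{n+j} a^{(n)}_{s,m}\,a^{(n)}_{s+k-j,\ell}
\quad\text{and}\quad
-\sum_{s=1}^{j} a^{(n)}_{s,m}\,a^{(n)}_{s+k-j,\ell}.
\]
If $j\ge n$, both ranges are still handled by the same bound, with at most $j$ terms each. We use the standing convention $a^{(n)}_{t,\cdot}=0$ for $t\notin\{1,\dots,n\}$. Under it, the first piece vanishes identically, since $a^{(n)}_{s,m}=0$ for $s>n$. This is also why indices $s+k-j\le 0$ cause no trouble: those coefficients are zero as well.

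Each remaining term is a product of two coefficients, so it is bounded in absolute value by $\max_{t,m}|a^{(n)}_{t,m}|^2$. Each boundary piece has at most $j$ terms. The difference is therefore bounded by $j\max|a|^2\le 2j\max|a|^2$, and the factor $2$ simply covers both boundary pieces without using the vanishing.

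I expect no genuine obstacle here. The only care needed is bookkeeping of the index ranges, in particular the case $j>n$ and negative arguments $s+k-j$, and both are absorbed by the zero convention. For its use in (A1) I would also note the consequence of the bound: dividing by $n$ and applying \eqref{eq:W2-multi} gives $2j\max|a|^2/n=o(1)$ for each fixed $j\le u-1$. This yields $\frac1n\sum_t a_{t+j,m}a_{t+k,\ell}=G_{n,m\ell}(k-j)+o(1)$.
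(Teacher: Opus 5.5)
Your proof is correct and follows the same route as the paper: shift the index via $s=t+j$, cancel the common range, and bound the two boundary pieces of at most $j$ terms each by $\max_{t,m}|a^{(n)}_{t,m}|^2$. Your bookkeeping is slightly cleaner than the paper's. The identity $\sum_{s=1+j}^{n+j}-\sum_{s=1}^{n}=\sum_{s=n+1}^{n+j}-\sum_{s=1}^{j}$ holds for every $j\ge 0$, including $j\ge n$, and it carries the correct sign on the upper boundary term, which vanishes under the zero convention in any case.
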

\begin{proof}
Let $s:=t+j$. Splitting the sum on the left as
$\sum_{t=1}^n=\sum_{t=1}^{n-j}+\sum_{t=n-j+1}^n$
and the sum on the right as
$\sum_{s=j+1}^n=\sum_{s=1}^n-\sum_{s=1}^j$,
we obtain
\begin{align*}
\sum_{t=1}^n a^{(n)}_{t+j,m}\,a^{(n)}_{t+k,\ell}
&=
\sum_{s=1}^{n} a^{(n)}_{s,m}\,a^{(n)}_{s+k-j,\ell}
\\
&\quad
-
\sum_{t=n-j+1}^n a^{(n)}_{t+j,m}\,a^{(n)}_{t+k,\ell}
-
\sum_{s=1}^j a^{(n)}_{s,m}\,a^{(n)}_{s+k-j,\ell}.
\end{align*}
Subtracting $\sum_{s=1}^{n} a^{(n)}_{s,m}\,a^{(n)}_{s+k-j,\ell}$ from both sides, taking absolute values, and bounding each coefficient by
$\max_{t,m}|a^{(n)}_{t,m}|$, we obtain
\[
\bigg|
\sum_{t=1}^n a^{(n)}_{t+j,m}\,a^{(n)}_{t+k,\ell}
-
\sum_{s=1}^n a^{(n)}_{s,m}\,a^{(n)}_{s+k-j,\ell}
\bigg|
\le
2j\,\max_{t,m}|a^{(n)}_{t,m}|^2,
\]
which concludes the proof.
\end{proof}

\begin{lemma}
\label{lem:residual_to_zero}
Under the assumptions of Theorem~\ref{thm:main}, for each fixed $u\in\mathbb{N}$,
the remainder term $H_{u,n}$ defined in \eqref{eq:H_un-multi} satisfies
\[
H_{u,n}\xrightarrow{\P}0\qquad (n\to\infty).
\]
\end{lemma}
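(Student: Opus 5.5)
The plan is to show $\E[H_{u,n}^2]\to 0$; Chebyshev's inequality then gives $H_{u,n}\xrightarrow{\P}0$. Since $u$ is fixed, $H_{u,n}$ is a finite sum of at most $2\sum_{j=0}^{u-1} j \le u^2$ terms of the form $n^{-1/2}W_{t,j}^{(n)}$. By the elementary inequality $(\sum_{i=1}^N x_i)^2\le N\sum_{i=1}^N x_i^2$, it suffices to show that for each fixed $j\le u-1$ and each index $t$ in the boundary ranges $\{1,\dots,j\}$ and $\{n+1,\dots,n+j\}$ we have $n^{-1}\E[(W_{t,j}^{(n)})^2]\to 0$, uniformly over the finitely many such $t$.

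For the upper boundary block $t\in\{n+1,\dots,n+j\}$ nothing needs to be estimated. By the standing convention $a_{t,\cdot}^{(n)}=0$ for $t\notin\{1,\dots,n\}$, so $W_{t,j}^{(n)}=\sum_m a_{t,m}^{(n)}U_{t,j,m}=0$ identically, and that part of $H_{u,n}$ vanishes.

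For the lower block $t\in\{1,\dots,j\}$, expand $W_{t,j}^{(n)}$ and apply Cauchy--Schwarz over $m$:
\[
\E\big[(W_{t,j}^{(n)})^2\big]\le M\sum_{m=1}^M (a_{t,m}^{(n)})^2\,\E[U_{t,j,m}^2]\le M^2\max_{s,m}|a_{s,m}^{(n)}|^2\,\max_{m}\E[U_{t,j,m}^2].
\]
Lemma~\ref{lemma:lemma3.4Furmanczyk} bounds the second moments of the $U_{t,j,m}$ uniformly in $t$. Part (i) applies for $j=0$; there $\E[f_m(\mathbf X_t)^2]<\infty$ because $f_m$ is Lipschitz and $\mathbf X_t\in L^2$, with the bound independent of $t$ by stationarity. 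Part (ii) applies for $j\ge1$ and gives a bound $\mathcal C\|A_j\|^2\le \mathcal C\sup_i\|A_i\|^2<\infty$. Hence
\[
\E[H_{u,n}^2]\le \frac{C_{u,M}}{n}\max_{s,m}|a_{s,m}^{(n)}|^2 .
\]
By \eqref{eq:W2-multi}, $\max_{s,m}|a_{s,m}^{(n)}|^2=o(n)$, so the right-hand side tends to $0$.

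I do not expect a real obstacle here. The only point needing care is that the number of boundary terms is bounded independently of $n$ (it depends only on $u$), so the $o(\sqrt n)$ condition \eqref{eq:W2-multi} on the weights is exactly strong enough. I do not need orthogonality of the $U_{t,j,m}$ or \eqref{eq:W1-multi}; the crude Cauchy--Schwarz bound suffices.
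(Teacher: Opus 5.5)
Your argument is correct and is essentially the paper's proof: the upper boundary block vanishes because the weights are zero beyond $n$. The finitely many lower-boundary terms are then handled by Chebyshev's inequality, Cauchy--Schwarz over $m$, a uniform second-moment bound on the $U_{t,j,m}$ from Lemma~\ref{lemma:lemma3.4Furmanczyk}, and \eqref{eq:W2-multi}. The only difference is cosmetic: the paper removes the cross terms in $t$ for fixed $j$ via martingale orthogonality of the $U_{t,j,m}$, whereas you use the crude bound $(\sum_{i=1}^N x_i)^2\le N\sum_{i=1}^N x_i^2$, which suffices because the number of boundary terms depends only on $u$.
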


\begin{proof}
Recall that
\[
H_{u,n}
=
\frac{1}{\sqrt n}\sum_{j=0}^{u-1}
\bigg(
\sum_{t=1}^j W_{t,j}^{(n)}
-
\sum_{t=n+1}^{n+j} W_{t,j}^{(n)}
\bigg),
\qquad
W_{t,j}^{(n)}=\sum_{m=1}^M a_{t,m}^{(n)}U_{t,j,m}.
\]
By the standing assumption $a_{t,m}^{(n)}=0$ for all $t\ge n+1$, the second boundary term vanishes, hence
\[
H_{u,n}
=
\frac{1}{\sqrt n}\sum_{j=0}^{u-1}\sum_{t=1}^j W_{t,j}^{(n)}.
\]

Fix $j\in\{0,\dots,u-1\}$. Using Lemma~6.4 in \cite{ho1997limit}, we have for $s\neq t$
\[
\E\!\big[U_{t,j,m}\,U_{s,j,\ell}\big]=0,
\qquad\text{since }(t-j)\neq(s-j),
\]
and therefore
\begin{align*}
\Var\!\Big(\sum_{t=1}^j W_{t,j}^{(n)}\Big)
&=
\sum_{t=1}^j \E\!\big[(W_{t,j}^{(n)})^2\big].
\end{align*}
Moreover, by the Cauchy-Schwarz inequality and the inequality $(\sum_{m=1}^M b_m)^2\le M\sum_{m=1}^M b_m^2$,

\begin{align}
\E\!\big[(W_{t,j}^{(n)})^2\big]
&=\E\!\Big[\Big(\sum_{m=1}^M a_{t,m}^{(n)}U_{t,j,m}\Big)^2\Big]
\\
&=
\sum_{m,\ell=1}^M
a_{t,m}^{(n)} a_{t,\ell}^{(n)}
\E\!\big[U_{t,j,m}U_{t,j,\ell}\big]
\nonumber\\
&\le
\sum_{m,\ell=1}^M
|a_{t,m}^{(n)}|\,|a_{t,\ell}^{(n)}|
\big|\E\!\big[U_{t,j,m}U_{t,j,\ell}\big]\big|
\nonumber\\
&\le
\sum_{m,\ell=1}^M
|a_{t,m}^{(n)}|\,|a_{t,\ell}^{(n)}|
\sqrt{\E[U_{t,j,m}^2]}\sqrt{\E[U_{t,j,\ell}^2]}
\nonumber\\
&=
\Big(\sum_{m=1}^M |a_{t,m}^{(n)}|\sqrt{\E[U_{t,j,m}^2]}\Big)^2\leq
M\sum_{m=1}^M (a_{t,m}^{(n)})^2\,\E[U_{t,j,m}^2].
\end{align}

By Lemma~3.4 in \cite{Furmanczyk} applied to each Lipschitz $f_m$ (and using that $M$ is fixed and the Lipschitz constants are uniformly bounded by assumption), there exists a constant $C>0$ such that for all $m$ and all $t$,
\[
\E[U_{t,j,m}^2]\le C\|A_j\|^2.
\]
Hence,
\[
\Var\!\Big(\sum_{t=1}^j W_{t,j}^{(n)}\Big)
\le
CM\|A_j\|^2 \sum_{t=1}^j\sum_{m=1}^M (a_{t,m}^{(n)})^2.
\]

By Chebyshev's inequality and the bound
$\Var(\sum_{j=0}^{u-1}Y_j)\le u\sum_{j=0}^{u-1}\Var(Y_j)$,
for any $\varepsilon>0$,
\begin{align*}
\P\Big(|H_{u,n}|>\varepsilon\Big)
&=
\P\Big(\Big|\sum_{j=0}^{u-1}\sum_{t=1}^j W_{t,j}^{(n)}\Big|>\sqrt n\,\varepsilon\Big)
\\
&\le
\frac{1}{n\varepsilon^2}\,
\Var\Big(\sum_{j=0}^{u-1}\sum_{t=1}^j W_{t,j}^{(n)}\Big)
\\
&\le
\frac{u}{n\varepsilon^2}\sum_{j=0}^{u-1}
\Var\Big(\sum_{t=1}^j W_{t,j}^{(n)}\Big)
\\
&\le
\frac{CuM}{n\varepsilon^2}\sum_{j=0}^{u-1}\|A_j\|^2
\sum_{t=1}^j\sum_{m=1}^M (a_{t,m}^{(n)})^2.
\end{align*}
Since $u$ is fixed, the sums in $j$ and $t$ are over finitely many indices. By \eqref{eq:W2-multi},
$\max_{t\le u,\;m\le M}|a_{t,m}^{(n)}|=o(\sqrt n)$, hence
\[
\sum_{j=0}^{u-1}\sum_{t=1}^j\sum_{m=1}^M (a_{t,m}^{(n)})^2
\;\le\;
u^2M\max_{t\le u,\;m\le M}(a_{t,m}^{(n)})^2
\;=\;o(n).
\]
Therefore the right-hand side is $o(1)$, and we conclude that
$\P(|H_{u,n}|>\varepsilon)\to 0$ for every $\varepsilon>0$. 
\end{proof}

\section{Proofs for Section \ref{sec:fcns}}
\label{appendix:missing_proofs}

\begin{proof}[Proof of Lemma \ref{lem:fcn_res_representation}]
We argue by induction on \(\ell\). For $\ell=1$, since \(m_0=1\), we have
\[
\mathbf H_t^{(1)}
=
I^{(1)}x_t+
\sigma\!\Big(
\mathbf b^{(1)}+\sum_{j=0}^{k_1-1}W_j^{(1)}x_{t+j}
\Big),
\]
so \eqref{eq:induction_claim_fcn} holds with \(K_1=k_1\) and
\[
\mathbf g_1(u_1,\ldots,u_{k_1})
:=
I^{(1)}u_1+
\sigma\!\Big(
\mathbf b^{(1)}+\sum_{j=0}^{k_1-1}W_j^{(1)}u_{j+1}
\Big).
\]
This map is Lipschitz, since it is the sum of a linear map and a Lipschitz map.

For the induction step $\ell-1\Longrightarrow \ell$, fix \(2\le \ell\le L\), and assume that for layer \(\ell-1\) there exists a deterministic Lipschitz function
\(
\mathbf g_{\ell-1}:\mathbb{R}^{K_{\ell-1}}\to\mathbb{R}^{m_{\ell-1}}
\)
such that
\[
\mathbf H_t^{(\ell-1)}
=
\mathbf g_{\ell-1}\bigl(x_t,\ldots,x_{t+K_{\ell-1}-1}\bigr)
\qquad\text{for all }t\ge 1.
\]
Then, by definition of the network,
\[
\mathbf H_t^{(\ell)}
=
I^{(\ell)}\mathbf H_t^{(\ell-1)}
+
\sigma\!\Big(
\mathbf b^{(\ell)}+\sum_{j=0}^{k_\ell-1}W_j^{(\ell)}\mathbf H_{t+j}^{(\ell-1)}
\Big).
\]
Using the induction hypothesis,
\[
\mathbf H_{t+j}^{(\ell-1)}
=
\mathbf g_{\ell-1}\bigl(x_{t+j},\ldots,x_{t+j+K_{\ell-1}-1}\bigr),
\qquad j=0,\ldots,k_\ell-1.
\]
Hence \(\mathbf H_t^{(\ell)}\) depends only on
\(
x_t,\ldots,x_{t+k_\ell+K_{\ell-1}-2}.
\)
Since
\(
k_\ell+K_{\ell-1}-1
=
\sum_{q=1}^{\ell}k_q-(\ell-1)
=
K_\ell,
\)
this is exactly the block \((x_t,\ldots,x_{t+K_\ell-1})\). Therefore \eqref{eq:induction_claim_fcn} holds with
\[
\mathbf g_\ell(u_1,\ldots,u_{K_\ell})
:=
I^{(\ell)}\mathbf g_{\ell-1}(u_1,\ldots,u_{K_{\ell-1}})
+
\sigma\!\Big(
\mathbf b^{(\ell)}
+
\sum_{j=0}^{k_\ell-1}
W_j^{(\ell)}
\mathbf g_{\ell-1}(u_{j+1},\ldots,u_{j+K_{\ell-1}})
\Big).
\]
This map is deterministic and Lipschitz, because \(\mathbf g_{\ell-1}\) is Lipschitz, linear maps preserve Lipschitz continuity, and \(\sigma\) is Lipschitz componentwise.
\end{proof}

\begin{proof}[Proof of Theorem \ref{thm:fcn_gap}]

By Lemma~\ref{lem:fcn_res_representation}, there exists a function
$\mathbf g_L:\mathbb{R}^{K_L}\to\mathbb{R}^{m_L}$ (independent of $t$) such that
\begin{equation}\label{eq:H_equals_gX}
\mathbf H^{(L)}_t=\mathbf g_L(\mathbf X_t),\qquad t\ge 1.
\end{equation}
Write $\mathbf g_L=(g_1,\ldots,g_{m_L})^\top$ and define
\[
 \mathbb{E}\big[\mathbf H^{(L)}_1\big]
= \mathbb{E}\big[\mathbf g_L(\mathbf X_1)\big]
= \big(\mathbb{E}[g_1(\mathbf X_1)],\ldots,\mathbb{E}[g_{m_L}(\mathbf X_1)]\big)^\top.
\]
Then
\[
\sqrt n\Big(\mathrm{GAP}(\mathbf x)- \mathbb{E}\big[\mathbf H^{(L)}_1\big]\Big)
=
\frac{1}{\sqrt n}\sum_{t=1}^n\Big(\mathbf g_L(\mathbf X_t)-\mathbb{E}[\mathbf g_L(\mathbf X_t)]\Big).
\]
We now apply Theorem~\ref{thm:main} with $M=m_L$, $f_m=g_m$, and deterministic weights
\[
a_{t,m}^{(n)}:=1\cdot \mathbf 1_{\{1\le t\le n\}},
\qquad t\ge 1,\ m=1,\ldots,m_L.
\]
Clearly $a_{t,m}^{(n)}=0$ for $t\ge n+1$. Condition \eqref{eq:W2-multi} holds since
\(
\max_{1\le t\le n} \, \max_{1\le m\le m_L} \, |a_{t,m}^{(n)}|=1=o(\sqrt n).
\)
For \textup{(W1)}, fix $m,\ell\in\{1,\ldots,m_L\}$ and $h\in\mathbb{Z}$. Using the convention
$a_{t+h,\ell}^{(n)}=0$ if $t+h\notin\{1,\ldots,n\}$,
\[
G_{n,m\ell}(h)
=
\frac{1}{n}\sum_{t=1}^n a_{t,m}^{(n)}a_{t+h,\ell}^{(n)}
=
\frac{1}{n}\#\{t\in\{1,\ldots,n\}:1\le t+h\le n\}
=
\begin{cases}
\frac{n-|h|}{n}, & |h|<n,\\[2mm]
0, & |h|\ge n,
\end{cases}
\]
hence for every fixed $h$,
\(
G_{n,m\ell}(h)\xrightarrow[n\to\infty]{}1 \;=:\; G_{m\ell}(h).
\)
Therefore Theorem~\ref{thm:main} yields
\[
\frac{1}{\sqrt n}\sum_{t=1}^n\Big(\mathbf g_L(\mathbf X_t)-\mathbb{E}[\mathbf g_L(\mathbf X_t)]\Big)
\xrightarrow{\mathcal D}\mathcal N(0,\Sigma^{\mathrm{GAP}}),
\]
where, since $G_{m\ell}(h)\equiv 1$ for all $h\in\mathbb{Z}$, the covariance entries are
\[
\Sigma^{\mathrm{GAP}}_{m\ell}
=
\Cov\!\big(g_m(\mathbf X_1),g_\ell(\mathbf X_1)\big)
+
2\sum_{h=1}^\infty
\Cov\!\big(g_m(\mathbf X_1),g_\ell(\mathbf X_{1+h})\big)\;.
\]
\end{proof}

\begin{lemma}
\label{lem:cnn_trivial_case}
Let \(W^{(1)}_0,\ldots,W^{(1)}_{k_1-1}\in\mathbb R^{m_1\times 1}\) be filters, and let \((X_t)_{t\in\mathbb Z}\) be the Gaussian AR\((1)\) process
\[
X_t=\theta X_{t-1}+\varepsilon_t,\qquad \theta\in(-1,1),
\]
where \((\varepsilon_t)_{t\in\mathbb Z}\) is i.i.d.\ with \(\varepsilon_t\sim\mathcal N(0,\sigma_\varepsilon^2)\). Consider the one-layer network
\[
\mathrm{GAP}(\mathbf x)
=
\frac1n\sum_{t=1}^n \mathrm{ReLU}\!\left(\sum_{j=0}^{k_1-1}W_j^{(1)}x_{t+j}\right)
=
\frac1n\sum_{t=1}^n \mathbf g(\mathbf X_t),
\]
where \(\mathbf X_t=(X_t,\ldots,X_{t+k_1-1})^\top\), \(\mathbf W\in\mathbb R^{m_1\times k_1}\) is the matrix with columns \(W_0^{(1)},\ldots,W_{k_1-1}^{(1)}\), and $
\mathbf g(\cdot)=\mathrm{ReLU}(\mathbf W\cdot).$
Let \(\Sigma_{j,\ell}^{\mathrm{GAP}}(\theta)\) be the \((j,\ell)\)-th entry of the long-run covariance matrix \eqref{eq:long_run_cov}, and let
\[
\mathcal R_{j,\ell}^{\mathrm{GAP}}(\theta)
=
\frac{\Sigma_{j,\ell}^{\mathrm{GAP}}(\theta)}
{\sqrt{\Sigma_{j,j}^{\mathrm{GAP}}(\theta)\Sigma_{\ell,\ell}^{\mathrm{GAP}}(\theta)}}.
\]
Define, for \(h\ge 0\),
\[
\rho_{j,\ell}(h,\theta)
:=
\mathrm{Corr}\!\bigl((\mathbf W\mathbf X_1)_j,(\mathbf W\mathbf X_{1+h})_\ell\bigr),
\]
and
\[
\Phi_{j,\ell}(h;\theta)
:=
\sqrt{1-\rho_{j,\ell}(h,\theta)^2}
+
\rho_{j,\ell}(h,\theta)\bigl(\pi-\arccos(\rho_{j,\ell}(h,\theta))\bigr)
-1.
\]
Then, for all \(j,\ell\in\{1,\ldots,m_1\}\),
\[
\mathcal R_{j,\ell}^{\mathrm{GAP}}(\theta)
=
\frac{\Phi_{j,\ell}(0;\theta)+2\sum_{h=1}^\infty \Phi_{j,\ell}(h;\theta)}
{\sqrt{\Phi_{j,j}(0;\theta)+2\sum_{h=1}^\infty \Phi_{j,j}(h;\theta)}
 \sqrt{\Phi_{\ell,\ell}(0;\theta)+2\sum_{h=1}^\infty \Phi_{\ell,\ell}(h;\theta)}}.
\]

Moreover, if \(k_1=1\), then for all \(j,\ell\in\{1,\ldots,m_1\}\),
\[
\mathcal R_{j,\ell}^{\mathrm{GAP}}(0)
=
\begin{cases}
1, & \text{if } W^{(1)}_{0,j}W^{(1)}_{0,\ell}\geq 0,\\[1.2ex]
-\dfrac{1}{\pi-1}, & \text{if } W^{(1)}_{0,j}W^{(1)}_{0,\ell}<0,
\end{cases}
\]
and
\[
\lim_{\theta\to1^-}\mathcal R_{j,\ell}^{\mathrm{GAP}}(\theta)
=
\begin{cases}
1, & \text{if } W^{(1)}_{0,j}W^{(1)}_{0,\ell} \geq 0,\\[1.2ex]
\displaystyle \frac{\log 2-2}{\pi+\log 2-2}, & \text{if } W^{(1)}_{0,j}W^{(1)}_{0,\ell}<0.
\end{cases}
\]
\end{lemma}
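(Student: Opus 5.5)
The plan is to reduce everything to the classical closed form for the covariance of two ReLU-transformed jointly Gaussian variables (the arc-cosine kernel), and then to evaluate the resulting series for $k_1=1$.

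\textbf{Step 1: reduction to the arc-cosine kernel.} Since $(X_t)$ is a stationary Gaussian process, for each $h\ge 0$ the pair $\bigl((\mathbf W\mathbf X_1)_j,(\mathbf W\mathbf X_{1+h})_\ell\bigr)$ is centered bivariate Gaussian. By stationarity its standard deviations $s_j,s_\ell$ do not depend on $h$, and its correlation is $\rho=\rho_{j,\ell}(h,\theta)$. I will assume $s_j,s_\ell>0$, i.e.\ nonzero filter rows; otherwise $\mathcal R$ is undefined. Use positive homogeneity, $\mathrm{ReLU}(sU)=s\,\mathrm{ReLU}(U)$ for $s>0$, and the standard identities for standard Gaussians $(U,V)$ with correlation $\rho$:
\[
\E[\mathrm{ReLU}(U)\mathrm{ReLU}(V)]=\tfrac{1}{2\pi}\bigl(\sqrt{1-\rho^2}+\rho(\pi-\arccos\rho)\bigr),\qquad \E[\mathrm{ReLU}(U)]=\tfrac{1}{\sqrt{2\pi}}.
\]
These give
\[
\Cov\bigl(g_j(\mathbf X_1),g_\ell(\mathbf X_{1+h})\bigr)=\tfrac{s_js_\ell}{2\pi}\Phi_{j,\ell}(h;\theta).
\]
The first identity is obtained by writing $V=\rho U+\sqrt{1-\rho^2}\,W$ and integrating in polar coordinates. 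Inserting this into \eqref{eq:long_run_cov} gives $\Sigma^{\mathrm{GAP}}_{j,\ell}=\frac{s_js_\ell}{2\pi}\bigl(\Phi_{j,\ell}(0;\theta)+2\sum_{h\ge1}\Phi_{j,\ell}(h;\theta)\bigr)$. The factors $s_js_\ell/(2\pi)$ cancel in the normalization, which yields the general formula for $\mathcal R^{\mathrm{GAP}}_{j,\ell}(\theta)$. The series converge because $|\Phi(\rho)|\le C|\rho|$, which follows from $\Phi(0)=0$ and $\Phi$ being Lipschitz on $[-1,1]$ with $\Phi'(\rho)=\pi-\arccos\rho$, and because $|\rho_{j,\ell}(h,\theta)|$ decays geometrically in $h$.

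\textbf{Step 2: the case $k_1=1$.} Here $\rho_{j,\ell}(h,\theta)=\epsilon\,\theta^h$ with $\epsilon=\mathrm{sgn}(W_{0,j}W_{0,\ell})$. If $\epsilon=+1$, then $\Phi_{j,\ell}=\Phi_{j,j}=\Phi_{\ell,\ell}$ termwise, so $\mathcal R=1$ for every $\theta$, and in particular in both limits. If $\epsilon=-1$, I use the symmetry $\Phi(-r)=\Phi(r)-\pi r$, which follows from $\arccos(-r)=\pi-\arccos r$. Writing $D(\theta):=\Phi(1)+2\sum_{h\ge1}\Phi(\theta^h)$ for the common diagonal quantity, this gives
\[
\mathcal R_{j,\ell}=\frac{D(\theta)-\pi\,\frac{1+\theta}{1-\theta}}{D(\theta)}.
\]
For $\theta=0$ we have $D=\Phi(1)=\pi-1$ and the numerator is $-1$, giving $-1/(\pi-1)$.

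\textbf{Step 3: the limit $\theta\uparrow1$.} This is the main obstacle. Both $D(\theta)$ and $\pi\frac{1+\theta}{1-\theta}$ blow up like $(1-\theta)^{-1}$, so one needs the exact rate of $D$. Put $\theta=e^{-\delta}$, so that $\delta\sim1-\theta$. Then $\delta\sum_{h\ge1}\Phi(e^{-\delta h})$ is a Riemann sum of $x\mapsto\Phi(e^{-x})$ on $(0,\infty)$. This function is monotone, since $\Phi$ is increasing on $[0,1]$ because $\Phi'=\pi-\arccos\ge0$, and it is dominated by $Ce^{-x}$. Hence the Riemann sum converges to
\[
I:=\int_0^\infty\Phi(e^{-x})\,dx=\int_0^1\frac{\Phi(r)}{r}\,dr .
\]
Consequently $(1-\theta)D(\theta)\to 2I$ and $(1-\theta)\pi\frac{1+\theta}{1-\theta}\to2\pi$, so $\mathcal R\to (I-\pi)/I$.

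It remains to compute $I$ via
\[
I=\int_0^1\frac{\sqrt{1-r^2}-1}{r}\,dr+\int_0^1(\pi-\arccos r)\,dr .
\]
The second integral equals $\pi-1$. For the first, substitute $r=\sin\phi$ to get $\int_0^{\pi/2}\frac{\cos^2\phi-\cos\phi}{\sin\phi}\,d\phi$; evaluating this elementary integral should give $\log2-1$. Then $I=\pi+\log2-2$, and $(I-\pi)/I=(\log 2-2)/(\pi+\log2-2)$, as claimed. I expect the delicate points to be the justification of the Riemann-sum limit and this final integral evaluation.
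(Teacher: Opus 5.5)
Your proposal is correct and follows essentially the same route as the paper. Both reduce to the arc-cosine kernel and cancel the scale factors, evaluate $\theta=0$ directly, and take the limit $\theta\uparrow 1$ via a Riemann sum after multiplying by $1-\theta$. The differences are cosmetic. Your symmetry $\Phi(-r)=\Phi(r)-\pi r$ packages the paper's separate integrals $I_-$ and $I_+$, which differ by exactly $\pi$, into a single integral. Your monotone Riemann sum in $\theta=e^{-\delta}$ replaces the paper's geometric partition $[\theta^h,\theta^{h-1}]$ with uniform continuity of $\Phi(u)/u$. The integral you left as ``should give'' does equal $\log 2-1$: the substitution $x=\sqrt{1-r^2}$ turns it into $-\int_0^1 \frac{x}{1+x}\,dx$.
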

\begin{proof}
Since  $(X_t)_{t\in \mathbb{Z}}$ is a Gaussian AR$(1)$ process,
\begin{align*}
\Cov(X_t,X_{t+u})
=
\frac{\sigma_\varepsilon^2}{1-\theta^2}\theta^{|u|},
\qquad u\in\mathbb{Z}.
\end{align*} 
Denoting by $(\mathbf{W}\mathbf{X}_t)_i$ the $i$-th coordinate of $\mathbf{W}\mathbf{X}_t$ and using bilinearity of the covariance gives
\begin{align}
\Var((\mathbf{W}\mathbf{X}_1)_{j})
&=
\frac{\sigma_\varepsilon^2}{1-\theta^2}
\sum_{r=0}^{k_1-1}\sum_{q=0}^{k_1-1}
W^{(1)}_{r,j}W^{(1)}_{q,j}\,\theta^{|q-r|},
\\
\Var((\mathbf{W}\mathbf{X}_{1+h})_\ell)
&=
\frac{\sigma_\varepsilon^2}{1-\theta^2}
\sum_{s=0}^{k_1-1}\sum_{u=0}^{k_1-1}
W^{(1)}_{s,\ell}W^{(1)}_{u,\ell}\,\theta^{|u-s|},
\\
\Cov\left((\mathbf{W}\mathbf{X}_{1})_j),(\mathbf{W}\mathbf{X}_{1+h})_\ell)\right)
&=
\frac{\sigma_\varepsilon^2}{1-\theta^2}
\sum_{r=0}^{k_1-1}\sum_{s=0}^{k_1-1}
W^{(1)}_{r,j}W^{(1)}_{s,\ell}\,\theta^{|h+s-r|}. \label{eq:cross_relu_cov}
\end{align}
Since the input process is Gaussian, the pair $\left( \mathbf{W}\mathbf{X}_{1},  \mathbf{W}\mathbf{X}_{1+h}\right)$
is jointly Gaussian. Thus, 
\begin{align*}
\left(\mathbf{W}\mathbf{X}_1\right)_j=\sqrt{\Var(\left(\mathbf{W}\mathbf{X}_1\right)_j)}\,U,
\qquad
\left(\mathbf{W}\mathbf{X}_{1+h}\right)_\ell=\sqrt{\Var(\left(\mathbf{W}\mathbf{X}_{1+h}\right)_\ell)}\,V,
\end{align*}
where $(U,V)$ is centered Gaussian with
\begin{align*}
\Var(U)=\Var(V)=1,
\qquad
\Cov(U,V)=\rho_{j,\ell}(h,\theta).
\end{align*}
Since $\mathrm{ReLU}(x)=\max\{0,x\}$, we have the identity 
\begin{align*}
   \Cov\bigl(g_j(\mathbf{X}_1),\;g_\ell(\mathbf{X}_{1+h})\bigr)=\sqrt{\Var(\left(\mathbf{W}\mathbf{X}_1\right)_j) \Var(\left(\mathbf{W}\mathbf{X}_{1+h}\right)_\ell)} \Cov(\mathrm{ReLU}(U),\mathrm{ReLU}(V))\;.
\end{align*}
It remains to derive an expression for \(\Cov(\mathrm{ReLU}(U),\mathrm{ReLU}(V))\). For \((U,V)^\top\)  jointly Gaussian with correlation \(\rho\), Lemma \ref{lem:cov_relu} shows that
\[
\Cov(\mathrm{ReLU}(U),\mathrm{ReLU}(V))
=
\frac{1}{2\pi}
\left(
\sqrt{1-\rho^2}
+
\rho\bigl(\pi-\arccos(\rho)\bigr)
-1
\right).
\]

Next, we set \(\rho=\rho_{j,\ell}(h,\theta)\), and 
\begin{align*}
&\Phi(x):=\sqrt{1-x^2}+x(\pi-\arccos x)-1.\\
&\Phi_{j,\ell}(h;\theta)
:=\Phi(\rho_{j,\ell}(h,\theta))\;.
\end{align*}
We obtain
\begin{align*}
\Cov\bigl(g_j(\mathbf{X}_1),g_\ell(\mathbf{X}_{1+h})\bigr)
=
\frac{\sqrt{\Var((\mathbf{W}\mathbf{X}_1)_j)\Var((\mathbf{W}\mathbf{X}_{1+h})_\ell)}}{2\pi}
\,\Phi_{j,\ell}(h;\theta).
\end{align*}
Substituting the explicit expressions for the variances yields
\begin{align}
\label{eq:explicit_Sigma}
\Sigma^{\mathrm{GAP}}_{j,\ell}(\theta)
&=
\Cov\!\bigl(g_j(\mathbf X_1),g_\ell(\mathbf X_1)\bigr)
+
2\sum_{h=1}^\infty
\Cov\!\bigl(g_j(\mathbf X_1),g_\ell(\mathbf X_{1+h})\bigr)
\nonumber\\
&=
\frac{\sigma_\varepsilon^2}{2\pi(1-\theta^2)}
\Bigg[
\left(
\sum_{r=0}^{k_1-1}\sum_{q=0}^{k_1-1}
W^{(1)}_{r,j}W^{(1)}_{q,j}\,\theta^{|q-r|}
\right)
\left(
\sum_{s=0}^{k_1-1}\sum_{u=0}^{k_1-1}
W^{(1)}_{s,\ell}W^{(1)}_{u,\ell}\,\theta^{|u-s|}
\right)
\Bigg]^{1/2}
\nonumber\\
&\quad\times
\Bigg[
\Phi_{j,\ell}(0;\theta)
+
2\sum_{h=1}^\infty \Phi_{j,\ell}(h;\theta)
\Bigg].
\end{align}
We now normalize the long-run covariance matrix to obtain the corresponding autocorrelation matrix. For \(j,\ell\in\{1,\ldots,m_1\}\), by using \eqref{eq:explicit_Sigma}, the prefactors $\frac{\sigma_\varepsilon^2}{2\pi(1-\theta^2)}\Big[ \cdots \Big]^{1/2}$ cancel, and therefore
\begin{align}
\label{eq:RGAP_proof}
\mathcal{R}^{\mathrm{GAP}}_{j,\ell}(\theta)
=
\frac{\Phi_{j,\ell}(0;\theta)+2\sum_{h=1}^\infty \Phi_{j,\ell}(h;\theta)}
{\sqrt{\Phi_{j,j}(0;\theta)+2\sum_{h=1}^\infty \Phi_{j,j}(h;\theta)}
 \sqrt{\Phi_{\ell,\ell}(0;\theta)+2\sum_{h=1}^\infty \Phi_{\ell,\ell}(h;\theta)}}.
\end{align}
This proves the first claim.

Assume now that \(k_1=1\). Then, for each channel \(j=1,\ldots, m_1\),
\[
(\mathbf W\mathbf X_t)_j=W^{(1)}_{0,j}X_t,
\]
so that, for every \(h\geq 0\), using \eqref{eq:cross_relu_cov} with $k_1=1$ yields
\[
\rho_{j,\ell}(h,\theta)
=
\mathrm{Corr}\bigl(W^{(1)}_{0,j}X_1,W^{(1)}_{0,\ell}X_{1+h}\bigr)
=
\operatorname{sgn}\!\bigl(W^{(1)}_{0,j}W^{(1)}_{0,\ell}\bigr)\theta^h.
\]

\paragraph{Case $\theta=0$.}
We first evaluate \(\mathcal R_{j,\ell}^{\mathrm{GAP}}(0)\). If
\(W^{(1)}_{0,j}W^{(1)}_{0,\ell}>0\), then \(\rho_{j,\ell}(h,\theta)=\theta^h\), and at \(\theta=0\),
\[
\Phi_{j,\ell}(0;0)=\Phi_{j,j}(0;0)=\Phi_{\ell,\ell}(0;0)=\pi-1,
\qquad
\Phi_{j,\ell}(h;0)=0,\quad h\ge1.
\]
Hence
\[
\mathcal R_{j,\ell}^{\mathrm{GAP}}(0)=1.
\]
If instead \(W^{(1)}_{0,j}W^{(1)}_{0,\ell}<0\), then \(\rho_{j,\ell}(0)=-1\), so
\[
\Phi_{j,\ell}(0;0)
=
\sqrt{1-(-1)^2}+(-1)\bigl(\pi-\arccos(-1)\bigr)-1
=
-1,
\]
while still \(\Phi_{j,j}(0;0)=\Phi_{\ell,\ell}(0;0)=\pi-1\). Therefore
\[
\mathcal R_{j,\ell}^{\mathrm{GAP}}(0)=-\frac{1}{\pi-1}\;.
\]
\paragraph{Case $\theta \uparrow 1$.} If
\(W^{(1)}_{0,j}W^{(1)}_{0,\ell}>0\), then \(\rho_{j,\ell}(h,\theta)=\theta^h\) does not depend on $(j,\ell)$. This means $\Phi_{j,\ell}(h;\theta)=\Phi( \theta^h)$, and both numerator and denominator in \eqref{eq:RGAP_proof} are identical. Hence, for all $\theta \in (-1,1)$, $\mathcal R_{j,\ell}^{\mathrm{GAP}}(\theta)=1\;.$  Whereas, when $W^{(1)}_{0,j}W^{(1)}_{0,\ell}<0$ we have \(\Phi_{j,\ell}(h;\theta)=\Phi(-\theta^h)\), and $\mathcal R_{j,\ell}^{\mathrm{GAP}}(\theta)$ becomes
\begin{equation}
\mathcal R_{j,\ell}^{\mathrm{GAP}}(\theta)
=
\frac{\Phi(-1)+2\sum_{h=1}^\infty \Phi(-\theta^h)}
{\Phi(1)+2\sum_{h=1}^\infty \Phi(\theta^h)}.
\end{equation}
First, we multiply both numerator and denominator by $1-\theta.$ By definition $\theta \in (-1,1)$, so  $1-\theta \neq 0$. Further,  since we are interested in the limit $\theta\uparrow 1$, it is not restrictive to assume $\theta\in (0,1)$. Define, 
$$\qquad \tilde{\Phi}(u)= \begin{cases}
    \frac{\Phi(u)}{u} \quad \text{for }u\in (0,1]\\
    \frac{\pi}{2} \quad \text{for }u=0\;,\\
\end{cases}$$
that is, the continuous extension of $\Phi$ in $[0,1]$. Since $\theta>0$
\begin{align*}
    (1-\theta)\sum_{h=1}^\infty \Phi(\theta^h) = \sum_{h=1}^\infty  (1-\theta) \theta^h \frac{\Phi(\theta^h)}{\theta^h}= \sum_{h=1}^\infty  (1-\theta) \theta^h \tilde \Phi(\theta^h)\;.
\end{align*}
Lemma \ref{lem:Riemann_sums} identifies this as a Riemann sum and yields for $\theta \uparrow 1,$
$$\sum_{h=1}^\infty  (1-\theta) \theta^h \tilde \Phi(\theta^h)\to \int_0^1 \tilde\Phi (u)\, du\;.$$
Putting everything together, 
\begin{align*}
    \lim_{\theta \uparrow 1} \mathcal R_{j,\ell}^{\mathrm{GAP}}(\theta)
=\frac{\int_0^1 \tilde\Phi (-u)\, du}{\int_0^1 \tilde\Phi (u)\, du}\;.
\end{align*}
It remains to compute the two integrals above. 
\begin{align*}
    I_-:&=\int_0^1 \frac{\Phi(-u)}{u}\, du= \int_0^1 \frac{\sqrt{1-u^2}-u(\pi-\arccos{(-u)})-1}{u}\, du\;,\\
    I_+&= \int_0^1 \frac{\Phi(u)}{u}\, du= \int_0^1 \frac{\sqrt{1-u^2}+u(\pi-\arccos{(u)})-1}{u}\, du\;.
\end{align*}
Using $\arccos{(-u)}=\pi-\arccos{u}$, we write
\begin{align*}
    I_-&=\int_0^1 \frac{\sqrt{1-u^2}-1}{u}\,du-\int_0^1 \arccos{(u)}\, du\;\\
    I_+ &= \int_0^1 \frac{\sqrt{1-u^2}-1}{u}\,du+\pi-\int_0^1 \arccos{(u)}\, du\;.
\end{align*}
For the elementary integral $\int_0^1 \arccos{(u)}\, du=1\;.$  Whereas, 
\begin{align*}
    \int_0^1 \frac{\sqrt{1-u^2}-1}{u}\,du= -\int_0^1 \frac{u}{\sqrt{u^2-1}+1}\,du
\end{align*}
Now we use the change of variables $x=\sqrt{1-u^2}$, so that $dx=-\frac{u}{\sqrt{1-u^2}}\,du$, hence
\begin{align*}
     \int_0^1 \frac{-u}{\sqrt{u^2-1}+1}\,du=\int_1^0 \frac{x}{x+1}\,dx=-\int_0^1 \left(1-\frac{1}{1+u}\right)\,du
=
\ln(2)-1.
\end{align*}

It follows
\[
\lim_{\theta\uparrow 1}\mathcal R^{\mathrm{GAP}}_{j,\ell}(\theta)
=
\frac{\log2-2}{\pi+\log2-2}.
\]
\end{proof}

\begin{lemma}
\label{lem:Riemann_sums}
Let \(\Phi\) be a continuous function in $[0,1]$. Then, as $\theta\uparrow 1$,
\[
(1-\theta)\sum_{h=1}^{\infty}\theta^{h}\Phi(\theta^{h})
\longrightarrow
\int_{0}^{1} \Phi(u)\,du.
\]
\end{lemma}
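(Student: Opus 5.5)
The plan is to recognize the left-hand side as a Riemann sum of the integrand $\Phi$ on $[0,1]$ over the partition given by the points $u_h:=\theta^h$, $h=0,1,2,\ldots$. For $\theta\in(0,1)$ these points decrease from $u_0=1$ to $0$. The mesh widths are
\[
u_{h-1}-u_h=\theta^{h-1}(1-\theta),
\]
so the mesh of the partition is $\sup_{h\ge1}(u_{h-1}-u_h)=1-\theta\to0$ as $\theta\uparrow1$. The summand $(1-\theta)\theta^h\Phi(\theta^h)$ differs from the tagged Riemann term $(u_{h-1}-u_h)\Phi(u_h)=(1-\theta)\theta^{h-1}\Phi(\theta^h)$ only by the factor $\theta$. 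Hence
\[
(1-\theta)\sum_{h\ge1}\theta^h\Phi(\theta^h)=\theta\sum_{h\ge1}(u_{h-1}-u_h)\Phi(u_h),
\]
and it suffices to show that the right-hand sum tends to $\int_0^1\Phi(u)\,du$, since $\theta\to1$.

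To prove this, I use uniform continuity of $\Phi$ on the compact interval $[0,1]$, which also gives boundedness, $\|\Phi\|_\infty<\infty$. First, the series converges absolutely, because it is dominated by $\|\Phi\|_\infty\sum_h(u_{h-1}-u_h)=\|\Phi\|_\infty$. Next, I write
\[
\int_0^1\Phi(u)\,du=\sum_{h\ge1}\int_{u_h}^{u_{h-1}}\Phi(u)\,du,
\]
which is valid because the intervals $[u_h,u_{h-1}]$ tile $(0,1]$. Then
\[
\Big|\sum_{h\ge1}(u_{h-1}-u_h)\Phi(u_h)-\int_0^1\Phi\Big|\le\sum_{h\ge1}\int_{u_h}^{u_{h-1}}|\Phi(u_h)-\Phi(u)|\,du\le\omega_\Phi(1-\theta),
\]
where $\omega_\Phi$ is the modulus of continuity of $\Phi$. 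This bound tends to $0$ as $\theta\uparrow1$. Finally, the extra factor $\theta$ contributes an error of at most $(1-\theta)\|\Phi\|_\infty\to0$, which completes the argument.

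The only step needing care is that the partition is infinite, with infinitely many intervals accumulating at $0$. The standard finite Riemann-sum theorem therefore does not apply verbatim. The bound above avoids this issue, because it works with the full countable tiling and uses uniform continuity directly. An alternative is to truncate at $h\le N$ with $\theta^N$ small and bound the tail using $\|\Phi\|_\infty$, but the direct estimate is cleaner.

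In the application in Lemma~\ref{lem:cnn_trivial_case}, $\tilde\Phi$ is continuous on $[0,1]$ by construction as a continuous extension, so the hypothesis of this lemma is met.
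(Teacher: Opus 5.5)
Your proposal is correct and follows essentially the same route as the paper: both rewrite the sum as $\theta\sum_{h\ge1}|I_h(\theta)|\,\Phi(\theta^h)$ over the tiling $I_h(\theta)=[\theta^h,\theta^{h-1}]$ of $(0,1]$, whose mesh is $1-\theta$, and then bound the discrepancy from $\int_0^1\Phi(u)\,du$ using uniform continuity of $\Phi$. Your explicit treatment of the extra factor $\theta$, and your bound via the modulus of continuity $\omega_\Phi(1-\theta)$, are slightly tidier than the paper's $\varepsilon$--$\delta$ phrasing.
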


\begin{proof}
For \(h\ge 1\), define
\[
I_h(\theta):=[\theta^h,\theta^{h-1}].
\]
Then the intervals \(I_h(\theta)\) partition \((0,1]\), and
\[
|I_h(\theta)|
:=
\theta^{h-1}-\theta^h
=
(1-\theta)\theta^{h-1}.
\]
Hence $
(1-\theta)\theta^h=\theta\,|I_h(\theta)|, $
so that
\[
(1-\theta)\sum_{h=1}^{\infty}\theta^{h}\Phi(\theta^{h})
=
\theta\sum_{h=1}^{\infty}|I_h(\theta)|\,\Phi(\theta^{h}).
\]
It therefore suffices to prove that
\[
\sum_{h=1}^{\infty}|I_h(\theta)|\,\Phi(\theta^{h})
\longrightarrow
\int_0^1 \Phi(u)\,du,
\qquad \theta\uparrow 1.
\]
Since $\Phi$ is continuous on a compact interval, $\Phi$ is uniform continuous. Fix $\varepsilon>0$. There exist $\delta$ such that $|\Phi(x)-\Phi(y)|\leq \varepsilon$ for all $|x-y|<\delta\;. $ Furthermore, there exists a $\theta_0\in [0,1)$ such that, for all $\theta\in [\theta_0,1),$
$$|I_h(\theta)|=\theta^{h-1}(1-\theta)\leq 1\cdot (1-\theta)\leq \delta\;.$$
Therefore,
\begin{align*}
  \Big| \sum_{h=1}^{\infty}|I_h(\theta)|\,\Phi(\theta^{h})
-
\int_0^1 \Phi(u)\,du\Big|=&  \Big| \sum_{h=1}^\infty\int_{I_h(\theta)} \Phi(\theta^h)  \,du - \sum_{h=1}^\infty \int_{I_h(\theta)} \Phi(u)  \,du\Big|\\
\leq&\sum_{h=1}^\infty \int_{I_h(\theta)} |\Phi(\theta^h)-\Phi(u)|\, du\\
\leq & \varepsilon \sum_{h=1}^\infty \int_{I_h(\theta)} \, du= \varepsilon\;.
\end{align*}
In the last inequality, we used the fact that $u\in I_h(\theta)$, that implies $|\theta^h-u|\leq \delta\;.$ Since $\varepsilon>0$ is arbitrary, this concludes the proof. 
\end{proof}

\begin{lemma}
\label{lem:cov_relu}
    Let \(\begin{pmatrix}
        U \\
        V
    \end{pmatrix} \sim \mathcal{N}\left(\begin{pmatrix}
        0 \\
        0
    \end{pmatrix} , \begin{pmatrix}
        1 & \rho \\
        \rho & 1\\
    \end{pmatrix}\right)\) with correlation \(\rho\in[-1,1]\). Then,
    \[
\Cov(\mathrm{ReLU}(U),\mathrm{ReLU}(V))
=
\frac{1}{2\pi}
\left(
\sqrt{1-\rho^2}
+
\rho\bigl(\pi-\arccos(\rho)\bigr)
-1
\right).
\]
\end{lemma}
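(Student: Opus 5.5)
The plan is to compute $\E[\mathrm{ReLU}(U)\mathrm{ReLU}(V)]$ and $\E[\mathrm{ReLU}(U)]\E[\mathrm{ReLU}(V)]$ separately. The second factor is elementary: $\E[\mathrm{ReLU}(U)]=\E[U\mathbf 1(U>0)]=\frac{1}{\sqrt{2\pi}}$, so the product of means is $\frac{1}{2\pi}$. This accounts for the term $-1$ inside the bracket.

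It therefore remains to show
\[
\E[\mathrm{ReLU}(U)\mathrm{ReLU}(V)]=\frac{1}{2\pi}\Big(\sqrt{1-\rho^2}+\rho\big(\pi-\arccos\rho\big)\Big),
\]
which is the degree-one arc-cosine kernel of Cho and Saul. I would first treat $\rho\in(-1,1)$. Write $V=\rho U+\sqrt{1-\rho^2}\,W$ with $W$ standard normal and independent of $U$. Then pass to polar coordinates $(U,W)=(r\cos\phi,r\sin\phi)$ and set $\alpha:=\arccos\rho$, so that
\[
V=r\big(\cos\alpha\cos\phi+\sin\alpha\sin\phi\big)=r\cos(\phi-\alpha).
\]
The integrand becomes $r^2\cos\phi\cos(\phi-\alpha)$, restricted to the angular set where both cosines are positive. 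For $\alpha\in(0,\pi)$ this set is the arc $\phi\in(\alpha-\pi/2,\pi/2)$, of length $\pi-\alpha$. The radial part gives $\int_0^\infty r^3 e^{-r^2/2}\,dr/(2\pi)=\frac{2}{2\pi}=\frac{1}{\pi}$. For the angular part, use $\cos\phi\cos(\phi-\alpha)=\frac12\big(\cos\alpha+\cos(2\phi-\alpha)\big)$. Integrating over the arc gives
\[
\tfrac12\big((\pi-\alpha)\cos\alpha+\sin\alpha\big).
\]
Multiplying by $\frac1\pi$ yields $\frac{1}{2\pi}\big(\sin\alpha+(\pi-\alpha)\cos\alpha\big)$. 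Substituting $\sin\alpha=\sqrt{1-\rho^2}$ and $\cos\alpha=\rho$ gives the claimed formula.

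The boundary cases $\rho=\pm1$ follow either by continuity, since both sides are continuous in $\rho$ by dominated convergence with the $L^2$ bound $|\mathrm{ReLU}(x)|\le|x|$, or by direct check. For $\rho=1$ we have $\E[\mathrm{ReLU}(U)^2]=\frac12$, and the formula gives $\frac{1}{2\pi}(\pi-1)+\frac{1}{2\pi}=\frac12$. For $\rho=-1$ we have $\mathrm{ReLU}(U)\mathrm{ReLU}(-U)=0$, so the covariance is $-\frac{1}{2\pi}$, matching $\frac{1}{2\pi}(0-(\pi-\pi)-1)$.

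The main obstacle is purely bookkeeping in the angular integral. One has to identify the correct arc where both cosines are positive, and check that it has length $\pi-\alpha$ for every $\alpha\in(0,\pi)$ rather than splitting into cases. Evaluating the $\cos(2\phi-\alpha)$ term at the endpoints $\phi=\pi/2$ and $\phi=\alpha-\pi/2$ gives $\frac14\big(\sin\alpha-\sin(-\alpha)\big)=\frac12\sin\alpha$, which confirms the $\sin\alpha$ contribution.
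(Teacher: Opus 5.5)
Your proof is correct and takes the same route as the paper: write $V=\rho U+\sqrt{1-\rho^2}\,W$, identify $\mathbb{E}[\mathrm{ReLU}(U)\mathrm{ReLU}(V)]$ with the degree-one arc-cosine kernel, and subtract the product of means $\frac{1}{2\pi}$. The only difference is that you derive $\frac{1}{2\pi}\bigl(\sin\alpha+(\pi-\alpha)\cos\alpha\bigr)$ directly in polar coordinates instead of citing Equations (1), (3) and (6) of Cho and Saul; your computation is correct, including the arc of length $\pi-\alpha$ and the $\frac12\sin\alpha$ endpoint term, so your argument is self-contained (and it extends to $\alpha\in\{0,\pi\}$, making the separate boundary check optional).
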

\begin{proof}

The Gaussian random vector $(U,V)^\top$ has the same distribution as $$(U',V')^\top:=\left(U, \rho U +\sqrt{1-\rho^2} Z\right)^\top$$ for $Z\sim \mathcal{N}(0,1)$ independent of $U$, that is, $\mathbf{Z}:=(U, Z)^\top \sim \mathcal{N}\left(\mathbf{0}, \mathbf{I}_2\right)\;.$

 Hence, it is enough to derive an explicit expression for
\[
\mathbb E[\mathrm{ReLU}(U')\mathrm{ReLU}(V')].
\]
Let \(\Theta:\mathbb{R}\to\{0,1\}\) denote the Heaviside function $
\Theta(x):=\mathbf{1}_{\{x>0\}}.$
Since \(\mathrm{ReLU}(x)=x\Theta(x)\), Equation~(1) in \citet{NIPS2009_5751ec3e} with \(n=1\),  $\mathbf{x}:=(1,0)^\top$ and $\mathbf{y}:= (\rho, \sqrt{1-\rho^2})$ gives
\[
k_1(\mathbf{x},\mathbf{y})
=
2\int_{\mathbb{R}^2}\frac{1}{2\pi}e^{-\|\mathbf{z}\|^2/2}
\,\Theta(\mathbf{x}^\top\mathbf{z})\Theta(\mathbf{y}^\top\mathbf{z})
(\mathbf{x}^\top\mathbf{z})(\mathbf{y}^\top\mathbf{z})\,d\mathbf{z}.
\]
Since \(\mathbf{Z}\sim \mathcal{N}(\mathbf0,\mathbf{I}_2)\), this can be written as
\[
k_1(\mathbf{x},\mathbf{y})
=
2\,\mathbb{E}\Big[
\Theta(\mathbf{x}^\top\mathbf{Z})\Theta(\mathbf{y}^\top\mathbf{Z})
(\mathbf{x}^\top\mathbf{Z})(\mathbf{y}^\top\mathbf{Z})
\Big].
\]
Using \(U'=\mathbf{x}^\top\mathbf{Z}\) and \(V'=\mathbf{y}^\top\mathbf{Z}\), we obtain $
k_1(\mathbf{x},\mathbf{y})
=
2\,\mathbb{E}\big[\mathrm{ReLU}(U')\mathrm{ReLU}(V')\big].$
Therefore,
\[
\mathbb{E}\big[\mathrm{ReLU}(U)\mathrm{ReLU}(V)\big]
=
\mathbb{E}\big[\mathrm{ReLU}(U')\mathrm{ReLU}(V')\big]
=
\frac12\,k_1(\mathbf{x},\mathbf{y}).
\]
Moreover, Equations~(3) and~(6) in \citet{NIPS2009_5751ec3e} show that
\[
k_1(\mathbf{x},\mathbf{y})
=
\frac{1}{\pi}\|\mathbf{x}\|\,\|\mathbf{y}\|\,J_1(\theta),
\qquad
J_1(\theta)=\sin\theta+(\pi-\theta)\cos\theta,
\]
where \(\theta\) is the angle between \(\mathbf{x}\) and \(\mathbf{y}\). Since \(\|\mathbf{x}\|=\|\mathbf{y}\|=1\) and \(\mathbf{x}^\top\mathbf{y}=\rho\), it follows that
\[
\cos\theta=\rho,
\qquad
\theta=\arccos(\rho),
\qquad
\sin\theta=\sqrt{1-\rho^2}.
\]
Hence
\[
k_1(\mathbf{x},\mathbf{y})
=
\frac{1}{\pi}
\left(
\sqrt{1-\rho^2}
+
\rho\bigl(\pi-\arccos(\rho)\bigr)
\right),
\]
and therefore
\[
\mathbb{E}\big[\mathrm{ReLU}(U)\mathrm{ReLU}(V)\big]
=
\frac{1}{2\pi}
\left(
\sqrt{1-\rho^2}
+
\rho\bigl(\pi-\arccos(\rho)\bigr)
\right).
\]
Finally, since \(U, V\sim \mathcal{N}(0,1)\),
\[
\mathbb E[\mathrm{ReLU}(V)]=\mathbb E[\mathrm{ReLU}(U)]
=
\int_0^\infty u\,\frac{1}{\sqrt{2\pi}}e^{-u^2/2}\,du
=
\frac{1}{\sqrt{2\pi}},
\]
and we obtain
\begin{align*}
\Cov(\mathrm{ReLU}(U),\mathrm{ReLU}(V))
&=
\mathbb{E}\big[\mathrm{ReLU}(U)\mathrm{ReLU}(V)\big]
-
\mathbb E[\mathrm{ReLU}(U)]\mathbb E[\mathrm{ReLU}(V)]\\
&=
\frac{1}{2\pi}
\left(
\sqrt{1-\rho^2}
+
\rho\bigl(\pi-\arccos(\rho)\bigr)
\right)
-\frac{1}{2\pi}.
\end{align*}
\end{proof}

\end{document}